\documentclass[12pt]{article}
\usepackage{graphicx,xcolor,amsmath,amssymb,amsthm,hyperref,soul,cleveref} 
\usepackage{comment}
\usepackage{circuitikz}
\usepackage{thmtools}
\usepackage{thm-restate}
\usepackage[margin=2.5cm]{geometry}
\usepackage{setspace}
\usepackage{authblk}

\title{\textbf{A Linear Temporal Logic of Frequencies on Series of Events}}

\author[1]{Melissa Antonelli}
\author[2]{Leonardo Ceragioli}
\author[3]{Alessandro Buda}
\author[4]{Giuseppe Primiero}

\affil[1,2]{These authors contributed equally to this work.\vspace{1em}}
 
\affil[1]{Universit\"at T\"ubingen, Doblerstraße 33, T\"ubingen, 72074, Germany}
\affil[2,4]{University of Milan, Via Festa del Perdono 7, Milano, 20122, Italy}
\affil[3]{University School for Advanced Studies IUSS Pavia, Piazza della Vittoria n. 15, Pavia, 27100, Italy\vspace{1em}}
 
\affil[1]{Correspondence: melissa.antonelli@uni-tuebingen.de}
\affil[2]{Correspondence: leonardo.ceragioli@unimi.it}
\affil[3]{Correspondence: alessandro.buda@iusspavia.it}
\affil[4]{Correspondence: giuseppe.primiero@unimi.it}

\date{}

\newtheorem{prop}{Proposition}[section]
\newtheorem{theorem}{Theorem}[section]
\newtheorem{lemma}{Lemma}[section]

\theoremstyle{definition}
\newtheorem{notation}{Notation}[section]
\newtheorem{definition}{Definition}[section]
\newtheorem{ex}{Example}[section]

\newcommand{\PL}{\mathsf{LTLF}}
\newcommand{\Pg}{F_\Omega}
\newcommand{\Rexp}{R_e}

\newcommand{\str}{\mathbf{M}}
\newcommand{\model}{\mathbb{M}}
\newcommand{\vseg}{\mathbf{a}}
\newcommand{\vNseg}{a}

\newcommand{\Nat}{\mathbb{N}}
\newcommand{\at}{p}
\newcommand{\fOne}{\phi}
\newcommand{\fTwo}{\psi}
\newcommand{\fThree}{\gamma}

\newcommand{\At}{\mathfrak{P}}

\newcommand{\wbox}{\square}
\newcommand{\bstar}{\star}
\newcommand{\bbox}{\blacksquare}
\newcommand{\wnext}{\triangleright}

\newcommand{\sat}{\circ}

\newcommand{\head}{\textsc{Head}}
\newcommand{\tail}{\textsc{Tail}}

\newcommand{\fr}{\mu_{\Phi}}
\newcommand{\pr}{\mu}

\begin{document}

\maketitle

\begin{abstract}
    This paper introduces $\PL$, a temporal logic designed to express the frequency properties of event series in a natural but rigorous manner. By introducing novel, measure-sensitive operators, $\PL$ allows for the evaluation of frequencies and the prediction of future occurrences, thus providing a formal framework to monitor and control quantitative systems, such as machine learning classifiers. The core novelty lies in the introduction of original modal quantifiers associated with a standard Kripke-style semantics. These quantifiers enable the explicit formalization of event series properties and the investigation of the relationship between actual observed frequencies and ideal distributions within a single logical structure. This framework bridges the gap between formal logical reasoning and empirical observation.
\end{abstract}


\section{Introduction}

The need to monitor and control complex systems at runtime has become increasingly central in modern computational contexts, particularly when outputs must align with specific requirements. An example is the evaluation of machine learning classifiers in relation to algorithmic fairness. In these contexts, inference-only evaluation and correction is needed to minimize useless computation and anticipate unacceptable outputs.
To this aim, the design of formal systems to guarantee  constraints satisfaction is a major and primary task. 

In this paper, we present a new framework to model and reason in a natural and purely logical way about finite series of events characterized by a frequency distribution of expected outputs. We offer a modal language and associated semantics, where new measure-sensitive operators are defined to express the frequency of an event by counting the number of times it happens at different moments of the series, and to predict likely future occurrences. Such evaluation is then compared with a desired or ideal 
distribution, a feature that constitutes a core novelty of our system. The mathematical objects analyzed are simple and the formalized operators are intuitive, although the results obtained are not trivially verifiable in logical form.

While the literature contains various extensions of temporal and modal logics with counting or probabilistic features, such as Counting Linear Temporal Logic (LTL), Counting Computation Tree Logic (CTL) or Probabilistic Computation Tree Logic PCTL~\cite{Laroussinie2010,finkbeiner2014counting,10.1007/978-3-642-12032-9_15,hansson1994logic}, we are not aware of any existing temporal logic that explicitly formalizes the relationship between an actual observed frequency and a target theoretical distribution within a single evaluation structure. 
%
Overall, this makes our logic an 
innovative tool for addressing phenomena 
that are quite common today. For example, machine learning classifiers can be interpreted as probabilistic systems that produce different probabilistic decisions based on the input they receive: in this context, the task of checking whether these results are in line with the expectation of the user,
or rather correction is needed, 
can be modeled within our system.
The logic $\PL$ (Linear Time Temporal Logic of Frequencies) introduced in this paper is  comprehensive enough to encapsulate crucial aspects of probability and counting logics already defined in the literature \cite{ADLP,APAL24}, but unlike many existing quantitative logics that prioritize restricted expressivity to ensure computational tractability, we employ a small set of highly expressive operators to capture complex distributional properties. Furthermore, our logic adopts a backward-looking approach, interpreting temporal operators over finite series of events by looking at the past, which diverges from the standard forward-looking semantics of most temporal logics.

This work continues in the steps of a broader research project dedicated to the logical formalization of experimental series \cite{DasaroGencoPrimiero,DBLP:journals/ijar/KubyshkinaP24,ceragioli_primiero}. While previous logical frameworks in this direction have considered series of events as complete and static results, in this paper we focus on a dynamic and temporally characterized approach, for evaluation at any point within the series. The current proposal is limited to a semantics for a linear temporal interpretation, while a version based on tree structures is currently in development to account for more complex branching scenarios. 
A proof system and meta-theoretical results are left for future investigation.

\paragraph{Structure of the paper.}
The remainder of this paper is organized as follows. In Section \ref{sec:Preliminaries}, we introduce the standard preliminary concepts of statistics and probability theory that form the mathematical background of our work. Section \ref{sec:informal} presents the core problem through an informal example, providing an intuitive definition of the new modal operators.
In Section \ref{sec:logic}, we formally establish the syntax and semantics of the Linear Temporal Logic of Frequencies ($\PL$).
Section \ref{sec:properties} shows how to use $\PL$ to calculate key properties of event series, including the frequency of a specific outcome, the compatibility of an observed frequency with an ideal reference frequency, and the probability of a series terminating within such an ideal range.
Finally, Section~\ref{sec:related} provides a detailed overview of related works, comparing our framework to existing counting and probabilistic logics, while future research directions are presented in Section~\ref{sec:conclusion}.

\section{Preliminaries}\label{sec:Preliminaries}

For the sake of self-containment, in this Section we recall standard notions in statistics and probability theory. For further details, see~\cite{Billingsley}.

An \emph{experiment} or \emph{trial} is the mathematical model of a procedure that can be repeated. 
%
Each experiment is associated with a set of possible values taken as its single outcome; e.g.~for the toss of a coin, the set of possible values or single outcomes is defined by the set $\{\head, \tail\}$, while for a dice it is $\{1,\dots, 6\}$.
The repeated execution of an experiment is called an observation. It is associated with a set of possible multiple outcomes, which depends on the possible single outcomes and on the number of times it is executed.
%
%
As standard, we use $\omega$ to denote an outcome or point of an observation, i.e.~the result of a single execution of an experiment, and $\Omega$ to denote the sample space, i.e.~the set of all possible outcomes $\omega$'s associated with this observation.
Recall that outcomes must be \emph{mutually exclusive}, i.e.~if $\omega_i$ occurs, for any $j\neq i$, no other $\omega_j$ can take place, and \emph{collectively exhaustive}, i.e.~one of the outcomes will result in every execution.

\begin{ex}
\label{ex:outcome}
\begin{sloppypar}
    Let us consider the observation of $n$ coin tosses.
    Each possible outcome is a sequence of $n$ single outcomes resulting from coin tosses, i.e.~$\omega=\omega(1)\omega(2)\dots \omega(n)$, where $\omega(i)\in \{\head,\tail\}$ for any $i\in \{1,\dots, n\}$, while $\Omega$ constitutes the set of all $2^n$ possible outcomes. 
\end{sloppypar}
\end{ex}
\noindent
Subsets of $\Omega$ are called events, $E_1, E_2, \dots$.
An event consisting of only one outcome is called an \emph{atomic} or \emph{elementary event}, while
an event that is made up of more than one possible outcome is called a \emph{compound event}. 
Two events, say $E_1$ and $E_2$ are \emph{disjoint} or \emph{mutually exclusive} when they cannot happen at the same time, i.e.~$E_1 \cap E_2=\emptyset$.

To define measurable sets, we rely on the standard notion of $\sigma$-algebra.

\begin{definition}[Algebra]
    A class $\mathcal{A}$ of subsets of $\Omega$ is called an \emph{algebra} if it contains $\Omega$ itself and is closed under complementation and finite unions:
    \begin{itemize}
        \itemsep0em
       \item[i.] $\Omega \in \mathcal{A}$
       \item[ii.] if $E\in \mathcal{A}$, then $\overline{E} \in \mathcal{A}$, where $\overline{E}$ denotes the complement of $E$
        \item[iii.] if $E_1, E_2 \in \mathcal{A}$, then $E_1 \cup E_2 \in \mathcal{A}$.
    \end{itemize}
    An algebra is closed under finite set-theoretic operations, while a $\sigma$-algebra is closed also under countable ones.
\end{definition}

\noindent
The first condition simply ensures that $\mathcal{A}$ is nonempty.
By De Morgan's law (and (ii.)), condition (iii.) could be replaced by the alternative condition (iii.') saying that if $E_1, E_2 \in \mathcal{A}$, then $E_1 \cap E_2 \in \mathcal{A}$.
In probability theory, we are particularly interested in classes that, given $\mathcal{A}$, contain $\mathcal{A}$ itself and are (minimal) $\sigma$-algebra.

\begin{ex}
    The largest $\sigma$-algebra on $\Omega$ is the power class $2^\Omega$, while the smallest one is $\{\emptyset, \Omega\}$
\end{ex}

\noindent
The $\sigma$-algebra generated by an algebra is the smallest $\sigma$-algebra containing it.

\begin{definition}[Generated $\sigma$-algebra]
   The \emph{$\sigma$-algebra generated by $\mathcal{A}$, $\sigma(\mathcal{A})$}, is the intersection of all (and only) the $\sigma$-field containing $\mathcal{A}$:
    \begin{itemize}
       \itemsep0em
       \item[i.] $\mathcal{A} \subset \sigma(\mathcal{A})$
      \item[ii.] $\sigma(\mathcal{A})$ is a $\sigma$-algebra
      \item[iii.] if $\mathcal{A}\subset \mathcal{A}'$ and $\mathcal{A}'$ is a $\sigma$-algebra, then $\sigma(\mathcal{A})\subset \mathcal{A}'$.
    \end{itemize}
\end{definition}

\noindent
A \emph{measurable space} is a pair $(\Omega, \mathcal{A})$, where $\mathcal{A}$ is a $\sigma$-algebra over $\Omega$. Finally, a \emph{probability measure}, $\pr(\cdot)$, on a $\sigma$-algebra $\mathcal{A}$ is a function associating any event $E\in \mathcal{A}$ with a number $\pr(E)$ so that:
\begin{itemize}
\itemsep0em
    \item[i.] for each $E\in \mathcal{A}$, $0\leq \pr(E) \leq 1$
    \item[ii.] $\pr(\emptyset)=0$ and $\pr(\Omega)=1$
    \item[iii.] if $E_1, E_2, \dots \in \mathcal{A}$ is a sequence of disjoint events, then
    $$
    \pr\bigg(\bigcup^{\infty}_{k=1} E_k \bigg) = \sum^{\infty}_{k=1} \pr(E_k).
    $$
\end{itemize}

\noindent
Two events are \emph{(stochastically) independent} when the occurrence of one does not affect the probability for the other to occur.
In particular, given two disjoint events, $E_1$ and $E_2$, $\pr(E_1\cup E_2)=\pr(E_1)+\pr(E_2)$, while for two independent events, $E_1'$ and $E_2'$, $\pr(E_1'\cap E_2')=\pr(E_1') \cdot \pr(E_2')$.
Then, a probability space is a formal model for random process obtained by endowing a measurable space with a corresponding probability measure.

\begin{definition}[Probability space]
    A \emph{probability space} $(\Omega, \mathcal{A}, \pr)$ is made of:
    \begin{itemize}
        \itemsep0em
        \item a \emph{sample space} $\Omega$, which is the set of all possible outcomes
        \item a $\sigma$-algebra $\mathcal{A}$, which is the set of events
        \item a \emph{probability measure} $\pr: \mathcal{A} \to [0,1]$, which assigns to each event in $\mathcal{A}$ a probability, i.e.~a number between 0 and 1 satisfying the so-called Kolmogorov axioms.
    \end{itemize}
\end{definition}

Since our logic is meant to describe actual experiments and empirical probability, we focus on the notions of frequency and time series.

\begin{definition}[Frequency and relative frequency]
    Given a series of $n$ executions of an experiment, the \emph{absolute frequency} of a single outcome or value is the number of times the observation results in this given value.
    The \emph{relative frequency} or \emph{empirical probability} of a value is the ratio of the number of times in which the specified value occurs to the total number of trials.
\end{definition}

\begin{notation}
    Given an experiment with single outcomes in $\{p_1, \dots, p_m\}$ and $n$ observations of this experiment, for any $i\in \{1,\dots,m\},$ we use $\fr(p_i) \in \mathbb{Q}_{[0,1]}$ to denote the relative frequency of the value $p_i$.
\end{notation}

\begin{ex}
    Let's assume we flip a coin 100 times and we obtain head 50 times. Then the frequency of obtaining head in our experiment is 50 and its relative frequency is $\frac{1}{2}$, i.e.~$\fr(\head)=\frac{1}{2}$.
\end{ex}

\begin{definition}[Probability of Frequencies]
    Given a series of $n$ executions of an experiment, the probability of obtaining a particular (absolute or relative) frequency is given by the ratio of the number of possible series compatible with that frequency to the number of all possible series. It can be seen as the \textit{probability mass function} for the occurrence of compound events with the same frequency distribution.
\end{definition}

\section{Reasoning about series of events}\label{sec:informal}

In this Section, we informally introduce the core ideas to introduce a logic which naturally expresses properties of \emph{finite} series of events and to reason about the frequency of some given outputs.
To make this concrete, we use a guiding toy example throughout the paper.
In Section~\ref{sec:guide}, we introduce this example in informal terms, while in Section~\ref{sec:keyOp}, we present the operators on which our logic is based.

\subsection{A guiding example}\label{sec:guide}

We will deal with observations associated with a pre-determined number of events.
In our framework, it is natural to use atomic propositions to represent \emph{single} outcomes of such series.
We use $\fr(E, \at)$ to denote the relative frequency of $\at$ associated with the observation $E$.
For example, given a series of trials, we want to formally study the probability that the frequency of the outcomes mirrors some expected or desired frequency. 
We will investigate the properties of observations, both assuming that all information about the series is given, and (more interestingly) under partial knowledge in which we know the results only of a subset of all experiments in the series; typically, the values of the first $m$ experiments.
In the latter scenario, we will also investigate the probability of the outcomes of the next experiments under the hypothesis that the frequency of outputs of all the experiments is some  ``ideal frequency''. We now formulate some problems to guide our analysis in the rest of this paper.

Assume that our observation consists of $n$ coin tosses with $\Omega=\{\head\ (H), \tail\ (T)\}$. 
We start from a simple checking problem on frequencies:
\begin{restatable}{thm}{probwhite}\label{probwhite}
 Given a series of $m\leq n$ executions of the experiment, e.g.~the mentioned coin tosses, can we check, at any given time, whether the frequency of any given outcome is greater than or equal to $q$?
\end{restatable}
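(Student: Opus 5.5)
This problem asks us to decide, at an arbitrary stage $m\leq n$ of the series, whether $\fr(\at)\geq q$ holds on the observed prefix. I would show that this is decidable by a simple counting argument, and then show that the check can be expressed in the logic. I would represent the observed series as a finite sequence of states $s_1,\dots,s_m$. Each state $s_i$ is labelled by exactly one atomic proposition $\at_j\in\{\at_1,\dots,\at_k\}$, reflecting that outcomes are mutually exclusive and collectively exhaustive. For the coin example this means $\head$ or $\tail$.

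The frequency at time $m$ is then the rational number
\[
\fr(\at)=\frac{|\{i\leq m : s_i\models \at\}|}{m}.
\]
The first step is to observe that the numerator is computed by a single backward scan over the finite prefix. Such a scan always terminates. Comparing the resulting rational with $q\in\mathbb{Q}_{[0,1]}$ is then a decidable comparison of integers, namely checking whether $\mathrm{count}\geq q\cdot m$. This already gives an affirmative answer for every $m$ and every outcome, by iterating over the finitely many outcome values.

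The second, more interesting step is to make the answer logical rather than algorithmic. I would introduce a backward-looking counting modality, written informally $\wbox_{\geq q}\at$, evaluated at position $m$. It would be true exactly when the proportion of past positions $i\leq m$ that satisfy $\at$ is at least $q$. The key lemma to prove is a correctness statement: $s_m\models\wbox_{\geq q}\at$ if and only if $\fr(\at)\geq q$ over the prefix $s_1\dots s_m$. I would prove it by induction on $m$. The base case $m=1$ has frequency $0$ or $1$. In the inductive step, the count either increments or stays the same according to whether $s_{m+1}\models\at$, and the inductive hypothesis carries over because the count up to $m$ is a subterm of the count up to $m+1$.

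The main obstacle is this inductive step. Frequencies are ratios, so the threshold property itself is not monotone in $m$: a prefix can satisfy $\fr\geq q$ while its extension fails, or the reverse. The induction therefore cannot be run on the truth value of $\wbox_{\geq q}\at$. I would instead run it on the absolute count, strengthening the hypothesis to state that the semantics tracks the exact count $c_m$. The threshold comparison $c_m\geq qm$ is then performed only at the final step. "Any given time" is covered because the strengthened claim holds uniformly for all $m\leq n$.
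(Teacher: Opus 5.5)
Your proposal is correct and follows the paper's route: the check at time $m$ is expressed by the formula $\wbox_q\at$ evaluated at $w_m$, and Theorem~\ref{prop:frequencyWBox} is exactly your key lemma, $\model, w_m \vDash_{\vseg} \wbox_q \fOne$ iff $\mu_m(\fOne)\geq q$, stated there for arbitrary (not only atomic) $\fOne$. You can drop the induction on $m$: the truth clause for $\wbox_q$ directly requires the ratio $|\{w_l : l\leq m \wedge \model, w_l\vDash_{\vseg}\fOne\}|/m$ at each world to be at least $q$, with no recursion, so the lemma follows by unfolding the definition (the paper's ``straightforward application'' of Def.~\ref{def:semantics}) and the non-monotonicity ``obstacle'' you identify never arises.
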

\begin{ex}\label{exwhite}
Consider an observation with results $HHT$.
Clearly, at the third throw it holds that the relative frequency of $H$ is $\frac{2}{3}$.
\end{ex}
\noindent
A related problem is the following:
\begin{restatable}{thm}{probblack}\label{probblack}
 Given a target global frequency for the outcomes of an experiment (e.g.~coin tosses), can we check which partial frequencies of outcomes are possible at any given point in time?
\end{restatable}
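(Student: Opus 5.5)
The question posed in the statement has an affirmative answer, and I would establish it through an explicit combinatorial characterisation of the admissible partial frequencies. Fix a series of length $n$ over single outcomes $\{p_1,\dots,p_m\}$ and a target global relative frequency $\fr(p_i)=q_i$ for each value. Write $N_i = q_i n$ for the target absolute frequency; this must be an integer with $\sum_i N_i = n$, since otherwise no series realises the target and the answer is trivially ``none''. For a time point $k\le n$, let $c_i(k)$ be the absolute frequency of $p_i$ among the first $k$ executions. The claim is that a vector $(c_1,\dots,c_m)$ of partial absolute frequencies at time $k$ is \emph{possible}, meaning it extends to a full series with the target frequencies, if and only if
\[
\sum_{i} c_i = k \qquad\text{and}\qquad 0\le c_i \le N_i \ \text{ and } \ N_i - c_i \le n-k \ \text{ for every } i .
\]

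For necessity, I would observe that counts of a prefix cannot exceed the global counts, and that the remaining $n-k$ trials must supply the missing $N_i-c_i$ occurrences. For sufficiency, I would build an explicit completion: list the missing occurrences, namely $N_i-c_i$ copies of $p_i$ for each $i$, in any order after the prefix. Their total is $\sum_i(N_i-c_i)=n-k$, so they fill the tail exactly, and the resulting series has global counts $N_i$. Since outcomes are mutually exclusive and collectively exhaustive, each position receives exactly one value, so this is a legitimate observation in the sense of Section~\ref{sec:Preliminaries}. I note that the bound $N_i-c_i\le n-k$ is already implied by the sum condition together with $c_j\le N_j$ for $j\ne i$; I would remark on this but keep the stated form for readability.

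Finally, I would address the algorithmic meaning of ``check'', since the problem is phrased as a decision question. Given $k$, checking the conditions is linear in $m$, and at each time point the set of admissible vectors is a finite, explicitly enumerable box intersected with a hyperplane. In the binary coin case this reduces to $\max(0,N_H-(n-k))\le c_H\le \min(k,N_H)$. I would also compute the number of completions, $\binom{n-k}{N_1-c_1,\dots,N_m-c_m}$, because it anticipates the probability-of-frequencies notion defined in the Preliminaries, which the paper will presumably need later.

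The main obstacle is not mathematical but definitional: the statement is informal, so the real work is fixing what ``possible at a given point'' means. I would choose the prefix-extension reading above as the precise formalisation, and check it against the $HHT$ example: with $n=3$ and target $\fr(H)=\frac23$, the admissible values of $c_H$ at $k=1$ are $0$ and $1$.
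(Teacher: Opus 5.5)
Your characterisation is correct. On Ex.~\ref{exblack} ($n=4$, $N_H=2$, $k=3$) it returns $c_H\in\{1,2\}$, i.e.\ the partial frequencies $\frac13$ and $\frac23$ listed there.

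It is, however, a different route from the paper's. Your arithmetic conditions $c_i\le N_i$ and $N_i-c_i\le n-k$ are essentially what the paper uses for the \emph{next} problem, in Th.~\ref{prop:cons}. There $l$ and $r$ are read off $\sat_{=\frac{l}{n}}\at$ and $\bstar_{=\frac{r}{n}}\at$ for the observed prefix.

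For the present problem the paper does not describe the admissible set arithmetically. It argues expressibility in $\PL$ instead. At $w_m$, $\bbox_q\fOne$ holds iff some assignment in $\Pg$ makes $\fOne$ true in at least $q\cdot m$ of $w_1,\dots,w_m$. The restrictions of assignments in $\Pg$ to $w_1,\dots,w_m$ are exactly your extendable prefixes. So, for an atom, this clause is your criterion projected onto one outcome and weakened to a lower bound. The example is then settled by noting that $\bbox_{\frac13}\at_{\head}$ and $\bbox_{\frac23}\at_{\head}$ hold at $w_3$.

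The paper's real effort goes into the joint case. A conjunction $\bbox_q\fOne\land\bbox_{q'}\fTwo$ may be witnessed by different assignments (Ex.~\ref{ex:BBboxCOmpatibi}). So it uses the nested formula of Th.~\ref{theorem:frequencyMoreFormulas}, $\sat_{=q'}(\fOne\lor\neg\fOne)\land\bbox_{\frac1m}(\sat_{=q'}(\fOne\lor\neg\fOne)\land\bigwedge_j\wbox_{q_j}\fOne_j)$. Its $\sat_{=q'}$ conjuncts pin the inner evaluation to $w_m$, so that a single witness realises all the frequencies. Your vector condition is the semantic content of that formula with $\wbox_{=q_j}$ in place of $\wbox_{q_j}$. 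Because you work with whole vectors from the start, the conjunction pitfall never arises for you.

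What your approach gives is an explicit, decidable description of the admissible set, plus the completion count that reappears as the numerator in Th.~\ref{conj:probabilityFrequence}. Your interval in the coin case also shows why two black-box indices suffice for a single outcome. The largest $q$ with $\bbox_q\at$ and the largest $q'$ with $\bbox_{q'}\neg\at$ together determine every admissible partial frequency of that outcome.

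What the paper's approach gives is that the check is carried out by formulas of the object language, which is the purpose of Section~\ref{sec:properties}. If ``can we check'' is read as the paper intends, namely as ``can $\PL$ express it'', your argument supplies the intended semantics. It still needs the step identifying that semantics with the truth clauses of $\bbox$ and of the nested formula above.
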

\begin{ex}\label{exblack}
Consider four coin tosses such that the frequency of obtaining $H$ came out to be $\frac{1}{2}$. 
Clearly, at the third toss, the observed (partial) frequency of $H$ could be either $\frac{1}{3}$ or $\frac{2}{3}$.
\end{ex}

%
\noindent
A similar but different decision problem about frequencies is the following:

\begin{restatable}{thm}{probOne}\label{prob1}
 Given a series of $m\leq n$ executions of an experiment, e.g.~of coin tosses, can we check whether it is compatible with a given frequency?
\end{restatable}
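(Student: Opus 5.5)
This problem asks, given the outcomes of the first $m\leq n$ executions, whether some completion of the series to length $n$ has global relative frequency of a chosen outcome $\at$ equal to a target value $q$. I would answer it affirmatively by reducing it to simple counting, and then phrasing that criterion so the logic of Section~\ref{sec:logic} can later express it.

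First, I fix the data. The observed prefix $\omega(1)\dots\omega(m)$ determines the absolute count $k=|\{i\leq m : \omega(i)=\at\}|$. Since the number of trials is fixed at $n$, the target relative frequency $q$ corresponds to a target absolute count $qn$. If $qn\notin\Nat$, no series of length $n$ has frequency $q$, and the answer is trivially negative.

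Next, I prove the compatibility criterion. The prefix is compatible with $q$ if and only if
\[
k \leq qn \leq k+(n-m).
\]
For necessity, any completion adds at least $0$ and at most $n-m$ occurrences of $\at$ to the $k$ already observed. For sufficiency, let $j=qn-k$, which lies in $[0,n-m]$. I would assign $\at$ to exactly $j$ of the remaining $n-m$ positions and some other outcome to the rest. Such an outcome exists because $\Omega$ contains at least two values, and outcomes are mutually exclusive and exhaustive. Every value of $j$ in the range is therefore realized, which gives the equivalence.

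Finally, I address decidability and the link to the other problems. The criterion is a comparison of natural numbers computable from the prefix, so it can be checked at any time $m$, which settles the question. It also specializes correctly: at $m=n$ it reduces to the Problem~\ref{probwhite} check $k/n=q$. It mirrors Problem~\ref{probblack} read backward, since that problem fixes the global frequency and asks which prefix counts $k$ are possible, and the answer is exactly the $k$ satisfying the same inequalities.

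I do not expect the combinatorics to be the main obstacle. The harder step is making the statement precise enough to be a proof obligation. One issue is whether ``compatible'' means exact equality with $q$ or membership in a tolerance interval. In the interval case I would take the union of the criterion over all admissible integer targets, which amounts to requiring $[k,k+n-m]$ to intersect $[q_1 n,q_2 n]$. A second issue is extending the argument to several outcomes simultaneously with a target distribution $(q_1,\dots,q_r)$. There I would require $q_i n\geq k_i$ for every $i$ and $\sum_i(q_i n-k_i)=n-m$. The latter holds automatically when the $q_i$ sum to $1$ and the $k_i$ sum to $m$, so the construction again fills the free positions greedily.
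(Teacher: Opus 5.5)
Your counting argument is correct. Its inequality $k\le qn\le k+(n-m)$ is exactly the numerical side condition of the paper's answer, Theorem~\ref{prop:cons}, with $l=k$ and $r=qn$.

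Your explicit completion, which fills the $n-m$ free positions according to the deficits $q_in-k_i$, supplies the sufficiency argument that the paper leaves as a ``straightforward'' sketch. Your joint treatment also shows that, once all atoms are considered together, only $q_in\ge k_i$ needs checking. So the paper's per-atom bound $r-l\le n-m$ is redundant.

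Note that the paper's notion of compatibility (Definition~\ref{def:compatiUpTo}) asks for a single $\vNseg\in\Pg$ agreeing with $\vseg$ on $w_1,\dots,w_m$. That constrains all atoms of $\At_\Omega$ at once, so it is your multi-outcome version that matches it. The single-outcome version coincides with it only in the two-outcome case, as in Example~\ref{ex1}.

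What is missing is the step you announce but never carry out, and it is the point of this problem in the paper: the check can be performed \emph{inside} $\PL$, by evaluating formulas at $w_m$ under $\vseg$. The paper does this by pairing $\sat$ with $\bstar$:
\begin{itemize}
\item $\model,w_m\vDash_{\vseg}\sat_{=l/n}\at$ holds iff exactly $l$ of $w_1,\dots,w_m$ lie in $\vseg(\at)$. Here $\sat$ is used rather than $\wbox$ because its denominator is $n$, so observed and target counts share a scale.
\item $\model,w_m\vDash_{\vseg}\bstar_{=r/n}\at$ holds iff $\fr(\at)=r/n$, by Theorem~\ref{theorem:AlternativeStar} (via Lemma~\ref{lemma:mu}).
\end{itemize}
Compatibility up to $w_m$ is then equivalent to this: for every $\at\in\At_\Omega$, the unique $l,r$ making these formulas true satisfy $l\le r\le l+(n-m)$.

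The paper also gives a one-directional refutation test stated purely in the language, Theorem~\ref{prop:incons}. If $\sat_q\fOne\wedge\neg\bstar_q\fOne$ or $\wbox_q\fOne\wedge\neg\bbox_q\fOne$ is true at $w_m$, then $\vseg\notin\Pg$.

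Without these formulas, your argument shows the question is decidable by arithmetic on the prefix, but not that $\PL$ expresses it, which is what the paper's answer claims. A small slip in your aside on tolerance intervals: you need an integer in $[k,k+n-m]\cap[q_1n,q_2n]$, not mere non-emptiness of that intersection.
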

\begin{ex}\label{ex1}
Consider four coin tosses and a frequency $\fr(\head)=\frac{1}{2}$: a series of outputs $TT$ for the first two experiments is still compatible with the desired frequency; given a series of outputs $TTT$ for the first three executions, we know that we will not be able to reach the desired frequency.
\end{ex}

    \noindent
Next, we refine our query to a search problem:    
    
    \begin{restatable}{thm}{probTwo}\label{prob2}
If compatible, given a series of $m\leq n$ executions of the experiment, what is the probability of the given frequency to be realised?
    \end{restatable}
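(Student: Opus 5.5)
The statement is a search problem rather than a theorem, so the plan is to answer it constructively. We give an explicit formula for the probability that the target frequency is realised, given the first $m$ outcomes, under the counting notion of probability fixed in the Definition of Probability of Frequencies.

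First, fix the data. Take $n$ total trials with single outcomes in $\{\head,\tail\}$ and a target relative frequency $\fr(\head)=q$, where $qn=k\in\Nat$; if $qn\notin\Nat$, the answer is trivially $0$. Let the observed prefix $\omega(1)\dots\omega(m)$ contain $h$ heads.

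Second, reduce compatibility (Problem~\ref{prob1}) to the arithmetic condition $h\leq k$ and $m-h\leq n-k$. That is, the heads still needed, $k-h$, must lie between $0$ and the number of remaining trials, $n-m$. Example~\ref{ex1} illustrates this: with $n=4$ and $k=2$, the prefix $TT$ gives $h=0$ and $2\le 2$, so it is compatible, while $TTT$ needs two heads in one remaining trial, so it is not.

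Third, compute the probability by conditioning on the prefix. The relevant sample space is the set of $2^{n-m}$ completions of the observed series, each equiprobable under the fair-coin (uniform) reading. The event that the target frequency is realised is the set of completions with exactly $k-h$ heads among the $n-m$ remaining positions. Counting these gives
\begin{equation*}
P = \binom{n-m}{k-h}\, 2^{-(n-m)} .
\end{equation*}
With a single-trial bias $\pi$ instead, the same count gives $\binom{n-m}{k-h}\pi^{k-h}(1-\pi)^{n-m-k+h}$. To justify the counting, use independence of the trials: the probability of the conjunction of the prefix with a completion factorises, so conditioning on the prefix leaves the uniform (or binomial) measure on the completions. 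Example~\ref{ex1} serves as a sanity check: after $TT$, the probability is $\binom{2}{2}/4=1/4$.

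The main obstacle is conceptual rather than computational: deciding which measure the question refers to. The Definition of Probability of Frequencies counts series, which presupposes equiprobable outcomes. An alternative reading is to condition on the ideal distribution itself, i.e.\ to ask about the outcome of the next experiment given that the global frequency is $q$. That reading gives a hypergeometric-style answer, in which the next toss is $\head$ with probability $(k-h)/(n-m)$. I would first commit to the counting reading above, state the independence assumption explicitly, and remark that the hypergeometric variant is the natural answer to the related predictive question.
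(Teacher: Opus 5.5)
Your answer is correct and uses the same counting argument as the paper's Theorems~\ref{conj:probabilityFrequence} and~\ref{conj:probabilityFrequenceGeneral}: take the number of completions of the prefix that realise the target counts, and divide it by $|\At_\Omega|^{n-m}$ in the equiprobable case, or multiply it by $\prod_i P(\at_i)^{r_i-l_i}$ in the independent biased case. The paper differs in two ways: it handles an arbitrary finite $\At_\Omega$ through a product of binomials, which for two atoms reduces to your $\binom{n-m}{k-h}$ because the second factor is $1$; and it reads your $h$ and $k$ off the formulas $\sat_{=\frac{l_i}{n}}\at_i\wedge\bstar_{=\frac{r_i}{n}}\at_i$ holding at $w_m$, so the data of the problem are expressed inside $\PL$.
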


    \begin{ex}\label{ex2}
    Given $2^{4}$ ($=16$) equally probable series of single outputs and only ${4}\choose{2}$ ($=6$) of them satisfying the desired frequency $\Pg(\head)=\frac{1}{2}$, the probability of reaching it before any execution is $\frac{6}{16}=\frac{3}{8}$.\footnote{Notice that, under the assumption that $\Pg(\head)=\Pg(\tail)$, assignments are equiprobable with each other.}
 Similarly, given a series of outputs $TT$ for the first two executions, the probability to end with the desired frequency $\Pg(\head)=\frac{1}{2}$ is $\frac{1}{4}$.
\end{ex}

\noindent
Under the assumption that the given frequency is realized, a natural question concerns the behavior of the series at any next step:

\begin{restatable}{thm}{probThree}\label{prob3}    
Given a series of $n$ experiments for which a given frequency of outputs is realized, what is the probability at $m<n$ to obtain a certain output at $m+i\leq n$?
\end{restatable}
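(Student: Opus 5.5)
This is a problem statement rather than a theorem, so what is needed is a positive answer: an explicit formula for the requested probability, together with a justification. The setup is as follows. The series has length $n$. The ideal frequency is realized, so the total number of occurrences of the outcome $\at$ (say $\head$) is $k = n\cdot\Pg(\at)$. By time $m$ we have observed $j$ occurrences of $\at$, where $j\le k$ and $m-j\le n-k$; otherwise the series is incompatible, in the sense of Problem~\ref{prob1}, and the conditional probability is undefined. I will take the uniform measure on the set of completions of the observed prefix that realize the target frequency. This is the same counting measure used in the definition of probability of frequencies and in Example~\ref{ex2}: all such completions are equiprobable, which holds, for instance, whenever the trials are i.i.d., because every sequence with the same number of $\at$'s then has the same weight.

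The key step is counting. The completions are exactly the ways of placing the remaining $k-j$ occurrences of $\at$ among the remaining $n-m$ positions, so there are $\binom{n-m}{k-j}$ of them. Next, fix a position $m+i$ with $1\le i\le n-m$ and count the completions in which that position carries $\at$. Once $\at$ is placed at position $m+i$, the other $k-j-1$ occurrences are distributed freely among the remaining $n-m-1$ positions, which gives $\binom{n-m-1}{k-j-1}$ completions. Dividing the two counts yields
\[
\frac{\binom{n-m-1}{k-j-1}}{\binom{n-m}{k-j}}=\frac{k-j}{n-m}.
\]
The answer is therefore independent of $i$. This is an exchangeability argument: any permutation of the unobserved positions is a bijection on the set of completions, so every future slot has the same marginal probability, namely the remaining quota $k-j$ divided by the remaining slots $n-m$. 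As a sanity check, in Example~\ref{ex1} with prefix $TT$, $n=4$, $k=2$, the probability that the third toss is $H$ equals $2/2=1$, as it should.

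The main obstacle is not the combinatorics, which is routine hypergeometric counting, but justifying the measure. Two points need care. First, conditioning on the realized global frequency has to give the uniform distribution over compatible completions. I would prove this from independence: under i.i.d. trials, the probability of a full sequence depends only on how many $\at$'s it contains, and conditioning on that number being $k$ makes all remaining sequences equiprobable. Second, the problem speaks of the probability at step $m+i$ ``given'' the series, and this could be read as a joint event, for example that the first $\at$ occurs at step $m+i$. I would state the marginal result above as the main answer. I would then note that joint queries follow from the same method: for a specified pattern of outcomes on $s$ future positions containing $r$ occurrences of $\at$, the probability is $\binom{n-m-s}{k-j-r}\big/\binom{n-m}{k-j}$.
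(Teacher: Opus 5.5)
Your answer is correct, and at $i=1$ your count is the one the paper performs, but the overall route differs.

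The paper answers this problem by \emph{internalising} it. It takes the sought quantity to be $P(\model, w_{m+1}\vDash_{\vNseg}\fOne \mid \vNseg\in\Pg \wedge \vNseg \text{ agrees with } \vseg \text{ on } w_1,\dots,w_m)$ and notes that this is, by definition, the ratio in the semantic clause of $\wnext_q$ (Theorem~\ref{theorem:probabilityNext}). For atoms, Theorem~\ref{theorem:ValueNext=} then evaluates that ratio by dividing products of binomial coefficients. The result is $(\fr(\at_x)\,|W|-|{\Rexp}^{\vseg}_{w_m}(\at_x)|)/(|W|-|{\Rexp}_{w_m}|)$, which is exactly your $(k-j)/(n-m)$.

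What the paper's route buys is an object-language answer, the formula $\wnext_{=q}\at_x$. The same operator also accepts non-atomic and nested modal arguments, as in the Tiresias example, although the closed form no longer holds beyond atoms.

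What your route buys is an arbitrary horizon. Your exchangeability argument (permuting the unobserved slots preserves both membership in $\Pg$ and agreement with the prefix) shows that the marginal at $m+i$ does not depend on $i$, which the paper never states. The paper's only multi-step device is $\wnext^i$, and it measures a different event: ``at least once among $w_{m+1},\dots,w_{m+i}$''. Your pattern formula recovers that event as $1-\binom{n-m-i}{k-j}/\binom{n-m}{k-j}$.

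Direct counting is also safer here than the paper's reduction of $\wnext^i$ to nested $\wnext_1$ (Theorem~\ref{lemma:nexts}). At $w_{m+1}$, $\wnext_1\fOne$ requires that \emph{every} compatible completion satisfy $\fOne$ at $w_{m+2}$, not just the given one. For example, take $n=4$, a fair coin and $w_1\in\vseg(\at_{\head})$. Then the semantic clause of $\wnext^2_q\at_{\head}$ holds at $w_1$ up to $q=2/3$, while $\wnext_q(\at_{\head}\lor\wnext_1\at_{\head})$ holds only up to $q=1/3$.

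You also make explicit why uniform counting over completions is legitimate: i.i.d.\ trials are exchangeable given the counts. The paper simply builds this into the semantics.

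One small repair: $\binom{n-m}{k-j}$ counts the completions only when there are two outcomes. With more atoms, both counts acquire the same multinomial factor for the remaining atoms, so the ratio is unchanged.
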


\begin{ex}\label{ex3}
Given the series of outputs $TT$ for the first two experiments, the probability that the third experiment will give $H$ is $1$, under the assumption that the desired frequency (the same of Ex.~\ref{ex2}) is realized.
\end{ex}

\subsection{Key operators, informally}\label{sec:keyOp}

To formalize the mentioned problems, we formulate a logic with new modal operators. 
The three operators defined below allow, in different forms, to count the number of times an event has occurred so far or can occur according to an expected or known frequency $\Pg$ associated with our observation. 
%

\paragraph{The operator $\wbox$.}
The operator $\wbox_{q}$ is a world- and observation-dependent operator, which intuitively expresses the fact that, until a given time $t_w$, the single outcome $p$ has been observed \emph{at least} $q\in \mathbb{Q}_{[0,1]}$ times.
%
    %
This operator allows us to describe the relative frequency of a single outcome at the end of the time series or for any sub-series obtained by `cutting' the series at a given time.

\begin{figure}[!ht]
    \centering
    \resizebox{6cm}{!}{%
        \begin{circuitikz}
            \tikzstyle{every node}=[font=\large]
           \node at (7.5,12.25) {$t_{1}$};
            \draw (7.5,11.25) circle (0.5cm) node {H} ;
            \draw (9.25,11.25) circle (0.5cm) node {H} ;
            \node at (9.25,12.25) {$t_{2}$};
            \node at (11,12.25) {$t_{3}$};
            \draw (11,11.25) circle (0.5cm) node {T} ;
            \node at (12.75,12.25) {$t_{4}$};
            \draw (12.75,11.25) circle (0.5cm) node {H} ;
            \end{circuitikz}
        }%
    \caption{A model representing four coin tosses resulting in $HHTH$}.
    \label{fig:wbox}
\end{figure}
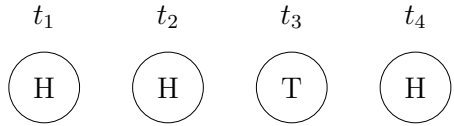

\begin{ex}\label{ex:wbox}
    The model in Fig.~\ref{fig:wbox} represents the observation of four coin tosses resulting in $HHTH$. Thanks to the $\wbox_{q}$ operator we can express that after the second toss of the coin, i.e.~at time $t_2$, the relative frequency of observing $\head$ is $1$, $\wbox_{1}(\head)$. 
    The empirical probability of observing $\head$ changes to $\frac{2}{3}$ after the third toss, i.e. at $t_3$, $\wbox_{\frac{2}{3}}\head$ holds, and then to $\frac{3}{4}$ at the end of the series, i.e.~the formula $\wbox_{\frac{3}{4}}\head$ holds at $t_4$.
\end{ex}

    

\paragraph{The operator $\bbox$.} The operator $\bbox$ is again a time-dependent operator, but, differently from $\wbox$, it is observation-independent. 
Intuitively, it expresses that, according to a given or target frequency $\Pg$, up until a given time $t_m$, the value of interest (e.g. $H$) could have happened \emph{at least} $q$ times.

\begin{figure}[!ht]
    \centering
    \resizebox{8cm}{!}{%
        \begin{circuitikz}
            \tikzstyle{every node}=[font=\large]
            \node at (7.5,12.25) {$t_{1}$};
            \draw (7.5,11.25) circle (0.5cm) node {H} ;
            \draw (9.25,11.25) circle (0.5cm) node {H} ;
            \node at (9.25,12.25) {$t_{2}$};
            \node at (11,12.25) {$t_{3}$};
            \draw (11,11.25) circle (0.5cm) node {H} ;
            \node at (12.75,12.25) {$t_{4}$};
            \draw (12.75,11.25) circle (0.5cm) node {T} ;
            \node at (14.5,12.25) {$t_{5}$};
            \draw (14.5,11.25) circle (0.5cm) node {T} ;
            \node at (16.25,12.25) {$t_{6}$};
            \draw (16.25,11.25) circle (0.5cm) node {T} ;
            \end{circuitikz}
        }%
    \caption{A model representing six coin tosses resulting in $HHHTTT$}.
    \label{fig:bbox}
\end{figure}
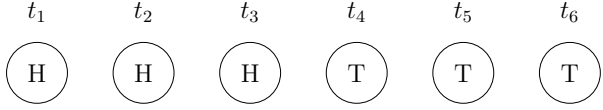
    %
\begin{ex}
    Let us assume that we think our coin is (relatively) unbiased, i.e.~that the distribution associated with our observation is $\Pg(\head)=\frac{1}{2}$ and $\Pg(\tail)=\frac{1}{2}$. 
    Given any series of $n\ge2$ tosses, after the first toss we expect with degree of certainty 1 that any outcome, let's say obtaining head ($\head$), is compatible with the desired distribution $\Pg$, i.e.~that after observing the result of the first toss, subsequent results can be so that the final frequency corresponds to the one of an unbiased coin. 
    This is expressed by the fact that the black-boxed formula $\bbox_1 \head$ is true at time $t_1$ (for any observation of this experiment repeated $n$ times).
    Actually, this is the case for any $m\leq \frac{n}{2}$, since it is clear that, before passing the first half of the series, any observation is compatible with the fact that the coin is unbiased (i.e.~with $\Pg(\head)=\frac{1}{2})$; that is, the formula $\bbox_1 \head$ is satisfied at any time $t_m$.
    On the other hand, at the end of the series (i.e.~considering the last time $t_n$), a situation is compatible with the desired frequency $\Pg$ – i.e.~with the fact that the coin is (relatively) unbiased – if $\head$ is satisfied $\frac{1}{2}$ of the times and the formula $\bbox_{\frac{1}{2}}\head$ is satisfied at any time $t_n$.
    Actually, this holds for any $q\leq \frac{1}{2}$.
    Therefore, if we consider the model in Fig.~\ref{fig:bbox}, representing six coin tosses resulting in the unbiased behavior $HHHTTT$, $\bbox_1 \head$ is satisfied for any time up until $t_3$, whereas at the end of the series, namely at time $t_6$, only the weaker formula $\bbox_{\frac{1}{2}} \head$ is satisfied.
\end{ex}

\begin{ex}
    Assume that we think our coin is (relatively) unbiased, i.e.~that the distribution associated with our observation is $\Pg(\head)=\frac{1}{2}$ and $\Pg(\tail)=\frac{1}{2}$. 
    Given any series of $n\ge2$ tosses, any outcome for the first $\frac{n}{2}$ tosses (for example, obtaining only head ($\head$) or only tail ($\tail$)) is compatible with the desired distribution $\Pg$.
    This is expressed by the fact that the black-boxed formulas $\bbox_q \head$ and $\bbox_q \tail$ are true for every $q \in \mathbb{Q}_{[0,1]}$ (in particular, for $q=1$) at any time $t_m$ with $m\leq \frac{n}{2}$.
    For subsequent tosses, this is not true anymore.
    In particular, at the end of the series (i.e.~considering the last time $t_n$), a situation is compatible with the desired frequency $\Pg$ if half of the tosses gave $\head$ and the other half gave $\tail$.
    This is expressed by the fact that the formulas $\bbox_{\frac{1}{2}}\head$ and $\bbox_{\frac{1}{2}}\tail$ are satisfied at time $t_n$.
\end{ex}

\paragraph{The operator $\sat$.} As the operator $\wbox$, the operator $\sat$ is time- and observation-dependent. 
A circled formula $\sat _{q}\fOne$, holding at a time $t_m$, expresses that until that time the outcome 
corresponding to $\fOne$ has 
been obtained at least $q \times n$ times, where 
$n$ is the total number of trials 
executed in the considered observation series.
%
In other words, it can be seen as the lowest attainable frequency of $\fOne$ at the conclusion of the series, given that up until time $t_m$, we know how many times the event of interest has occurred.

\begin{ex}
    Going back to Example~\ref{ex:wbox}, let us assume that, again, we have flipped the coin $m$ times, $l$ times the outcome has been head ($\head$), and that we know that we will flip the coin $n$ times in total. 
    Then, we can express that, at the end of the experiment, we will obtain head with \emph{at least} probability $\frac{l}{n}$ by saying that, according to the current observation,
    the formula $\sat_{\frac{l}{n}}\head$ is satisfied at time $t_m$.    Considering again the model in Fig.~\ref{fig:wbox}, we can now say that at time $t_1$ the formula $\sat_{\frac{1}{4}}\head$ is satisfied; at $t_2$ and $t_3$ $\sat_{\frac{1}{2}}\head$ holds; at $t_4$ the formula $\sat_{\frac{3}{4}}\head$ holds.
    Intuitively, this operator provides information about the number of times a single outcome has come out (in the concrete case considered, 
    and at a specific time, $m$), taking into account the total number $n$ of trials constituting the series.
   \end{ex}

%



\paragraph{Next operator $\wnext$.} The next operator $\wnext$ is time-dependent. It expresses that, assuming that we know that the current observation is associated with a given distribution,
the probability that event $\fOne$ happens at the next time $t_{m+1}$ is at least $q$.
This operator can be generalized to $\wnext^i$, expressing the probability that an event happens at least once in the next $i$ executions of the experiment.

\begin{figure}[!ht]
    \centering
    \resizebox{6cm}{!}{%
        \begin{circuitikz}
            \tikzstyle{every node}=[font=\large]
            \node at (7.5,12.25) {$t_{1}$};
            \draw (7.5,11.25) circle (0.5cm) node {T} ;
            \draw (9.25,11.25) circle (0.5cm) node {T} ;
            \node at (9.25,12.25) {$t_{2}$};
            \node at (11,12.25) {$t_{3}$};
            \draw (11,11.25) circle (0.5cm) node {H} ;
            \node at (12.75,12.25) {$t_{4}$};
            \draw (12.75,11.25) circle (0.5cm) node {H} ;
            \end{circuitikz}
        }%
    \caption{A model representing four coin tosses resulting in $TTHH$}.
    \label{fig:next}
\end{figure}
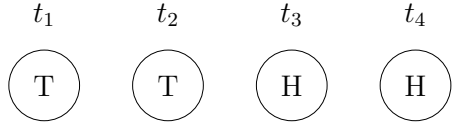

\begin{ex}\label{ex:next}
    Consider the model in Fig.~\ref{fig:next}. 
    Given an unbiased coin and a series of four coin tosses, if we know that the first two tosses return tail (\tail), the fact that at the third toss the outcome is (certainly) head (\head) is expressed by the fact that the next-formula $\wnext_1 \head$ is satisfied at time corresponding to the second toss $t_2$.
    As we shall see, formally the facts that the series is made of four tosses and that the coin is (relatively) unbiased are part of the Def.of our model 
    while that the first two tosses end up with $\tail$ depends on the given observation, see Section~\ref{sec:semantics}.
\end{ex}

    %

\begin{ex}
Considering again the model of Fig.~\ref{fig:next}, the fact that the probability that either the third or the fourth coin toss returns $\head$ is (at least) 1 is expressed by the fact that the next formula $\wnext^2_1 \head$ is satisfied at time $t_2$ corresponding to the second toss.
\end{ex}

\paragraph{Global star operator $\bstar$.}
Finally, 
a star formula $\bstar_{q}\fOne$ expresses that, according to a given well-defined probability distribution 
associated with our series, 
there is a time $t$ s.t.~at least $q$ observations 
make $\fOne$ true at $t$.
Notably, ``stars are blind'' in the sense that this operator is independent of both the actual experimental observation
and the current time we are in (i.e.,~$t$). 

\begin{ex}
    If the coin 
    is (relatively) unbiased
    , for any observation and at any time we have probability $\frac{1}{2}$ (so also greater \emph{or equal} than $\frac{1}{2}$) of obtaining head ($\head$). 
    This is expressed by the satisfaction of formula $\star_{\frac{1}{2}}\head$ at every time $t_m$.
    Similarly, if the coin 
    is (relatively) biased and returns $\head$ only $\frac{1}{3}$ of the times, then clearly for any time of any actual observation the star formula $\star_{q} \head$ is satisfied when $q\leq \frac{1}{3}$.
\end{ex}

\section{The temporal logic of frequencies $\PL$}\label{sec:logic}

In this Section, we introduce the formal language and semantics of the propositional logic $\PL$, a natural formalism to express properties of experiments happening in a series or to compare the frequency of given phenomena with expected distribution.

\subsection{The language of $\PL$}
The language of $\PL$ is defined as a standard propositional language, 
endowed with original operators corresponding to those informally presented in Section~\ref{sec:keyOp}.

\begin{notation}
    In what follows, we will use $\at_1, \at_2, \dots$ for atomic propositions and $\fOne, \fTwo, \fThree, \dots$ for possibly molecular ones. 
    We use $\At$ to denote the set of propositional variables.
\end{notation}

\begin{definition}[Language of $\PL$]
    Formulas of $\PL$ are defined by the grammar below:
    $$
    \fOne ::= \at \; | \; \neg \fOne \; | \; \fOne \wedge \fOne \; | \; \fOne \vee \fOne \; | \; \wbox_q \fOne \; | \;  \bbox_q \fOne\; | \; \sat_q \fOne \; | \; \wnext_q \fOne \; | \; \bstar_q \fOne
    $$
    where $\at \in \At$ and $q\in \mathbb{Q}_{[0,1]}$.
\end{definition}



Notice that our syntax allows nested `modalities':

\begin{ex}
    Consider again the situation depicted in Figure~\ref{fig:wbox}.
    Notice that, since $\wbox _1 \head$ holds both at $t_1$ and $t_2$, the relative frequency of $\wbox _1 \head$ in the tosses previous to $t_2$ is $1$, and so $\wbox _1 \wbox _1 \head$ holds at $t_1$ and $t_2$.
    Moreover, notice that only half of the tosses preceding $t_4$ satisfy $\wbox _1 \head$.
    Hence, while $\wbox_{\frac{3}{4}}\head$ holds at $t_4$, only $\wbox_{\frac{1}{2}} \wbox _1 \head$ holds there.

\end{ex}

\subsection{The semantics of $\PL$}\label{sec:semantics}
We introduce a Kripke-style semantics for $\PL$ defined over models made of a \emph{finite} set of worlds interpreted as time instants, a special accessibility relation to be interpreted as a linear time series, and a set of assignments corresponding to a well-defined frequency distribution $\Pg$. Worlds are used to model instants of time at which events occur.
Notice that, although the semantics is standard in the sense that formulas are  either true or false in a given model and interpretation, boxed formulas are evaluated considering \emph{sets} of worlds and assignments.\footnote{In $\PL$, this is allowed directly by the modal operator `in one step', but alternative presentations are possible, e.g.~considering terms, the interpretation of which is a measurable set, see e.g.~\cite{FHM}.}

\begin{definition}[Assignment]\label{def:Assignments}
\begin{sloppypar}
Given a set of atomic propositions $\At_\Omega \subset \At$ and a set of worlds $W$, an assignment $\vNseg$ is a mapping $\At_\Omega \mapsto \mathcal{P}(W)$ such that for any $w\in W$, there is a unique $\at \in \At_\Omega$ for which $w \in \vNseg(\at)$.
\end{sloppypar}
\end{definition}

\begin{definition}[Model of $\PL$]\label{def:PrL}
\begin{sloppypar}
A $\PL$-structure is a tuple $\str = (W, \Rexp, \Pg)$, where:
\end{sloppypar}
\begin{itemize}
\itemsep0em
    \item $W=\{w_1, w_2, \dots, w_n\}$ is a \emph{finite} set of worlds;
    \item $\Rexp$ is a anti-symmetric, reflexive, transitive and strongly connected relation;
    \item $\Pg$ is the set of all assignments compatible with a \emph{well-defined} (empirical) probability measure $\fr$ over the $\sigma$-algebra corresponding to a space $\Omega$ and a set of variables $\At_\Omega \subseteq \At$; notably, for any $\at \in \At_\Omega$: 

\[
\frac{\vert \{ w \in W : w \in \vNseg(\at)\}\vert}{\vert W \vert}= \fr (\at).
\]

    %
\end{itemize}
When an assignment $\vseg$, intuitively corresponding to the ``real'' observation, is added, we obtain a $\PL$-model, $\mathbb{M}=(W, \Rexp, \Pg, \vseg)$.
\end{definition}

\noindent
The relation $\Rexp$ establishes an order between worlds: 
since it is reflexive, for any $w\in W, w\Rexp w$; since  it is transitive, for any $w,w',w''\in W$, if $w\Rexp w'$ and $w'\Rexp w''$, then $w\Rexp w''$; 
since it is anti-symmetric, there is no $w,w'\in \Rexp$ where $w\neq w'$, such that $w\Rexp w'$ and $w'\Rexp w$; since it is strongly connected, for any $w,w'\in \Rexp$, if $w\neq w'$, then either $w\Rexp w'$ or $w' \Rexp w$.

\begin{theorem}
Notice that, given the relation $\Rexp$, for every $w\in W$ we can define its successor $succ(w)$ as the world $w'\in W$ such that
$
\forall w''(w''\Rexp w' \leftrightarrow (w''\Rexp w \lor w''=w')).
$
\end{theorem}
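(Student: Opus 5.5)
The plan is to first extract from the properties of $\Rexp$ in Definition~\ref{def:PrL} that $(W,\Rexp)$ is a finite total (linear) order. Reflexivity, antisymmetry and transitivity make $\Rexp$ a partial order, and strong connectedness makes it total. A finite total order can be enumerated without repetition as $w_1, \dots, w_n$ with $w_j \Rexp w_k$ iff $j \le k$. I would prove this by induction on $|W|$: pick the $\Rexp$-least element, which exists by finiteness and totality, remove it, and apply the induction hypothesis to the rest. With this enumeration, the statement reduces to showing that, for $w = w_i$ with $i<n$, $succ(w)$ is $w_{i+1}$.

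\emph{Existence.} Take $w = w_i$ with $i<n$ and $w' = w_{i+1}$. Then $w'' \Rexp w_{i+1}$ holds iff $w'' = w_j$ for some $j \le i+1$. This is the same as saying $j \le i$ (that is, $w'' \Rexp w_i$) or $w'' = w_{i+1}$, which is exactly the required biconditional.

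\emph{Uniqueness.} I would argue directly from the biconditional, without the enumeration, so that the characterization is intrinsic. Let $v \neq w$ satisfy the defining condition.
\begin{itemize}
\item Instantiating $w'' := w$ gives $w \Rexp v$, because $w \Rexp w$ holds by reflexivity. So $v$ lies strictly above $w$.
\item Let $u$ be any element with $w \Rexp u$ and $u \neq w$. By strong connectedness, either $u \Rexp v$ or $v \Rexp u$.
\item If $u \Rexp v$, the biconditional gives $u \Rexp w$ or $u = v$. The first option, together with $w \Rexp u$ and antisymmetry, forces $u = w$, a contradiction. Hence $u = v$.
\end{itemize}
So $v$ lies below-or-equal to every element strictly above $w$, i.e.\ $v$ is the $\Rexp$-least element strictly above $w$. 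Two such least elements are mutually $\Rexp$-related, so they coincide by antisymmetry.

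The step I expect to be the real obstacle is a degenerate case hidden in the formula as literally written. Taking $w' = w$ always satisfies the biconditional: the right-hand side becomes $w'' \Rexp w \lor w'' = w$, which is equivalent to $w'' \Rexp w$ by reflexivity. So uniqueness fails unless we read the definition with the side condition $w' \neq w$, i.e.\ $succ(w)$ is the least world strictly above $w$. I would make this reading explicit, and it is what the uniqueness argument above uses. I would also note the boundary case: for the $\Rexp$-maximal world $w_n$, no $w' \neq w_n$ satisfies the condition, since the instantiation $w'' := w$ would give $w_n \Rexp w'$ with $w' \neq w_n$, contradicting maximality. So $succ(w_n)$ is either undefined or set by convention to $w_n$, which is the only witness left by the literal formula. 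With these two clarifications, the successor function is well defined on $W$.
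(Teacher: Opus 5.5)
Your argument is correct. The paper gives no proof to compare it with: the statement is phrased as a ``Notice that'' remark and is followed directly by the labelling convention.

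Your uniqueness step is the substantive part, and it is sound and orientation-free. Instantiating $w'' := w$ gives $w \Rexp v$. Strong connectedness and antisymmetry then make any witness $v \neq w$ the $\Rexp$-least world strictly above $w$.

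Your two caveats are genuine defects of the statement as written:
\begin{itemize}
\item By reflexivity, $w' = w$ always satisfies the biconditional. So without the side condition $w' \neq w$, the phrase ``the world $w'$'' does not determine a unique world. Every world other than the $\Rexp$-maximal one has exactly two witnesses: itself and its cover.
\item The $\Rexp$-maximal world has no witness other than itself, so the claim ``for every $w\in W$'' fails there.
\end{itemize}

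One further clarification belongs alongside these. Your enumeration, with $w_j \Rexp w_k$ iff $j \le k$, is the reverse of the labelling the paper actually uses. Throughout the semantics, $w_m \Rexp w$ means that $w$ is one of $w_1,\dots,w_m$. This is visible in the rephrased clause for $\wbox$ after Definition~\ref{def:semantics}. It is also visible in the set $\Rexp_{w_m}$ of worlds accessible from $w_m$, which the paper later takes to be $\{w_1,\dots,w_m\}$, e.g.\ $\Rexp_{w_2}=\{w_1,w_2\}$. So in the paper $w_a \Rexp w_b$ iff $b \le a$.

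Under that labelling, your own uniqueness argument shows that the witness $w' \neq w_i$ is $w_{i-1}$, and the world with no such witness is $w_1$. In the paper's conventions the formula therefore picks out the temporal predecessor, not $w_{m+1}$. To obtain $w_{i+1}$ the arguments of $\Rexp$ must be swapped, requiring $\forall w''(w' \Rexp w'' \leftrightarrow (w \Rexp w'' \lor w'' = w'))$ together with $w' \neq w$.

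The statement in isolation does not fix an orientation, so this is not a gap in your proof. But your identification of $succ(w_i)$ with $w_{i+1}$ should say explicitly which orientation of $\Rexp$ it presupposes.
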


\begin{notation}
    For the sake of readability, in what follows we will assume that worlds in $W$ are labelled as $w_1, w_2, \dots, w_n$ and that $\Rexp$ is defined accordingly.
\end{notation}


As anticipated, $\fr(\cdot)$ corresponds to a \emph{well-defined} (empirical) probability measure over a sample space $\Omega$.
Actually, any space of cardinality $n$ (=$|W|$) would work.
Moreover, notice that according to our Def.of assignment (see Def.~\ref{def:Assignments}) the atomic formulas in $\At_\Omega\subseteq \At$ are mutually exclusive and collectively exhaustive, in the sense that every world has exactly one atomic formula assigned, just as each execution of the experiment is associated with exactly one element of $\Omega$.
Hence, although $\Omega$ contains outcomes and $\At_\Omega$ contains atoms, if these sets contain the same number of elements, we can associate each atom with an outcome so to treat the frequencies of the atoms verified in the worlds of $W$ as the well-defined frequency associated by $\fr(\cdot)$ to elements of $\Omega$ for repeated observations of an experiment.

It is now possible to define the satisfiability relation in a world of $\model$.

\begin{definition}[Truth in a world]\label{def:semantics}
    For any $\PL$-formula $\fOne$, $\model = (W, 
    R_E, \Pg , \vseg)$, such that $|W|=n$, assignment $\vNseg \in \Pg \cup \{\vseg \}$, and world $w_m\in W$,  we define the relation $\model, w_m \vDash_{\vNseg}$, inductively as follows:
\footnotesize
\begin{flalign*}
&\model, w_m \vDash_a p 
&&\quad \text{iff} \quad w_m \in a(p) &\\
&\model, w_m \vDash_a \neg \fOne 
&&\quad \text{iff} \quad \model, w_m \not\vDash_a \fOne &\\
&\model, w_m \vDash_{\vNseg} \fOne \wedge \fTwo 
&&\quad \text{iff} \quad \model, w_m \vDash_{\vNseg} \fOne \text{ and } \model, w_m \vDash_{\vNseg} \fTwo &\\
&\model, w_m \vDash_{\vNseg} \fOne \vee \fTwo 
&&\quad \text{iff} \quad \model, w_m \vDash_{\vNseg} \fOne \text{ or } \model, w_m \vDash_{\vNseg} \fTwo &\\
&\model, w_m \vDash_{\vNseg} \wbox_{q} \fOne 
&&\quad \text{iff} \quad  
\frac{|\{w_l : l \leq m \wedge \model, w_l \vDash_{\vNseg} \fOne\}|}{m} \ge q &\\
&\model, w_m \vDash_{\vNseg} \bbox_{q} \fOne 
&&\quad \text{iff} \quad \text{there is a } \vNseg' \in \Pg,\;
\frac{|\{w_l : l \leq m \wedge \model, w_l \vDash_{a'} \fOne\}|}{m} \ge q &\\
&\model, w_m \vDash_{\vNseg} \sat_{q} \fOne 
&&\quad \text{iff} \quad 
\frac{|\{w_l : l \leq m \wedge \model, w_l \vDash_{a} \fOne\}|}{n} \ge q &\\
&\model,w_m \vDash_{\vNseg} \wnext_{q}\fOne 
&&\quad \text{iff} \quad 
\frac{|\{\vNseg ' : \vNseg ' \in \Pg \wedge \forall \at, i(i\leq m \to (w_i \in \vNseg '(\at) \leftrightarrow w_i \in \vNseg (\at)) \wedge \model, w_{m+1} \vDash_{\vNseg '} \fOne\}|}
{|\{\vNseg ' : \vNseg ' \in \Pg \wedge \forall \at, i(i\leq m \to (w_i \in \vNseg '(\at) \leftrightarrow w_i \in \vNseg (\at)))\}|} \ge q &\\
&\model, w_m \vDash_{\vNseg} \bstar_{q} \fOne 
&&\quad \text{iff} \quad \text{there is a } w'\in W,\;
\frac{|\{\vNseg ' \in \Pg : \model, w' \vDash_{\vNseg '} \fOne\}|}{| \Pg|} \ge q. &
\end{flalign*}
\normalsize

\end{definition}

\noindent
Notably $\vseg$ is nothing but an assignment (either in $\Pg$ or not), to be intuitively interpreted as the ``real'' or ``current'' observation.
Notice that, while our main interest is to evaluate formulas according to this real observation, Definition~\ref{def:semantics} of truth in a world has to take into account also 
assignments in $\Pg$, as these are needed 
to formally define $\bbox$, $\bstar$ and $\wnext$.

The clause for atoms stresses that we are interpreting them as outcomes of the experiments composing the observation.
The clause for negation results from the assumption of mutual exclusivity of the atoms in $\At_\Omega$ and is preserved for composed formulas due to the axioms of the $\sigma$-algebras.
The clauses for conjunction and disjunction interpret these notions as the logical counterpart of intersection and union between events.
Modalities give a way to express the observed frequency of verified formulas or related notions.
This ensures that the semantics is well defined.

We can now re-consider informal examples presented in Section~\ref{sec:keyOp} in terms of the formal semantics defined above.

\paragraph{White box $\wbox$.}
The $w$- and $\vseg$-dependent operator $\wbox_{q}$ expresses that until world $w$, the argument formula $\fOne$ has happened \emph{at least} $q$ 
of the time.
\begin{ex}\label{ex:wbox2}
    Let $\str=(\{w_1, \dots, w_n\}, \Rexp, \Pg)$, such that $m\in \{1,\dots, n\}$.
    In our logic, the following relation expresses that (with respect to a given observation $\vseg$) after flipping a coin $m$ times, the result has been head, $p_{\head}$, at least $\frac{l}{m}$ of the times,
    $$
    \model, w_m \vDash_{\vseg} \wbox_{\frac{l}{m}} \at_{\head}.
    $$
    Notably, $\fr(\at_\tail) =1-\fr(\at_\head)$, 
    $$
    \model, w_m \vDash_{\vseg} \wbox_{\frac{m-l}{m}} \at_{\tail}.
    $$
\end{ex}

\noindent
The semantics of Def.~\ref{def:semantics} can be rephrased as
: 
$\model, w_m \vDash_{\vNseg} \wbox_{q} \fOne$ iff $\frac{|\{ w \; : \; w_m\Rexp w \; \wedge \;  \model, w \; \vDash_{\vNseg} \fOne\}|}{|\{w_0, w_1, \dots, w_m\}|} \; \ge \; pr$. Similar alternatives can be defined also for all other operators.

\paragraph{Black box $\bbox$.} The $w$-dependent, $\vseg$-independent operator $\bbox_{q}$ expresses that, according to a fixed or target frequency distribution on every atom, called $\Pg$, up until $w$, the argument formula $\fOne$ could have happened at least $q$ of the time. 
Thus, this operator concerns properties relating a target probability $\Pg$ and the events in a series up to the current state.

\begin{ex}
    Let us assume that we think our coin is (empirically) unbiased.\footnote{We say that a coin is \textit{empirically} unbiased if, when the series of tosses is complete, the relative frequency of $\head$ is equal to the relative frequency of $\tail$.} 
    Given a series of $n\ge2$ tosses, after the first toss 
    any outcome 
    is compatible with the desired distribution, i.e.~
    after the first toss we can still have outcomes so that, in the end, the coin will come out to be (empirically) unbiased.
    For example, obtaining $\head$ ($p_{\head}$) in our first toss is compatible with our desired frequency, and indeed we have
    $$
    \model, w_1 \vDash_{\vseg} \bbox_{1} \at_{\head}
    $$
    where $\model$ is obtained extending with any $\vseg$ a structure $\str=(\{w_1, \dots, w_n\}, \Rexp, \Pg)$ such that $\Pg$ is $\fr(p_{\head})=\frac{1}{2}$, because there is $\vNseg \in \Pg$ s.t.
    $$
    \model, w_1 \vDash_{\vNseg} \at_{\head}
    $$
    For example, consider an assignment $\vNseg$ s.t. it assigns the first $n/2$ worlds at $\at_{\head}$ and the remaining worlds at $\at_{\tail}$.  
    It satisfies $\at_{\head}$ at $w_1$ and is in $\Pg$, since $\fr(p_{\head})=\frac{1}{2}$.
    
    Actually, this is the case for any outcomes in the first half of the worlds in $W$, since it is clear that for any sequence of results there is a situation compatible with the fact that the coin is (empirically) unbiased.
    In other words, for every $q \in \mathbb{Q}_{[0,1]}$, when $m\leq \frac{n}{2}$
    $$
    \model, w_m \vDash _{\vseg} \bbox_{q} \at_{\head}
    $$
    since there is at least a $\vNseg \in \Pg$ s.t. it satisfies $\at_{\head}$ $q$ of the time 
    in the first half of the worlds of $W$.
    The most interesting case is when $q=1$, that is, when $\at_{\head}$ is satisfied in all the first half of the worlds.
    
    On the other hand, at the end of the series (i.e.~considering the last world $w_n$), a situation is compatible with the desired distribution (i.e.~with the fact that the coin is unbiased) if $\at_{\head}$ is satisfied $\frac{1}{2}$ of the times:
    $$
    \model, w_n \vDash_{\vseg} \bbox_{\frac{1}{2}} \at_{\head}.
    $$

\noindent
Notice that, since we assume that our coin is (empirically) unbiased, the set $W$ cannot consist of an odd number of worlds.
\end{ex}


\paragraph{Circle operator $\sat$.} The $w$- and $\vseg$-dependent operator $\sat _{q}$ expresses that, until $w$, the argument formula $\fOne$ has happened at least $q \times n$ times, where $n=|W|$ is the total number of trials of the series.
This can be interpreted as the minimum attainable frequency of the argument formula $\fOne$ at the conclusion of the series, based on the information about $\vseg$ available up to $w$.

\begin{ex}
    Going back to Ex.~\ref{ex:wbox2}, let us assume that, again, we have flipped the coin $m$ times, that $l$ times the outcome has been head ($\at_{\head}$), and that we know that we will flip the coin at most $n$ times in total. Then, we can formally express that, at the end of the experiment, we will obtain $\head$ with \emph{at least} probability $\frac{l}{n}$ as follows:
    $$
    \model, w_m \vDash_{\vseg} \sat_{\frac{l}{n}} \at_{\head}.
    $$
    Intuitively, this operator offers information about the number of times a value has come out (in the concrete case considered, $\vseg$, and at a specific event $m$ in the series), taking into account the total number $n$ of trials constituting the series.
\end{ex}



Observe that while the white-box operator $\wbox$ can be interpreted as the $\vseg$-dependent counterpart of the black box $\bbox$, vice versa $\bbox$ can be defined in terms of $\wbox$'s associated with all possible $\vNseg$ which are compatible with a given distribution $\Pg$.
Moreover, unlike 
$\wbox$, $\sat$ considers the total number of trials constituting the series. 
Indeed, the information provided by $\wbox$ is the frequency of the event \emph{until} a given point (i.e., time) $w$, hence it gives information about \emph{the relationship} between the number of times the phenomenon has been observed and the number of attempts experienced so far
. On the other hand, the operator $\sat$ provides information about the number of occurrences of the phenomenon we are investigating with respect to the total number of attempts in our series. 

\paragraph{Next operator $\wnext$.} The $w$-dependent operator $\wnext_{q}$ expresses that, assuming that
$\vseg \in \Pg$, for $\Pg$ being a given or target (well-defined) frequency distribution, the probability that $\fOne$ happens at the next time (i.e.~the next world) is at least $q$.
The probability is calculated as the ratio of assignments in $\Pg$ that satisfy $\fOne$ in the next world, and so assume that the assignments are equiprobable.\footnote{Practically, we just need to assume that outcomes in $\Omega $ are equiprobable.}
Clearly, this operator is $\vseg$-dependent.

\begin{ex}\label{ex:next2}
    Let us consider an (empirically) unbiased coin and a series of four tosses, i.e. $W=(w_1, w_2, w_3, w_4)$. If we know that the first two outcomes are $\tail$ ($\at_{\tail})$, i.e. $\{w_1,w_2\}\subseteq \vseg (\at_{\tail})$, it will be certain that at the third toss the outcome is $\head$ ($\at_{\head}$) 
    $$
    \model, w_2 \vDash_\vseg \wnext_1 \at_{\head}
    $$
    for $\model=(W, \Rexp, \Pg,\vseg)$.
    Clearly, that the series is made of four coin tosses is due to $|W|$, that the coin is (empirically) unbiased is expressed by $\Pg$ (so $\str$), while that the first two tosses ends up with $\tail$ is given by $\vseg$.
\end{ex}

\noindent
This can be generalized to indexed formulas of the form $\wnext_{q}^i \fOne$, which expresses that assuming $\vseg \in \Pg$, \emph{in $i$ steps from now}, the probability that $\fOne$ happens (at least once) is at least $q$.

\begin{ex}
Given the Def.of Ex.~\ref{ex:next2}, the fact that the probability that either the third or the fourth coin toss returns $\head$ is (at least) 1 is expressed as follows:
$$
\model, w_2 \vDash_{\vseg} \wnext^2_1 \at_{\head} 
$$
\end{ex}

\noindent
Indeed, combined with Boolean operators, $\wnext$ allows us to express not only the probability of phenomena considered in the immediately subsequent world, but even the (minimum) probability that a phenomenon happens in $i$ worlds from the actual one. 
More formally, we can define the truth condition for $\wnext^i_{q}\fOne$ in $\model$ at world $w_m$ and according to the assignment $\vNseg \in \Pg \cup \{\vseg \}$ as follows:
\footnotesize
$$
\quad \frac{|\{\vNseg' : \vNseg' \in \Pg \; \wedge \; \forall \at, l(l\leq m \to (w_l \in \vNseg '(\at) \leftrightarrow w_l \in \vNseg (\at)) \wedge \exists j_{\neq 0} \leq i. \; \model, w_{m+j} \vDash_{\vNseg'} \fOne\}|}{|\{\vNseg' : \vNseg ' \in \Pg \; \wedge \; \forall \at, l(l\leq m \to (w_l \in \vNseg '(\at) \leftrightarrow w_l \in \vNseg (\at))\}|} \ge q
$$
\normalsize
Moreover, the following Th.~\ref{lemma:nexts} gives a syntactic Def.of $\wnext^i_{q}$ in terms of $\wnext_{q}$ and disjunction.

\begin{theorem}\label{lemma:nexts}
For any $\PL$-formula $\fOne$, $\model = (W, 
    R_E, \Pg , \vseg)$, assignment $\vNseg \in \Pg \cup \{\vseg \}$, world $w_m\in W$ and $i\in \{1, \dots, n-m\}$, 
$$
\model, w_m \vDash_\vNseg \wnext^i_{q} \fOne \quad \text{iff} \quad \model, w_m \vDash_\vNseg \wnext_{q} (\fOne \lor \wnext_{1}(\fOne \lor \wnext_{1}(\ldots \wnext_{1}(\fOne \lor \wnext_{1}\fOne)))),
$$
with $i-1$ applications of $\wnext_{1}$.
\end{theorem}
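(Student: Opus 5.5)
The plan is to reduce both sides to the same ratio of assignments, by proving an auxiliary characterization of the nested formula. Define $\chi_1 := \fOne$ and $\chi_{k+1} := \fOne \lor \wnext_1 \chi_k$, so that the right-hand side is $\wnext_q \chi_i$ (the nested formula with $i-1$ applications of $\wnext_1$). Both sides share the denominator in Definition~\ref{def:semantics} and in the displayed clause for $\wnext^i_q$: the set $C_m(\vNseg)$ of $\vNseg' \in \Pg$ that agree with $\vNseg$ on $w_1,\dots,w_m$. It therefore suffices to show that the two numerators coincide, i.e.\ that for every $\vNseg' \in C_m(\vNseg)$
$$
\model, w_{m+1} \vDash_{\vNseg'} \chi_i \quad\text{iff}\quad \exists j \in \{1,\dots,i\}.\ \model, w_{m+j} \vDash_{\vNseg'} \fOne .
$$
The equivalence of the ratios, and hence of the comparison with $q$, then follows immediately.

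I would prove this auxiliary claim by induction on $i$, uniformly in $m$, so that the inductive hypothesis can be applied at the shifted position $m+1$.
\begin{itemize}
\item For $i=1$ the claim is trivial, since $\chi_1 = \fOne$.
\item For the step, $\model, w_{m+1} \vDash_{\vNseg'} \chi_{k+1}$ holds iff either $\fOne$ holds at $w_{m+1}$ (the case $j=1$), or $\model, w_{m+1} \vDash_{\vNseg'} \wnext_1 \chi_k$.
\item The latter means that every $\vNseg'' \in C_{m+1}(\vNseg')$ satisfies $\chi_k$ at $w_{m+2}$. By the inductive hypothesis at position $m+1$, this says that each such $\vNseg''$ satisfies $\fOne$ at some $w_{m+1+j}$ with $1\le j\le k$.
\end{itemize}
The side condition $i \le n-m$ guarantees that all worlds $w_{m+j}$ involved exist, so the successor notion introduced after Definition~\ref{def:PrL} is always defined.

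The main obstacle is the universal quantification hidden in $\wnext_1$. The step above yields ``for \emph{all} continuations $\vNseg''$ of $\vNseg'$ up to $m+1$, some later $w_{m+1+j}$ satisfies $\fOne$''. What we need is ``for the particular $\vNseg'$, some $w_{m+1+j}$ satisfies $\fOne$''.
\begin{itemize}
\item The direction from the nested formula to the existential is fine: $\vNseg'$ itself lies in $C_{m+1}(\vNseg')$, so universal implies existential.
\item The converse direction is where the real argument is needed. There are two ways I would try to close it. First, exploit the counting structure of $\Pg$: if $\vNseg'$ realizes $\fOne$ at some $w_{m+1+j}$ while a sibling continuation does not, one can swap the values of worlds beyond $m+1$ while preserving membership in $\Pg$, since $\Pg$ is closed under permutations of worlds after position $m+1$. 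Second, if that fails, I would check small cases, such as the four-toss unbiased coin of Ex.~\ref{ex:next2}, for a counterexample.
\end{itemize}
If a counterexample appears, I would restrict the theorem. One option is the case $q=1$, where both sides express certainty and the universal reading matches. The other is formulas $\fOne$ whose truth at $w_k$ depends only on the prefix up to $w_k$, for which the two quantifications can be aligned by conditioning on that prefix.
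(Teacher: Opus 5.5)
\textbf{The statement is false for $0<q<1$.} Your reduction to comparing numerators is correct, and so is your proof that the nested formula implies that $\vNseg'$ itself hits $\fOne$. But the converse you flag cannot be established, because it is false. Take the four-toss fair coin of Ex.~\ref{ex:next2}, where $\Pg$ has six elements. Let $m=1$, $i=2$, $\fOne=\at_{\head}$, and take any $\vNseg$ with $w_1\in\vNseg(\at_{\head})$. Then $C_1(\vNseg)=\{HHTT,HTHT,HTTH\}$.
\begin{itemize}
\item Two of these three put $\head$ at $w_2$ or $w_3$, so $\model,w_1\vDash_{\vNseg}\wnext^2_q\at_{\head}$ iff $q\le\frac23$.
\item Only $HHTT$ satisfies $\at_{\head}\lor\wnext_1\at_{\head}$ at $w_2$. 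For both $HTHT$ and $HTTH$ one gets $C_2=\{HTHT,HTTH\}$, where $\head$ at $w_3$ has ratio $\frac12<1$. Hence $\model,w_1\vDash_{\vNseg}\wnext_q(\at_{\head}\lor\wnext_1\at_{\head})$ iff $q\le\frac13$.
\end{itemize}
So $q=\frac12$ separates the two sides.

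Your swap argument cannot repair this. $\Pg$ is indeed closed under permuting the worlds after $w_{m+1}$, but such a permutation carries $HTHT$ to $HTTH$, the very sibling that misses $\fOne$. What $\wnext_1$ demands is that \emph{all} siblings hit $\fOne$, and no symmetry argument gives that. The paper's own proof breaks at exactly the step you isolated. It treats $\model,w_{n+1}\vDash_{\vNseg'}\wnext^i_1\fOne$ as equivalent to ``$\exists k\le i$ with $\model,w_{n+1+k}\vDash_{\vNseg'}\fOne$''. The former says every continuation in $\Pg$ agreeing with $\vNseg'$ up to $w_{n+1}$ hits $\fOne$, which implies the latter but is not implied by it.

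\textbf{What does hold.} Your valid direction shows that the numerator for the nested formula is contained in the numerator for $\wnext^i_q\fOne$. So the right-to-left implication holds for every $q$. Your first fallback is also sound: for $q=1$ (and trivially $q=0$) the two sides are equivalent.

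To close that induction you need one fact: the truth of any $\PL$-formula at $w_k$ under an assignment $b$ depends only on $b$ on $w_1,\dots,w_k$. This holds because atoms and connectives are local, $\wbox$ and $\sat$ only look back, $\bbox$ and $\bstar$ ignore $b$, and $\wnext$ at $w_k$ depends on $b$ only through $C_k(b)$. With this fact the step goes through:
\begin{itemize}
\item If $\fOne$ fails at $w_{m+1}$ under $\vNseg'$, it fails there under every $\vNseg''\in C_{m+1}(\vNseg')\subseteq C_m(\vNseg)$.
\item So each such $\vNseg''$ must hit $\fOne$ among $w_{m+2},\dots,w_{m+i}$.
\item The $q=1$ hypothesis at position $m+1$, with base $\vNseg'$, then yields $\wnext_1\chi_{i-1}$ at $w_{m+1}$.
\end{itemize}

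The same fact shows your second fallback is no restriction at all. Every formula of the language is already prefix-determined, and the counterexample above uses an atom.
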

\begin{proof}
Let us define recursively the formula $\fOne_{n}(\fOne)$ as follows
$$
\fOne_{1}(\fOne)= \fOne
$$
$$
\fOne_{n+1}(\fOne)=\fOne\lor \wnext _{1}\fOne_{n}(\fOne)
$$
Then, 
$$
\model, w_n \vDash_{\vNseg} \wnext_{q}^i \fOne \quad \text{iff} \quad \model, w_n \vDash_{\vNseg} \wnext_{q} \phi_{i}(\fOne).
$$
The proof is by induction on $i$.
If $i=1$, the proof is trivial.
Let us consider the inductive step.
Given the semantic Def.of $\wnext_{q}$ and $\wnext_{q}^{i}$, in order to prove that $\model, w_n \vDash_{\vNseg} \wnext_{q}^{i+1} \fOne$ iff $\model, w_n \vDash_{\vNseg} \wnext_{q} \fOne_{i+1}(\fOne)$, we just have to show that for every assignment $\vNseg '$, it belongs to $\{\vNseg' : \vNseg' \in \Pg \wedge \forall j(j\leq n \to \vNseg'(w_j)= \vNseg (w_j))\land \exists k _{1 \leq  k \leq i+1}. \; \model, w_{n+k} \vDash_{\vNseg'} \fOne\}$ iff it belongs to $\{\vNseg' : \vNseg' \in \Pg \wedge \forall j(j\leq n \to \vNseg'(w_j)= \vNseg (w_j))\land \model, w_{n+1} \vDash_{\vNseg'} \phi_{i+1}(\fOne)\}$.
To prove this, we just need to show that for every $\vNseg'$, $\exists k _{1 \leq  k \leq i+1}. \; \model, w_{n+k} \vDash_{\vNseg'}\fOne$ iff $\model, w_{n+1} \vDash_{\vNseg'} \fOne_{i+1}(\fOne)$.
Indeed, by Def.of $\fOne_{i+1}(\fOne)$, $\model, w_{n+1} \vDash_{\vNseg'} \fOne_{i+1}(\fOne)$ iff $\model, w_{n+1} \vDash_{\vNseg'} \fOne \lor \wnext _{1}\phi_{i}(\fOne)$ iff (by the clause for disjunction) $\model, w_{n+1} \vDash_{\vNseg'} \fOne $ or $ \model, w_{n+1} \vDash_{\vNseg'} \wnext_{1}\fOne_{i}(\fOne)$.
By inductive hypothesis, $\model, w_{n+1} \vDash_{\vNseg'} \wnext _{1}\phi_{i}(\fOne)$ iff $M, w_{n+1} \vDash_{v} \wnext ^{i}_{1}\fOne$, and so iff $\exists k _{1 \leq  k \leq i}. \; \model, w_{n+1+k} \vDash_{\vNseg'}\fOne$, which by obvious arithmetical consideration is identical to $\exists k _{2 \leq  k \leq i+1}. \; \model, w_{n+k} \vDash_{\vNseg'}\fOne$.
Hence, in conclusion, $\model, w_{n+1} \vDash_{\vNseg'} \phi_{i+1}(\fOne)$ iff $\model, w_{n+1} \vDash_{\vNseg'} \fOne$ or $\exists k _{2 \leq  k \leq i+1}. \; \model, w_{n+k} \vDash_{\vNseg'}\fOne$, which is the same as asking $\exists k _{1 \leq  k \leq i+1}. \; \model, w_{n+k} \vDash_{\vNseg'}\fOne$.
\end{proof}

\begin{notation}
    In the following, we will use $\wnext^i_{q} \fOne$ as a shorthand for $\wnext_{q} (\fOne \lor \wnext_{1}(\fOne \lor \wnext_{1}(\ldots \wnext_{1}(\fOne \lor \wnext_{1}\fOne))))$.
\end{notation}




\paragraph{Star operator $\bstar$.}
The $w$-independent star operator $\bstar_{q}$ expresses that, according to a given well-defined distribution $\Pg$, there is a world $w$ s.t. the argument formula $\fOne$ is satisfied there by at least $q$ of the assignments in $\Pg$.
This operator is independent from both $\vseg$ and the current world $w$ we are in.
Moreover, for $\at$ atomic (see Th.~\ref{theorem:AlternativeStar}): 
$$
\model, w \vDash_{\vseg} \star_{q} \at \quad \text{ iff } \quad \fr (\at) \ge q.
$$

\begin{ex}
    Let us assume that the coin is (empirically) unbiased, i.e.~that according to $\Pg'$, $\fr'(\at_{\head})=\frac{1}{2}$. Then, for any actual observation and at any time, we have probability $\frac{1}{2}$ (so also greater than or \emph{equal to} $\frac{1}{2}$) of obtaining $\head$,
    $$
    \model, w_m \vDash_{\vseg} \bstar_{\frac{1}{2}} \at_{\head} 
    $$
    for $\model=(\{w_1, \dots, w_m, \dots, w_n\}, \Rexp, \Pg',\vseg)$.
    Similarly, if we are dealing with a different $\Pg''$ such that the coin is (empirically) biased and returns head only $\frac{1}{3}$ of the times, i.e.~$\fr''(\at_{\head})=\frac{1}{3}$, the following is true for any time of any experiment
    $$
    \model, w_m \vDash_{\vNseg} \bstar_{q} \at_{\head}
    $$
    for any $q \leq \frac{1}{3}$ and where $\model'=(\{w_1, \dots, w_m, \dots, w_n\}, \Rexp, \Pg'',\vseg)$.
\end{ex}

By convention, we define operators to denote that a target set possesses a measure greater than or equal to a given probability. 
However, alternative relational constraints can be naturally integrated into the framework.

\begin{notation}
\label{notation:=}
    For any operator $\oplus \in\{\wbox,\bbox,\sat,\bstar,\wnext\}$, we let $\oplus _{<q}$  denote the strict inequality counterpart, where the measure of the set of interest is strictly less than $q$. For example
    $$
    \model, w_m \vDash_{\vNseg} \wbox_{<q}\fOne \quad \text{iff} \quad \frac{|\{w : w_m \Rexp w \; \wedge \; \model, w \vDash_{\vNseg} \fOne\}|}{m} < q.
    $$
    Similarly, we use $\oplus_{=q}$ to define the operator expressing that the set of interest has measure precisely $q$:
    $$
    \model, w_m \vDash_{\vNseg} \wbox_{=q}\fOne \quad \text{iff} \quad \frac{|\{w : w_m \Rexp w \; \wedge \; \model, w \vDash_{\vNseg} \fOne\}|}{m} = q.
    $$
\end{notation}

\noindent
All of these operators are definable in terms of the primitive ones (see Appendix~\ref{appendix}).

The operator $\oplus_{=q}$ behaves differently for white and black modalities.
In particular, with white modalities, the formula $\oplus_{=q}$ is satisfied by a unique probability value $q$.
The same does not hold for black modalities.
More formally:

\begin{theorem}
\label{Univocita=white}
    For every $\model = (W, R_E, \Pg , \vseg)$, assignment $\vNseg \in \Pg \cup \{\vseg \}$, 
    $\PL$-formula $\fOne$ and pair of probabilities $q,q'\in \mathbb{Q}_{[0,1]}$, if $\oplus \in \{\wbox, \sat, \wnext\}$
    $$
    \model, w \vDash_{\vNseg} \oplus_{=q}
 \fOne \; \land \; \model, w \vDash_{\vNseg} \oplus_{=q'}
 \fOne \quad \Longrightarrow \quad q = q'.
 $$
\end{theorem}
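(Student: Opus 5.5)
The plan is to exploit the fact that, for each $\oplus \in \{\wbox, \sat, \wnext\}$, the truth clause of $\oplus_{=q}\fOne$ (Notation~\ref{notation:=}) has the form ``$X = q$'', where $X$ is a rational number fixed by $\model$, $w$, $\vNseg$ and $\fOne$ alone. So I will show that $X$ does not depend on $q$, and then conclude by transitivity of equality: from $X = q$ and $X = q'$ we get $q = q'$.

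I will treat the three operators case by case, writing $w = w_m$.

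For $\wbox$, set
\[
X_\wbox = \frac{|\{w_l : l \leq m \wedge \model, w_l \vDash_{\vNseg} \fOne\}|}{m}.
\]
The numerator is the cardinality of a set determined by $\vNseg$ and the truth of $\fOne$ at the worlds $w_1,\dots,w_m$. Truth of $\fOne$ is defined by the inductive Definition~\ref{def:semantics} and does not involve $q$. The denominator $m$ is fixed by the world. Hence $X_\wbox$ is a single number.

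For $\sat$ the argument is the same, with denominator $n = |W|$ fixed by the model.

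For $\wnext$, $X_\wnext$ is the ratio of the number of assignments in $\Pg$ that agree with $\vNseg$ on $w_1,\dots,w_m$ and satisfy $\fOne$ at $w_{m+1}$, to the number of all assignments in $\Pg$ that agree with $\vNseg$ on $w_1,\dots,w_m$. Both sets are finite, because $W$ and $\At_\Omega$ are finite and hence so is $\Pg$. Their cardinalities depend only on $\Pg$, $\vNseg$, $m$ and $\fOne$, so $X_\wnext$ is again a single number.

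The contrast with the black modalities is the point of the statement. $\bbox$ and $\bstar$ contain an existential quantifier (over $\vNseg'\in\Pg$ or over $w'\in W$), so different witnesses can realise different values $q$. This is why they are excluded, and no uniqueness argument is needed for them.

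The main obstacle is well-definedness of the ratios, especially for $\wnext$. Its denominator could be $0$ if no assignment in $\Pg$ extends $\vNseg$'s prefix, for example when $\vNseg = \vseg \notin \Pg$ is incompatible with $\Pg$. It is also undefined at $w_n$, where there is no $w_{m+1}$. I would handle these cases by noting that the clause then defines no value at all. In that situation $\oplus_{=q}\fOne$ either fails for every $q$, so the implication holds vacuously, or, under any fixed convention for the undefined case, it still yields one determined value. Either way, once $X$ is a well-defined function of $(\model, w, \vNseg, \fOne)$, uniqueness follows immediately.
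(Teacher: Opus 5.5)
Your proposal is correct and follows the same route as the paper: for $\oplus\in\{\wbox,\sat,\wnext\}$, the truth clause of $\oplus_{=q}\fOne$ compares with $q$ a single fraction fixed by $\model$, $w$, $\vNseg$ and $\fOne$, so any two indices that satisfy it must coincide. Your treatment of the degenerate $\wnext$ cases (no compatible extension in $\Pg$, or $w_n$ having no successor) goes beyond the paper, which ignores them; there, uniqueness holds only for a convention that fixes a single value, as the appendix implicitly assumes by letting an incompatible prefix make $\wnext_q\fOne$ hold only for $q=0$, and not under any convention whatsoever.
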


\begin{proof}
    The proof follows trivially from the observation that for any $\vseg$, the truth condition for $\oplus_{=q} \fOne$ is defined in terms of a fraction that can have only one value.
    For example, the truth of a white-boxed formula in the world $w_m$ is determined by the \emph{unique} measure $\frac{|w_l : l \leq m \wedge \model, w_k \vDash_a \fOne\}|}{m}$ being equal to the given $q$.
\end{proof}

\noindent
This property fails to hold for the black modalities, as illustrated by the example below.

\begin{ex}
\label{ex:NonUnicityBlack=}
Consider a model s.t. $W=\{w_1,w_2,w_3\}$, $\At_\Omega = \{\at_{\head},\at_{\tail}\}$, and $\Pg$ according to which $\fr(\at_{\head})= \frac{2}{3}$ and $\fr(\at_{\tail})= \frac{1}{3}$.
For every assignment $\vNseg \in \Pg \cup \{\vseg \}$, we have $\model, w_2 \vDash_{\vNseg} \bbox_{=1}\at_{\head}$ and $\model, w_2 \vDash_{\vNseg} \bbox_{=\frac{1}{2}}\at_{\head}$.
Indeed, both $\vNseg $ s.t.~$\vNseg (\at_{\head}) = \{w_1,w_2\}$ and $\vNseg (\at_{\tail}) = \{w_3\}$, and $\vNseg' $ s.t.~$\vNseg' (\at_{\head}) = \{w_1,w_3\}$ and $\vNseg' (\at_{\tail}) = \{w_2\}$ are in $\Pg$.
\end{ex}

\noindent
Indeed, the truth value of $\bbox_{=q} \fOne$ is defined in terms of the \textit{existence} of an assignment s.t.~the given fraction has value $q$, and the truth value of $\bstar_{=q} \fOne$ is defined in terms of the \textit{existence} of a world s.t. the given fraction has value $q$.
Hence, multiple $q$s can make the formula true.

In order to extend the property in Th.~\ref{Univocita=white} to black modalities, we need to refer to the maximal value applicable to their index.

\begin{notation}
\label{notation:top}
For any operator $\oplus \in \{\wbox,\bbox, \sat, \bstar, \wnext\}$, we use $\oplus_{\overline{q}}$ to define the operator that is true only for the greatest index that makes true its counterpart:
    $$
    \model, w_m \vDash_{\vNseg} \oplus_{\overline{q}}\fOne \quad \Longleftrightarrow  \quad 
    \model, w_m \vDash_{\vNseg} \oplus_{q}\fOne \: \land \: (q' >q  \: \Rightarrow \: \model, w_m \nvDash_{\vNseg} \oplus_{q'}\fOne)
    $$
\end{notation}

\noindent
Also in this case, these operators are definable in terms of the primitive ones (see Appendix~\ref{appendix}).
For white modalities, this notion coincides extensionally with that expressing the precise measure of the set of interest:

\begin{theorem}
\label{theorem:Top=White}
   For every $\model = (W, R_E, \Pg , \vseg)$, world $w\in W$, assignment $\vNseg \in \Pg \cup \{\vseg \}$, 
   $\PL$-formula $\fOne$ and $q\in \mathbb{Q}_{[0,1]}$, if $\oplus \in \{\wbox, \sat, \wnext\}$,
    $$
    \model, w \vDash_{\vNseg} \oplus_{=q}
 \fOne \quad \Longleftrightarrow \quad \model, w \vDash_{\vNseg} \oplus_{\overline{q}}
 \fOne .
 $$ 
\end{theorem}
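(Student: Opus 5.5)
The plan is to reduce the statement to one observation. For each white modality $\oplus \in \{\wbox, \sat, \wnext\}$, the truth condition of $\oplus_q \fOne$ at $w$ under $\vNseg$ has the form $f \ge q$. Here $f$ is a single rational number determined by $\model$, $w$, $\vNseg$ and $\fOne$, and it does not depend on $q$. Concretely:
\begin{itemize}
\item for $\wbox$ at $w_m$, $f$ is $|\{w_l : l\leq m \wedge \model,w_l\vDash_\vNseg\fOne\}|/m$;
\item for $\sat$, $f$ is the same count divided by $n$;
\item for $\wnext$, $f$ is the ratio of compatible assignments in $\Pg$ that satisfy $\fOne$ at $w_{m+1}$ to all compatible assignments in $\Pg$.
\end{itemize}
Likewise, by Notation~\ref{notation:=}, $\oplus_{=q}\fOne$ holds iff $f = q$. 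This is the same uniqueness fact used in the proof of Th.~\ref{Univocita=white}: the value $f$ exists and is unique.

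Next I would unfold Notation~\ref{notation:top}. $\oplus_{\overline{q}}\fOne$ holds iff $f\ge q$ and, for every $q'>q$, $f < q'$.

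For the left-to-right direction, assume $f=q$. Then $f\ge q$ holds, and any $q'>q$ gives $f=q<q'$, so $\oplus_{q'}\fOne$ fails.

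For the right-to-left direction, assume $f\ge q$ and that $f<q'$ for all rational $q'>q$ in $\mathbb{Q}_{[0,1]}$. Suppose for contradiction that $f>q$. Since $f$ is itself rational and lies in $[0,1]$ (a ratio of a subset's size to a larger or equal size), I can take $q'=f$. Then $q'>q$ but $f\ge q'$, which is a contradiction. Hence $f=q$. Alternatively, any rational strictly between $q$ and $f$ would serve as $q'$, by density of $\mathbb{Q}$.

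The main obstacle is bookkeeping rather than depth. I must check that $f$ really lies in $\mathbb{Q}_{[0,1]}$ in each case, so that the witness $q'$ is an admissible index. For $\wbox$ and $\sat$ this holds because the numerator counts at most $m\le n$ worlds. For $\wnext$ there is a degenerate case: the set of compatible assignments may be empty, for instance when $\vNseg=\vseg\notin\Pg$ and its prefix is not extendable, or when $w_m=w_n$ and there is no successor world. In that case the ratio is undefined, and I would adopt the convention implicit in the paper (e.g. treating the formula as false for all $q$). Under such a convention both sides are false and the equivalence holds trivially. I would note this case explicitly and otherwise treat all three operators uniformly.
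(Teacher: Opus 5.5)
Your proposal is correct and is essentially the paper's argument. Both reduce $\oplus_q\fOne$ to $f\ge q$ for the single rational value $f$ of the defining fraction, obtain the left-to-right direction from the uniqueness of $f$, and for the converse use $f$ itself as the witness $q'>q$ (the paper argues by contraposition rather than contradiction).

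Your explicit treatment of the degenerate $\wnext$ case is extra care the paper omits. Its appendix implicitly sets the ratio to $0$ when no assignment in $\Pg$ agrees with $\vNseg$ up to $w_m$, so that $\wnext_q\fOne$ holds iff $q=0$. Under that convention your uniform argument applies with no special case.
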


\begin{proof}
    $(\Rightarrow)$
    From Def.~\ref{notation:=}, if $\model, w \vDash_{\vNseg} \oplus_{=q} \fOne$, then $\model, w \vDash_{\vNseg} \oplus_{q} \fOne$.
    Moreover, from Th.~\ref{Univocita=white}, if $q' > q $, then $\model, w \nvDash_{\vNseg} \oplus_{q'} \fOne$.
    Hence, from Def.~\ref{notation:top}, $\model, w \vDash_{\vNseg} \oplus_{\overline{q}}\fOne$.
    $(\Leftarrow)$. The proof is by contraposition:
    if $\model, w \nvDash_{\vNseg} \oplus_{=q} \fOne$ and $\model, w \nvDash_{\vNseg} \oplus_{q} \fOne$, then $\model, w \nvDash_{\vNseg} \oplus_{\overline{q}}\fOne$.
    If $\model, w \nvDash_{\vNseg} \oplus_{=q} \fOne$ and $\model, w \vDash_{\vNseg} \oplus_{q} \fOne$, then there is $q'> q $ s.t. $\model, w \vDash_{\vNseg} \oplus_{q'}\fOne$ ($q'$ being the value of the fraction used in the truth condition for $\oplus_{q} \fOne$).
    But, then, $\model, w \nvDash_{\vNseg} \oplus_{\overline{q}}\fOne$.
\end{proof}

\noindent
Moreover, in this case Th.~\ref{Univocita=white} can be generalized to black modalities:

\begin{theorem}
\label{UnivocitaMaxblack}
    For every $\model = (W, R_E, \Pg , \vseg)$, world $w\in W$, assignment $\vNseg \in \Pg \cup \{\vseg \}$, 
    $\PL$-formula $\fOne$ and pair $q,q'\in \mathbb{Q}_{[0,1]}$, if $\oplus \in \{\wbox,\bbox, \sat, \bstar, \wnext\}$,
    $$
    \model, w \vDash_{\vNseg} \oplus_{\overline{q}}
 \fOne \; \land \; \model, w \vDash_{\vNseg} \oplus_{\overline{q'}}
 \fOne \quad \Longrightarrow \quad q = q'.
 $$
\end{theorem}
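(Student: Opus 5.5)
The plan is a direct argument by contradiction from Notation~\ref{notation:top}, with no case analysis on the operator. Suppose $\model, w \vDash_{\vNseg} \oplus_{\overline{q}}\fOne$ and $\model, w \vDash_{\vNseg} \oplus_{\overline{q'}}\fOne$ but $q \neq q'$. Since $\mathbb{Q}_{[0,1]}$ is totally ordered, we may assume without loss of generality that $q < q'$.

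First, unfold the definition of $\oplus_{\overline{q'}}\fOne$. Its first conjunct gives $\model, w \vDash_{\vNseg} \oplus_{q'}\fOne$.

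Next, unfold the definition of $\oplus_{\overline{q}}\fOne$. Its second conjunct says that for every $q'' > q$ we have $\model, w \nvDash_{\vNseg} \oplus_{q''}\fOne$. Instantiating $q'' := q'$ gives $\model, w \nvDash_{\vNseg} \oplus_{q'}\fOne$, which contradicts the previous step. Hence $q = q'$.

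This argument is uniform in $\oplus \in \{\wbox,\bbox,\sat,\bstar,\wnext\}$. It uses only the definition of $\oplus_{\overline{q}}$, not the specific truth clauses of Definition~\ref{def:semantics}. In particular, it does not need Th.~\ref{Univocita=white} or Th.~\ref{theorem:Top=White}, so the existential character of the black modalities $\bbox$ and $\bstar$ (Example~\ref{ex:NonUnicityBlack=}) causes no difficulty.

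The only point needing care is reading the second conjunct of Notation~\ref{notation:top} correctly. The variable $q'$ there is universally quantified, so it can be instantiated with any rational greater than $q$, including the index of the other maximal formula. Once this is made explicit, nothing else is required.
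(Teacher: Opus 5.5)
Your proof is correct and follows the paper's own argument. The paper likewise derives the claim directly from Notation~\ref{notation:top} together with the total ordering of the rationals. You have simply made explicit the instantiation of the universally quantified index, which the paper's one-line proof leaves implicit.
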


\begin{proof}
    The proof 
    follows from Def.~\ref{notation:top} and the complete ordering of rational numbers.
\end{proof}

By Lemma~\ref{lemma:mu}, we can provide an alternative, natural Def.for $\bstar$ with propositional argument formulas.

\begin{lemma}\label{lemma:mu}
    For any 
    model defined by a well-defined $\Pg$ associated with $\Omega$, any propositional formula $\fOne$ and given fixed $w\in W:$
    $$
        \frac{|\{\vNseg \in \Pg \; : \; \model, w \vDash_{\vNseg} \fOne\}|}{|\Pg|} = \fr(\fOne).
    $$
\end{lemma}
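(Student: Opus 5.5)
The plan is to reduce the statement to a counting argument over $\Pg$, using the fact that $\Pg$ is closed under permutations of worlds. First, I will fix how $\fr$ extends to a propositional formula $\fOne$. Since atoms of $\At_\Omega$ are mutually exclusive and exhaustive at every world, whether $\fOne$ holds at $w$ under $\vNseg$ depends only on which atom $\vNseg$ assigns to $w$. More precisely, by an easy induction on $\fOne$ there is a set $S_\fOne \subseteq \At_\Omega$ such that, for every assignment $\vNseg$ and world $w$, $\model, w \vDash_{\vNseg} \fOne$ iff $w \in \vNseg(\at)$ for some $\at \in S_\fOne$. The clauses are:
\begin{itemize}
\item $S_\at = \{\at\}$ (for atoms outside $\At_\Omega$, take $S_\at=\emptyset$);
\item $S_{\neg\fOne} = \At_\Omega \setminus S_\fOne$;
\item $S_{\fOne\wedge\fTwo}=S_\fOne\cap S_\fTwo$;
\item $S_{\fOne\vee\fTwo}=S_\fOne\cup S_\fTwo$.
\end{itemize}
Accordingly I read $\fr(\fOne)$ as $\sum_{\at\in S_\fOne}\fr(\at)$, which is the only reading consistent with $\fr$ being a measure on events. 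The goal thus reduces, by additivity over the disjoint events $\{\vNseg : w\in\vNseg(\at)\}$ for $\at\in S_\fOne$, to the atomic case:
$$\frac{|\{\vNseg\in\Pg : w\in\vNseg(\at)\}|}{|\Pg|}=\fr(\at).$$

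For the atomic case I will use a symmetry argument. $\Pg$ consists of all assignments $\vNseg$ with $|\vNseg(\at)| = \fr(\at)\cdot n$ for each $\at\in\At_\Omega$. This condition is invariant under any permutation $\pi$ of $W$, acting by $(\pi\vNseg)(\at)=\pi[\vNseg(\at)]$, so $\pi$ is a bijection of $\Pg$ onto itself. Hence for any two worlds $w,w'$, a transposition swapping them gives a bijection between $\{\vNseg\in\Pg: w\in\vNseg(\at)\}$ and $\{\vNseg\in\Pg: w'\in\vNseg(\at)\}$, so the count $N_\at$ of the former does not depend on $w$.

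Then I double count the pairs $(\vNseg,w)$ with $w\in\vNseg(\at)$. Summing over worlds gives $n\cdot N_\at$. Summing over assignments gives $|\Pg|\cdot \fr(\at)\cdot n$, since each $\vNseg\in\Pg$ contributes exactly $\fr(\at)n$ worlds. Equating yields $N_\at/|\Pg|=\fr(\at)$. Alternatively one can count directly with multinomial coefficients, $N_\at = \binom{n-1}{k_\at-1,\ldots}$ versus $|\Pg|=\binom{n}{k_1,\ldots,k_r}$, whose ratio is $k_\at/n$; the double counting avoids this bookkeeping.

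The main obstacle is foundational rather than computational. I must make precise that $\Pg\neq\emptyset$, which follows from well-definedness: each $\fr(\at)n$ is a natural number and these sum to $n$. Otherwise the ratio is undefined. I also need to justify the interpretation of $\fr$ on compound formulas via $S_\fOne$, since the paper only defines $\fr$ on atoms. I will state that interpretation explicitly and note that the negation clause relies on the exhaustiveness built into Def.~\ref{def:Assignments}.
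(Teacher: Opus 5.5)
Your argument is correct, and it takes a different route from the paper's.

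\textbf{The paper's route.} The paper proves the atomic case by explicit counting. It orders the atoms so that $\at_k$ comes last and writes $|\Pg|$ as a product of binomial coefficients. It writes the number of assignments in $\Pg$ placing $\at_k$ at $w$ as the same product with $n$ replaced by $n-1$, and telescopes the quotient to $\fr(\at_k)$. This is exactly the multinomial ratio you mention as an alternative. It then treats compound formulas by structural induction on a CNF: complement for negation, additivity for clauses, and a conjunction case that is declared ``more convoluted'' and not carried out.

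\textbf{Your route.} You (i) collapse every propositional formula to a set $S_\fOne\subseteq\At_\Omega$, using that each assignment places exactly one atom of $\At_\Omega$ at each world. Then $\{\vNseg\in\Pg : \model,w\vDash_{\vNseg}\fOne\}$ is the disjoint union of the sets $\{\vNseg\in\Pg : w\in\vNseg(\at)\}$ for $\at\in S_\fOne$, and $\fr(\fOne)=\sum_{\at\in S_\fOne}\fr(\at)$. You (ii) obtain the atomic case from the permutation invariance of $\Pg$ together with double counting of the pairs $(\vNseg,w)$ with $w\in\vNseg(\at)$.

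\textbf{Comparison.} Your reduction is the more robust of the two. The paper's clause step adds up $\fr(L_i)$ over literals that are merely distinct, which is justified only when they are all positive atoms. For $\neg\at\vee\at'$ with $\at\neq\at'$, the event of $\at'$ is contained in that of $\neg\at$, so the sum overcounts. Your normal form handles negation, conjunction and disjunction uniformly. It also makes explicit how $\fr$ extends to compound formulas and that $\Pg\neq\emptyset$ is needed for the ratio to make sense. Your symmetry argument shows directly why the choice of $w$ is irrelevant. What the paper's computation buys is an explicit closed form for $|\Pg|$ and for counts of assignments with prescribed values on some worlds. The paper reuses that bookkeeping later for $P^{\vseg}_{w_m}$ and for the value of $q$ in $\wnext_{=q}\at$, whereas your double counting gets the ratio without ever computing these numbers.
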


\begin{proof}
The proof is by induction on the structure of $\fOne$:
\begin{itemize}
\itemsep0em
\item[] \emph{Base case.}
For $\at_k \in \At_\Omega$, and fixed $w\in W$,
$$
        \frac{|\{\vNseg \in \Pg \; : \; \model, w \vDash_{\vNseg} \at_k\}|}{|\Pg|} = \fr(\at_k).
$$
Assume that $|\At_{\Omega}|= m$.
It is convenient to reorder the atoms in $\At_\Omega$ as $p_{l_1}, \ldots , p_{l_m}$, with $p_{l_m}= p_k$.
In this way, $p_k$ (that is $p_{l_m}$) is intuitively the last atom assigned if we start from $p_{l_1}$ and move on.
As for the number of assignments in $\Pg$, we have:\footnote{With the convention that for every formula $\fOne$, $\sum ^{0}_{j=1} \fOne = 0$.
}
$$
|\Pg| = \prod ^{m-1}_{i=1}{{n - \sum ^{i-1}_{j=1}\big(\fr (\at_{l_j})\times n\big)}\choose{\fr (\at_{l_i})\times n}}
$$
Notice that we do not consider the case $i=m$, since the corresponding binomial coefficient gives only one selection (which is irrelevant in the context of a product).\footnote{Intuitively, when the assignments of all atoms $p_{l_1},\ldots , p_{l_{m-1}}$ are established, there is only one possible alternative for the assignment of $p_{l_m}$.}
Let us now focus on the number of assignments in $\Pg$ that assign $p_k$ to $w$.
If $\fr (p_k) = 0$, there are no assignments in $\Pg$ that assign $p_k$ to $w$, so the theorem holds trivially.
On the other hand, if $\fr (p_k) \neq 0$, we have:
$$
|\{\vNseg \in \Pg \; : \; \model, w \vDash_{\vNseg} \at_{k}\}| = \prod ^{m-1}_{i=1}{{n -1 - \sum ^{i-1}_{j=1}\big(\fr (p_{l_j})\times n\big)}\choose{\fr (\at_{l_i})\times n}}
$$
Indeed, 
since the assignments need to associate $p_k$ to $w$ (a world in the set $W$ belonging to $\model$), they have one world less to distribute all other atoms in $\At_\Omega$
.
For $p_k$ it is more complicated because $w$ should be counted in $\fr (\at_k)\times n$.
However, since $p_k = p_{l_m}$ is the last atom we assign, it does not occur in the binomial.
%
Hence, by developing the binomials, we can divide the number of assignments in $\Pg$ that assign $\at_k$ to $w$:
$$
\prod ^{m-1}_{i=1}{\frac{(n -1 - \sum ^{i-1}_{j=1}(\fr (p_{l_j})\times n))!}{(\fr (\at_{l_i})\times n)!(n -1 - \sum ^{i-1}_{j=1}(\fr (\at_{l_j})\times n)-\fr (\at_{l_i})\times n)!}}
$$
\noindent
for that of all the assignments in $\Pg$:
\end{itemize}
\scriptsize
$$
\prod ^{m-1}_{i=1}{\frac{(n - \sum ^{i-1}_{j=1}(\fr (p_{l_j})\times n))\times (n -1- \sum ^{i-1}_{j=1}(\fr (p_{l_{l_j}})\times n))!}{(\fr (p_{l_i})\times n)!(n - \sum ^{i-1}_{j=1}(\fr (p_{l_j})\times n)-\fr (p_{l_i})\times n)(n -1 - \sum ^{i-1}_{j=1}(\fr (p_{l_j})\times n)-\fr (p_{l_i})\times n)!}}
$$
\normalsize
\begin{itemize}
\itemsep0em
\item[]
\noindent
Now we can simplify and obtain:

$$
\prod ^{m-1}_{i=1}{\frac{n - \sum ^{i-1}_{j=1}(\fr (\at_{l_j})\times n)-\fr (\at_{l_i})\times n}{n - \sum ^{i-1}_{j=1}(\fr (\at_{l_j})\times n)}}
$$

\noindent
In conclusion, by developing this product:

$$
\frac{n - \sum ^{m-1}_{j=1}(\fr (\at_{l_j})\times n)}{n}
$$

\noindent
Indeed, notice that, for every $i\leq m-2$, the $i$-th numerator of the series simplifies with the $i+1$-th denominator.
Since there are only $m$ atoms in $\At_\Omega$ and $\at_k = \at_{l_m}$, we have:

$$
n - \sum ^{m-1}_{j=1}(\fr (\at_{l_j})\times n) = \fr (\at_k)\times n
$$

\noindent
\normalsize
and so the proof is complete.
\item[] \emph{Inductive case}
W.l.o.g.~we can assume that $\fOne$ is in CNF. 
We have only a few cases to consider:
\begin{itemize}
    \itemsep0em
    \item[] 
    \begin{sloppypar}
    Let $\fOne = \neg \fTwo$.
    By IH, 
    $\frac{|\{a \in \Pg \; : \; \model, w \vDash_a \fTwo\}|}{|\Pg|} = \fr(\fTwo)$.
    Then, by Def.~\ref{def:semantics} and being $\fr$ a well-defined measure, $\frac{|\{a \in \Pg \; : \; \model, w \vDash_a \fOne\}|}{|\Pg|} = \frac{|\{a \in \Pg \; : \; \model, w \vDash_a \neg \fTwo\}|}{|\Pg|} = 
    \frac{|\Pg|}-|\{a \in \Pg \; : \; \model, w \vDash_a  \fTwo\}|{|\Pg|} = 1 - \fr(\fTwo) = \fr(\neg \fTwo)$
    \end{sloppypar}
    \item[] Let $\fOne = \bigvee^{n}_{i=1}L_i$, where, for any $i\in\{1,\dots, n\}$, $L_i$ is either $\at$ or $\neg \at$. W.l.o.g. we can assume that no repetitions occur in it.
    There are two possible cases.
    If there are two $j,k\in \{1,\dots, n\}$ s.t.~for an atom $\at$, $L_j=\at$ and $L_k = \neg \at$, then clearly $|\{a \in \Pg : \model, w \vDash_a \fOne\}|= |\{a \in \Pg : \model, w \vDash_a \bigvee^{n'}_{i=0}L_i \vee \neg \at \vee \at\}|= 
    |\Pg|$ so that $\frac{|\{a \in \Pg : \model, w \vDash_a \fOne\}|}{\Pg}=1$. Meanwhile, by Def.~\ref{def:semantics} and for the properties of $\fr$, $\fr(\fOne)= \fr\big(\bigvee^{n'}_{i=0}L_i \vee \neg \at \vee \at\big)= \fr(\top) = 1$. 
    Otherwise, $\fOne=\bigvee^n_{i=0}L_i$, where all literals are different atoms.
    We conclude by IH, Def.~\ref{def:semantics} and for the properties of $\fr$,
   $\frac{|\{a \in \Pg : \model, w \vDash_a \fOne \}|}{|\Pg|} =\frac{|\{a \in \Pg : \model, w \vDash_a \bigvee^n_{i=1} L_i \}|} {|\Pg|} = \sum^n_{i=1} \frac{|\{a \in \Pg : \model, w \vDash_a L_i \}|} {|\Pg|} \stackrel{IH}{=} \sum^n_{i=1} \fr(L_i)= \fr(\fOne)$.
   \item[] Let $\fOne=\bigwedge^n_{i=1} D_i$. This case is more convoluted but is equally proved relying on IH and on the properties of well-defined measures.
\end{itemize}
\end{itemize}
\end{proof}

\noindent
The following theorem easily follows from Lemma~\ref{lemma:mu}.
\begin{theorem}
\label{theorem:AlternativeStar}
It is now possible to provide a compact semantic Def.for $\bstar$, which is equivalent to the one from Def.~\ref{def:semantics}. Given a model $\model = (W,\Rexp,\Pg,\vseg)$, for any $\vNseg\in \Pg \cup \{\vseg\}$, any propositional formula $\fOne$ and any $w\in W$:
\begin{align*}
    \model , w \vDash _{\vNseg} \bstar_{q} \fOne \quad &\text{ iff } \quad \fr(\fOne) \ge q
    \\
    \model , w \vDash _{\vNseg} \bstar_{=q} \fOne \quad &\text{ iff } \quad \fr(\fOne) = q
\end{align*}
\end{theorem}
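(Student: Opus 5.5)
Both equivalences come from unfolding the clause for $\bstar$ in Def.~\ref{def:semantics} and then applying Lemma~\ref{lemma:mu}.

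For the first equivalence, fix $\model=(W,\Rexp,\Pg,\vseg)$, an assignment $\vNseg\in\Pg\cup\{\vseg\}$, a propositional $\fOne$ and a world $w\in W$. By Def.~\ref{def:semantics}, $\model,w\vDash_{\vNseg}\bstar_q\fOne$ holds iff there is some $w'\in W$ with
\[
\frac{|\{\vNseg'\in\Pg : \model,w'\vDash_{\vNseg'}\fOne\}|}{|\Pg|}\ge q .
\]
The clause refers neither to $w$ nor to $\vNseg$, so the left-hand side is independent of both. Lemma~\ref{lemma:mu} applies to every fixed world $w'$, since $\fOne$ is propositional, and rewrites the ratio as $\fr(\fOne)$. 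This value does not depend on $w'$.

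The existential quantifier then collapses. Because $W$ is nonempty, ``there is $w'$ with $\fr(\fOne)\ge q$'' is simply ``$\fr(\fOne)\ge q$''. This gives the first equivalence.

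For the second equivalence, I read $\bstar_{=q}$ through Notation~\ref{notation:=}: it is the clause for $\bstar$ with $\ge$ replaced by $=$. So $\model,w\vDash_{\vNseg}\bstar_{=q}\fOne$ iff some $w'$ has ratio exactly $q$. Lemma~\ref{lemma:mu} again makes every ratio equal to $\fr(\fOne)$, and nonemptiness of $W$ gives $\fr(\fOne)=q$.

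The only delicate point is the scope of Lemma~\ref{lemma:mu}. I take it to hold for all propositional formulas, with $\fr$ extended to compound formulas as the measure of the corresponding event. Under that reading, the uniformity of the ratio across worlds is exactly what the lemma provides, and nothing beyond it is needed. The argument also relies on $W$ being nonempty, which Def.~\ref{def:PrL} guarantees.
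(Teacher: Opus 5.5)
Your proposal is correct and is essentially the paper's argument. The paper only remarks that the theorem follows from Lemma~\ref{lemma:mu}; your unfolding of the $\bstar$ clause, using that lemma to make the ratio equal to $\fr(\fOne)$ at every world $w'$ so that the existential collapses for both $\bstar_q$ and $\bstar_{=q}$, is precisely that step written out (nonemptiness of $W$ is already witnessed by the given $w$, and the lemma tacitly presupposes $\Pg\neq\emptyset$, as the division by $|\Pg|$ in the semantics does).
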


\noindent
Intuitively, Th.~\ref{theorem:AlternativeStar} proves that, given a propositional formula $\at$, the following procedures give the same result:
\begin{itemize}
    \item counting the assignments in $\Pg$ that satisfy $\at$ in a given world $w_m$ (notice that the choice of the world is irrelevant for the satisfaction of atoms, since we assume that the events in $\Omega$ are mutually independent);
    \item counting the worlds in $W$ satisfying $\at$, given an assignment $\vNseg \in \Pg$ (notice that the choice of the assignment is irrelevant for the satisfaction of atoms, since all assignments in $\Pg$ are compatible with frequency $\fr (\at)$).
\end{itemize}
This observation allows us to relate $\bbox$ and $\bstar$ (see Th.~\ref{prop:StarBBoxVerification}).

The following example shows that neither Lemma~\ref{lemma:mu} nor Th.~\ref{theorem:AlternativeStar} generalize to modal argument formulas.

\begin{ex}
\label{ex:StarFreq}
    Consider a structure s.t.~$W=\{w_1,w_2,w_3\}$, $\At_\Omega = \{\at_{\head},\at_{\tail}\}$, and $\Pg$ with $\fr(\at_{\head})= \frac{2}{3}$ and $\fr(\at_{\tail})= \frac{1}{3}$.
    For any $\vseg$ allowing to define $\model$ and every assignment $\vNseg ' \in \Pg$, we have $\model, w_3 \vDash_{\vNseg'} \bbox_{=\frac{1}{3}}\at_{\tail}$ and so, for every $\vNseg \in \Pg \cup \{\vseg\}$, it holds that $\model, w_3 \vDash_{\vNseg} \bstar _{1} \bbox_{=\frac{1}{3}}\at_{\tail}$.
    However, for every assignment $\vNseg ''\in \Pg$, $\model, w_1 \nvDash_{\vNseg ''} \bbox_{=\frac{1}{3}}\at_{\tail}$ and $\model, w_2 \nvDash_{\vNseg''} \bbox_{=\frac{1}{3}}\at_{\tail}$.
    Hence, for every assignment $\vNseg ''\in \Pg$, the relative frequency of worlds that satisfy the formula $\bbox_{=\frac{1}{3}}\at_{\tail}$ in $W$ is $\frac{1}{3}$, and so for every $\vNseg \in \Pg \cup \{\vseg\}$, we have $\model, w_3 \nvDash_{\vNseg} \bbox _{1} \bbox_{=\frac{1}{3}}\at_{\tail}$, while $\model, w_3 \vDash_{\vNseg} \bstar _{1} \bbox_{=\frac{1}{3}}\at_{\tail}$.
\end{ex}

\noindent
On the other hand, for propositional formulas, Theorems~\ref{Univocita=white} and \ref{theorem:Top=White} can be generalized to $\bstar_{=q}$:

\begin{theorem}
\label{theorem:=TopStarProp}
    For every $\model = (W, R_E, \Pg , \vseg)$, assignment $\vNseg \in \Pg \cup \{\vseg \}$, propositional formula $\fOne$ and pair of probabilities $q,q'\in \mathbb{Q}_{[0,1]}$,
    \begin{align*}
    \model, w \vDash_{\vNseg} \bstar_{=q}
 \fOne \; \land \; \model, w \vDash_{\vNseg} \bstar_{=q'}
 \fOne \quad &\; \Longrightarrow \quad q = q' \\
 \model, w \vDash_{\vNseg} \bstar_{=q}
 \fOne \quad &\Longleftrightarrow \quad \model, w \vDash_{\vNseg} \bstar_{\overline{q}}
 \fOne
\end{align*}
\end{theorem}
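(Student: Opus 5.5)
The plan is to reduce both claims to Theorem~\ref{theorem:AlternativeStar}, which for propositional $\fOne$ turns the $\bstar$ clause into a condition on the single number $\fr(\fOne)$. The point is that, for propositional arguments, $\bstar$ behaves like a white modality: the existential over worlds in its truth condition ranges over a quantity that does not depend on the world.

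\emph{Uniqueness.} Assume $\model, w \vDash_{\vNseg} \bstar_{=q}\fOne$ and $\model, w \vDash_{\vNseg} \bstar_{=q'}\fOne$. By the second clause of Theorem~\ref{theorem:AlternativeStar}, the first gives $\fr(\fOne)=q$ and the second gives $\fr(\fOne)=q'$. Since $\fr(\fOne)$ is a fixed value determined by $\Pg$ alone, $q=q'$.

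It is worth stating why the existential ``there is a $w'$'' in the $\bstar$ clause causes no trouble here. By Lemma~\ref{lemma:mu}, the fraction $|\{\vNseg' \in \Pg : \model, w' \vDash_{\vNseg'} \fOne\}|/|\Pg|$ equals $\fr(\fOne)$ for every $w'$. So the existential witnesses only one value, in contrast with Example~\ref{ex:StarFreq}, where the argument was modal.

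\emph{Equivalence with $\bstar_{\overline{q}}$.}

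For $(\Rightarrow)$: from $\bstar_{=q}\fOne$ we get $\fr(\fOne)=q$, hence $\fr(\fOne)\ge q$, so $\bstar_q\fOne$ holds by the first clause of Theorem~\ref{theorem:AlternativeStar}. For any $q'>q$ we have $\fr(\fOne)=q<q'$, so $\bstar_{q'}\fOne$ fails, again by Theorem~\ref{theorem:AlternativeStar}. By Notation~\ref{notation:top}, $\bstar_{\overline{q}}\fOne$ holds.

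For $(\Leftarrow)$: assume $\bstar_{\overline{q}}\fOne$. Then $\bstar_q\fOne$ holds, so $\fr(\fOne)\ge q$. Suppose for contradiction that $\fr(\fOne)>q$. Set $q'=\fr(\fOne)$, which is rational (this point is addressed below). Then $q'>q$ and $\fr(\fOne)\ge q'$, so $\bstar_{q'}\fOne$ holds, contradicting maximality. Hence $\fr(\fOne)=q$, and $\bstar_{=q}\fOne$ follows from the second clause of Theorem~\ref{theorem:AlternativeStar}.

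This argument mirrors the proof of Theorem~\ref{theorem:Top=White}, with $\fr(\fOne)$ playing the role of the unique fraction.

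\emph{Main obstacle.} The only delicate step is ensuring that $q'=\fr(\fOne)$ is an admissible index in $\mathbb{Q}_{[0,1]}$. I would argue this directly from the definitions: by Lemma~\ref{lemma:mu}, $\fr(\fOne)$ is a ratio of cardinalities of finite sets of assignments, since $W$ is finite and therefore $\Pg$ is finite. It is thus rational and lies in $[0,1]$.

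Beyond this, I would make sure to invoke Theorem~\ref{theorem:AlternativeStar} only for propositional $\fOne$, since Example~\ref{ex:StarFreq} shows that the result fails for modal arguments.
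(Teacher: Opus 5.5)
Your proof is correct and takes the same route as the paper. The paper's proof says only that both claims follow from Th.~\ref{theorem:AlternativeStar}, and you spell out that reduction: uniqueness because $\fr(\fOne)$ is a single value, and the $\overline{q}$-equivalence by the same argument as Th.~\ref{theorem:Top=White}. Your care that $\fr(\fOne)$ be an admissible rational index is sound but not strictly needed: by density of $\mathbb{Q}$ you could pick a rational $q'$ strictly between $q$ and $\fr(\fOne)$.
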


\begin{proof}
    Both results follow trivially from Th.~\ref{theorem:AlternativeStar}. 
\end{proof}

Finally, relying on Def.~\ref{def:semantics}, we can easily characterize validity in a model, validity in a structure and validity \textit{simpliciter}. 

\begin{definition}[Validity in a model, in a structure, and \textit{simpliciter}]\label{df:validity}
A $\PL$-formula $\fOne$ is said to be:
\begin{itemize}
\item \emph{valid in a model $\model$} iff $\model = (W, \Rexp, \Pg, \vseg)$ and for every $w\in W$ it holds that $\model, w \vDash_{\vseg} \fOne$;
\item \emph{valid in a structure $\str$} iff $\str = (W, \Rexp, \Pg) $ and for every assignment $\vseg$, $\fOne$ is valid in the model $\model =(W, \Rexp, \Pg, \vseg)$ obtained adding $\vseg$ to $\str$;
\item \emph{valid (\text{simpliciter})} iff for every structure $\str = (W, \Rexp, \Pg)$, $\fOne$ is valid in it.
\end{itemize}
We use $\model \vDash_{\vseg} \fOne$, $\str \vDash \fOne$ and $\vDash \fOne$ to denote, respectively, validity in a model, in a structure and \textit{simpliciter}.
\end{definition}


\section{Expressing properties of series in $\PL$}\label{sec:properties}

In this section, we examine the expressive power of $\PL$, demonstrating that it is robust enough to formalize the properties from our guiding example (Sec.~\ref{sec:guide}). Our logic is able to capture both the frequency of outcomes and their associated properties.

\subsection{
Checking the frequency of an outcome in a series}

We start by reconsidering Problem~\ref{probwhite}, demonstrating that our logic provides a natural encoding for it.

\probwhite*

\noindent
First, we consider a special case, the one of Ex.~\ref{exwhite} with output ($HHT$).
This can be expressed by the simple formula
$\wbox_{\frac{2}{3}}\at_{\head}$, which is clearly satisfied in the third world $w_3$ of the model $\mathbb{M} = (\{w_1, w_2, w_3, w_4\}, \Rexp, \Pg, \vseg)$.
%
More generally, we can express the (partial) relative frequency of every outcome (simple or complex) in any world or compare it with any given threshold.

\begin{theorem}
\label{prop:frequencyWBox}
    For any model $\model = (W, \Rexp, \Pg,\vseg)$, with $| W| =n$, any world $w_m\in W$, and any $q\in \mathbb{Q}_{[0,1]}$:
    \begin{itemize}
        \itemsep0em
        \item[i.] $\model, w_m \vDash_\vseg \wbox_q \fOne$ iff $\mu_{m}(\fOne)\ge q$, being $\mu_m$ the \emph{relative frequency} of the outcome corresponding to $\fOne$ in the series $\{w_1, \dots, w_m\}$;
        \item[ii.] $\model, w_m \vDash_\vseg \wbox_{=q} \fOne$ iff $\mu_{m}(\fOne)= q$, being $\mu_m$ the \emph{relative frequency} of the outcome corresponding to $\fOne$ in the series $\{w_1, \dots, w_m\}$.
    \end{itemize}
\end{theorem}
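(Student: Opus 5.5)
The plan is to unfold the semantic clauses of Definition~\ref{def:semantics} and Notation~\ref{notation:=}. The only real work is to fix what ``relative frequency of the outcome corresponding to $\fOne$ in $\{w_1,\dots,w_m\}$'' means for an arbitrary, possibly modal, formula $\fOne$. I will take it to mean
$$\mu_m(\fOne)=\frac{|\{w_l : l\leq m \wedge \model, w_l \vDash_{\vseg}\fOne\}|}{m}.$$
This is the natural reading. The outcome corresponding to $\fOne$ is the event consisting of those executions (worlds) of the observed series $\vseg$ at which $\fOne$ holds, and its relative frequency among the first $m$ executions is the number of its occurrences divided by the number $m$ of trials.

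As a sanity step, I will check that this agrees with the preliminary notion when $\fOne$ is an atom $\at\in\At_\Omega$. By Definition~\ref{def:Assignments}, each world carries exactly one atom. So $\{w_l : l\leq m,\ w_l\in\vseg(\at)\}$ is exactly the set of the first $m$ trials whose single outcome is $\at$, and its cardinality divided by $m$ is the relative frequency in the sense of the Preliminaries. For Boolean combinations, the clauses for $\neg,\wedge,\vee$ send complements, intersections and unions of events to the corresponding sets of worlds, so the same identification applies. For modal $\fOne$, the ``outcome'' is by definition the set of worlds satisfying $\fOne$, and nothing more is needed.

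With this identification, item (i) is immediate. By the clause for $\wbox_q$, $\model, w_m\vDash_\vseg\wbox_q\fOne$ holds iff the displayed ratio is $\ge q$, that is, iff $\mu_m(\fOne)\ge q$. Item (ii) follows in the same way from the clause for $\wbox_{=q}$ in Notation~\ref{notation:=}, which replaces $\ge$ by $=$. Alternatively, it follows from (i) together with Theorems~\ref{Univocita=white} and \ref{theorem:Top=White}: $\wbox_{=q}\fOne$ holds iff $q$ is the greatest index with $\wbox_q\fOne$ true, and by (i) that greatest index is exactly $\mu_m(\fOne)$.

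The only obstacle I expect is notational. Notation~\ref{notation:=} writes the set as $\{w : w_m\Rexp w \wedge \dots\}$, whereas Definition~\ref{def:semantics} writes $\{w_l : l\leq m\wedge\dots\}$. I would first note that, under the labelling convention, these describe the same initial segment $\{w_1,\dots,w_m\}$, taking the intended direction of $\Rexp$ as ``precedes or equals $w_m$''. Once that is settled, both items are one-line rewritings.
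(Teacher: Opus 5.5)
Your proposal is correct and takes the same approach as the paper. The paper's proof is a one-line unfolding of the clause for $\wbox_q$ in Definition~\ref{def:semantics}, and of Notation~\ref{notation:=} for $\wbox_{=q}$, once $\mu_m(\fOne)$ is read as the proportion of $w_1,\dots,w_m$ at which $\fOne$ holds under $\vseg$; your explicit choice of that reading and your remark on the direction of $\Rexp$ make precise what the paper leaves implicit.
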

\begin{proof}[Proof Sketch]
    Straightforward application of Def.~\ref{def:semantics}.
\end{proof}

\noindent
Notice that this property is entirely $\Pg$-independent.

\subsection{
Checking partial frequencies compatible with the target one}

We consider Problem~\ref{probblack} asking to check 
which partial frequencies of outcome are compatible with the target frequency $\Pg$ at a given time.

\probblack*

\noindent
An instance of this problem is expressed in Ex.~\ref{exblack}, stating that given four coin tosses s.t.~the frequency of $\head$ is $\frac{1}{2}$, it is always true that at the third toss the observed frequency of $\head$ could be either $\frac{1}{3}$ or $\frac{2}{3}$.
This can be expressed by the formulas $\bbox_{\frac{1}{3}} \at_{\head}$ and $\bbox_{\frac{2}{3}} \at_{\head}$, which are true in $w_3$ for any model $\model=(\{w_1, w_2, w_3, w_4\}, \Rexp, \Pg, \vseg)$, s.t.~according to $\Pg$ $\fr(\at_\head)=\frac{1}{2}$.

Notice that, while $\bbox$ is explicitly designed to express that the frequency of a formula is compatible with $\Pg$, it does not automatically express compatibility of frequencies of \textit{more than one formula} with $\Pg$.
Indeed, given a model $\model$, a formula $\bbox _q\fOne$ is satisfied in a world $w_m\in W$ iff there is $\vNseg \in \Pg$ s.t. it satisfies $\fOne$ in at least $q \times m$ worlds preceding $w_m$.
However, the formulas $\bbox _q\fOne $ and $ \bbox _{q'}\fOne'$ can both be satisfied in a world $w_m\in W$ (or equivalently $\bbox _q\fOne \land \bbox _{q'}\fOne'$ can be satisfied there) even if no $\vNseg \in \Pg$ satisfies $\fOne$ in at least $q \times m$ worlds preceding $w_m$ and $\fOne'$ in at least $q'\times q'$ worlds preceding $w_m$.
Consider the following example:

\begin{ex}
\label{ex:BBboxCOmpatibi}
\begin{sloppypar}
    Consider a model $\model$ of coin tosses 
    so defined that $W=\{w_1 , w_2,w_3,w_4\}$ and, according to $\Pg$, $\fr (\at_{\head})=\fr (\at_{\tail}) = \frac{1}{2}$.
    Regardless of the assignment $\vseg$, $\model, w_2 \vDash_{\vseg} \bbox_{1}\at_{\head}$ (consider the assignment $\vNseg_1 \in \Pg $ s.t.~$\vNseg_1(\at_{\head})= \{w_1,w_2\}$ and $\vNseg_1(\at_{\tail})= \{w_3,w_4\}$). 
    Moreover, $\model, w_2 \vDash_{\vseg} \bbox_{1}\at_{\tail}$ (consider the assignment $\vNseg_2 \in \Pg $ s.t.~$\vNseg_2(\at_{\tail})= \{w_1,w_2\}$ and $\vNseg_2(\at_{\head})= \{w_3,w_4\}$).
    Hence, $\model, w_2 \vDash_{\vseg} \bbox_{1}\at_{\head}\land \bbox_{1}\at_{\tail}$.
    However, there is clearly no assignment $\vNseg \in \Pg$ s.t.~all worlds preceding $w_2$ satisfy both $\at_{\head}$ and $\at_{\tail}$.
\end{sloppypar}
\end{ex}

\noindent
The use of the black-box operator alone is not sufficient to express the existence of an assignment compatible with $\Pg$ and the frequency determined by argument formulas. 
While the verification of the sentence $\wbox _{q}\fOne \land \wbox _{q'}\fOne'$ in a world $w_m$ and according to an assignment $\vseg$ expresses that, according to the observation $\vseg$, $q \times m$ worlds accessible from $w_m$ verify $\fOne$ and $q' \times m$ worlds accessible from $w_m$ verify $\fOne'$, the verification of the sentence $\bbox _{q}\fOne \land \bbox _{q'}\fOne'$ in $w_m$ 
\textit{does not} say that there is a $\vNseg \in \Pg$ s.t.~$q \times m$ worlds accessible from $w_m$ verify $\fOne$ and $q' \times m$ worlds accessible from $w_m$ verify $\fOne'$ according to $\vNseg$.
The verification of this formula only ensures the weaker condition that there is a $\vNseg \in \Pg$ such that $q \times m$ worlds accessible from $w_m$ verify $\fOne$ and there is a, possibly different, $\vNseg ' \in \Pg$ s.t.~$q' \times m$ worlds accessible from $w_m$ verify $\fOne'$ according to $\vNseg'$.
In order to express compatibility of frequencies of \textit{more than one formula} with $\Pg$, we need to take into account nested modalities of different kinds:
%

\begin{ex}
\label{ex:ConjunctionOfBbox}
\begin{sloppypar}
    Consider the model described in Ex.~\ref{ex:BBboxCOmpatibi}.
     Even though $\model , w_1 \vDash_{\vseg} \bbox_{1/2}\at_{\head} \land \bbox_{1/2}\at_{\tail}$, clearly there can be no assignment (and so no assignment in $\Pg$) that gives $\at_{\head}$ at least half of the times in the worlds accessible from $w_1$ and $\at_{\tail}$ at least half of the times in the worlds accessible from $w_1$.
    This fact can be expressed as follows, by relying on nested modalities:
    $$
    \model , w_1 \nvDash_{\vseg} \bbox_{\frac{1}{2}}(\wbox_{\frac{1}{2}}\at_{\head} \land \wbox_{\frac{1}{2}}\at_{\tail})
    $$
    \noindent
    Notice that the order of the modalities is relevant, since $\model , w_1 \vDash_{\vseg} \wbox _{\frac{1}{2}} (\bbox_{\frac{1}{2}}\at_{\head} \land \bbox_{\frac{1}{2}}\at_{\tail})$.
    
    On the contrary, there is an assignment in $\Pg$ satisfying $\at_{\head}$ in the worlds accessible from $w_2$ at least half of the times and $\at_{\tail}$ the other half.
    Moreover, the same holds when considering the world $w_4$. 
    More formally:\footnote{Notably, for any bigger $W$, the same holds in all worlds $w_{2m}$.}
    \begin{align*}
    \model , w_2 &\vDash_{\vseg} \bbox_{\frac{1}{2}}(\wbox_{\frac{1}{2}}\at_{\head} \land \wbox_{\frac{1}{2}}\at_{\tail}) \\
    \model ,
    w_4 &\vDash_{\vseg} \bbox_{\frac{1}{2}}(\wbox_{\frac{1}{2}}\at_{\head} \land \wbox_{\frac{1}{2}}\at_{\tail})
    \end{align*}
    
    In addition, in the worlds accessible from $w_4$ (the last world) all the assignments in $\Pg$ satisfy both $\at_{\head}$ and $\at_{\tail}$ at least half of the times.
    This property can be formally expressed as follows:
    $$
    \model , w_4 \vDash_{\vseg} \neg \bbox_{q}\at_{\head} \land \neg \bbox_{q}\at_{\tail}
    $$
    \noindent
    for $q> \frac{1}{2}$.
\end{sloppypar}
\end{ex}

These observations are generalized in Theorems~\ref{conj:lcd(mu(p))} and~\ref{lemma:mu2}, showing how to express the existence of assignments in $\Pg$ satisfying a distribution of frequency for atoms in the accessible worlds, and how to find worlds in which such a distribution can be realized.
These results refer only to atoms, but are easily generalizable to non-atomic formulas.


\begin{theorem}
\label{conj:lcd(mu(p))}
Given a model $\model$ and an assignment $\vNseg  \in \Pg \cup \{\vseg\}$, for every world $w_m \in W$, there is an $\vNseg '\in \Pg$ that assigns to each atom $\at$ exactly $\fr (\at) \times m$ worlds accessible from $w_m$ iff it holds that
$$
\model , w_m \vDash _{\vNseg} \bbox _{q}(\bigwedge_{\at\in At}\wbox_{\fr(\at)}\at)
$$
with $q = (\textrm{LCD}_{\at \in\At_\Omega}(\fr(\at)))^{-1}$.
Notice that this happens every time $m$ is a common multiple of all $\fr(\at)^{-1}$ for $\at \in\At_\Omega$.
\end{theorem}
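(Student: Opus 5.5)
The plan is to unfold the semantic clauses of Def.~\ref{def:semantics} and reduce both sides to a counting statement about prefixes of a single assignment. Write $d=\textrm{LCD}_{\at\in\At_\Omega}(\fr(\at))$, so $q=1/d$. Let $\chi$ abbreviate $\bigwedge_{\at}\wbox_{\fr(\at)}\at$.

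\emph{Step 1: characterise where $\chi$ holds.} For any assignment $\vNseg'$ and world $w_l$, $\model,w_l\vDash_{\vNseg'}\chi$ means that among $w_1,\dots,w_l$ each atom $\at$ occurs at least $\fr(\at)\cdot l$ times. By Def.~\ref{def:Assignments} the atoms of $\At_\Omega$ partition the prefix, so these counts sum to $l$. Since $\sum_\at\fr(\at)=1$, every inequality must then be an equality. Hence $\chi$ holds at $w_l$ under $\vNseg'$ iff the prefix of length $l$ realises exactly the ideal frequencies. This requires every $\fr(\at)\cdot l$ to be an integer, i.e.\ $d\mid l$. Consequently, for any $\vNseg'$, the number of $l\le m$ with $w_l\vDash_{\vNseg'}\chi$ is at most $\lfloor m/d\rfloor$.

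\emph{Step 2: the direction ($\Leftarrow$).} If $\model,w_m\vDash_{\vNseg}\bbox_{1/d}\chi$, there is $\vNseg'\in\Pg$ with at least $m/d$ indices $l\le m$ satisfying $\chi$. By Step~1 this count is at most $\lfloor m/d\rfloor$. So $d\mid m$, and every multiple of $d$ up to $m$ is a balanced prefix; in particular $l=m$ is. Thus $\vNseg'$ assigns each $\at$ exactly $\fr(\at)\cdot m$ worlds among those up to $w_m$. This also yields the closing remark that $m$ must be a common multiple of the $\fr(\at)^{-1}$.

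\emph{Step 3: the direction ($\Rightarrow$), which I expect to be the main obstacle.} From an arbitrary $\vNseg'\in\Pg$ that is balanced at $m$, we cannot directly conclude, since $\vNseg'$ may be balanced only at $m$ and not at the intermediate multiples of $d$. The fix is a rearrangement: build $\vNseg''$ by splitting $w_1,\dots,w_m$ into $m/d$ consecutive blocks of length $d$ (note $d\mid m$ follows from integrality of $\fr(\at)\cdot m$). Inside each block, $\vNseg''$ places exactly $\fr(\at)\cdot d$ worlds in $\at$. On $w_{m+1},\dots,w_n$ it copies $\vNseg'$.

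Two checks are needed. First, $\vNseg''\in\Pg$: its global counts per atom equal $\fr(\at)\cdot m$ plus the counts of $\vNseg'$ after $m$, which are exactly those of $\vNseg'$. Second, $\vNseg''$ is balanced at every $l=kd\le m$. These give $m/d$ satisfying indices, so the fraction equals $1/d=q$ and $\bbox_q\chi$ holds at $w_m$, for any evaluating $\vNseg$ since $\bbox$ ignores it. The only delicate point I anticipate is the integrality bookkeeping, namely that $\fr(\at)\cdot d\in\mathbb{N}$ for every atom, which holds by the choice of $d$ as the least common denominator.
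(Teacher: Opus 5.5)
Your proof is correct, and it is organised differently from the paper's. The paper shares your Step~1. It then argues by an informal induction on $m$, following the greatest $q$ for which the $\bbox$-formula holds: this is $0$ for $m<d$ and $1/d$ at $w_d$. It then drops between consecutive multiples of $d$ and returns to $1/d$ at each $i\cdot d$, because exactly $i$ preceding worlds satisfy the conjunction.

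You instead give a direct two-sided argument. The bound of at most $\lfloor m/d\rfloor$ satisfying indices per assignment yields ($\Leftarrow$), and the block rearrangement yields ($\Rightarrow$). Your route is more rigorous exactly where the paper is vague. $\bbox$ requires a \emph{single} $\vNseg'\in\Pg$ that is balanced at all multiples of $d$ up to $m$ at once. The paper only says these worlds verify the formula ``for at least one $\vNseg'\in\Pg$'' and never exhibits such an assignment; you construct it and check that it lies in $\Pg$.

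The paper's framing, in turn, foregrounds how the maximal $q$ evolves along the series, which it reuses for Theorem~\ref{lemma:mu2}. Your ingredients give this profile exactly. Since $\Pg\neq\emptyset$ forces $d\mid n$, tiling all of $W$ with balanced blocks attains your upper bound at every world, so the maximal $q$ at $w_m$ is $\lfloor m/d\rfloor/m$.

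The same whole-$W$ tiling is what the closing remark needs. As phrased (``this happens every time $m$ is a common multiple''), that remark is a sufficiency claim, not the necessity you extract in Step~2. Your Step~3 presupposes a balanced $\vNseg'$ rather than showing that $d\mid m$ guarantees one; the tiling supplies that missing step.
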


\begin{proof}
\begin{sloppypar}
        Observe that the formula $\bigwedge_{\at\in\At_\Omega}\wbox_{\fr(\at)}p$ (henceforth the $\land\wbox$-formula) is verified in a world iff the frequency of every atom in the previous worlds is the one prescribed by $\fr$.
    Indeed, if all atoms are verified with at least frequency $\fr$, since
$
\sum _{\at \in\At_\Omega} \fr(\at) = 1
$, no atom $\at$ can be verified with a frequency strictly greater than $\fr (\at)$,  
    Hence, $\bbox _{q}\big(\bigwedge_{\at\in\At_\Omega}\wbox_{\fr(\at)}p\big)$ (henceforth the $\bbox$-formula) counts the number of worlds before the actual one s.t.~the worlds accessible from them satisfy the distribution of frequency prescribed by $\fr$, 
    according to at least one $\vNseg ' \in \Pg$
    .
\\
	Then, the theorem is proved by induction on $m$ (the label of the world at which we evaluate the formula).
	The intuitive idea is as follows.
	If $m < \textrm{LCD}_{\at \in\At_\Omega}(\fr(\at))$, then the $\bbox$-formula is verified only for $q= 0$.
	Indeed, with less than $\textrm{LCD}_{\at \in\At_\Omega}(\fr(\at))$ worlds it is not possible to verify the $\land\wbox$-formula. 
	The first world in which $q$ can be greater than $0$ is $w_m$ with $m=\textrm{LCD}_{\at \in\At_\Omega}(\fr(\at))$, and in this case the greatest $q$ for which the $\bbox$-formula is satisfied is 
$
\big(\textrm{LCD}_{\at \in\At_\Omega}(\fr(\at))\big)^{-1} = \frac{1}{\textrm{LCD}_{\at \in\At_\Omega}(\fr(\at))}.
$
Indeed, there is only one world that verifies the $\land\wbox$-formula: the last one. 
	When we move to the next world, the greatest $q$ decreases, since there is one more world accessible and still only one that satisfies the $\land\wbox$-formula. 
	The greatest value of $q$ for which the formula is provable decreases until we reach a multiple of $\textrm{LCD}_{\at \in\At_\Omega}(\fr(\at))$ (that is another common denominator of all $\fr (\at)$ for $\at \in\At_\Omega$).
	In this case, the greatest value of $q$ for which the $\bbox$-formula is satisfied is still $(\textrm{LCD}_{\at \in\At_\Omega}(\fr(\at)))^{-1}$.
	This holds for all the multiples of $\textrm{LCD}_{\at \in\At_\Omega}(\fr(\at))$, since there are $i \times \textrm{LCD}_{\at \in\At_\Omega}(\fr(\at))$ worlds accessible, of which exactly $i$ satisfy the $\land\wbox$-formula.
	\\
	Let us now consider the inductive step:
	assume that the statement holds for $w_m$ and prove it for $w_{m+1}$.
	We have two cases:
	\begin{itemize}
	\item If the $\bbox$-formula holds in $w_m$ for $q = \big(\textrm{LCD}_{\at \in\At_\Omega}(\fr(\at))\big)^{-1}$, then $m$ is a common denominator of all $\fr(\at)$ for $\at \in\At_\Omega$, and the $\land \wbox$-formula is verified in $w_m$ as well.
	Moreover, there is no $\vNseg\in \Pg$ for which the $\land \wbox$-formula is verified in $w_{m+1}$ (assuming that, for no $\at \in\At_\Omega$, $\fr (\at) = 1 $).
	Hence, the greatest $q$ for which the $\bbox$-formula holds in $w_{m+1}$ is strictly smaller than $(\textrm{LCD}_{\at \in\At_\Omega}(\fr(\at)))^{-1}$, and the statement of the theorem holds vacuously.
	\item If the $\bbox$-formula does not hold in $w_m$ for $q = (\textrm{LCD}_{\at \in\At_\Omega}(\fr(\at)))^{-1}$, then $m$ is not a common denominator of all $\fr(\at)$ for $\at \in\At_\Omega$, and the $\land \wbox$-formula is not verified at $w_m$.
	By IH, in every previous world $w_i$, being $i$ a common denominator of all $\fr(\at)$ for $\at \in\At_\Omega$, the $\bbox$-formula holds in $w_m$ for $q = (\textrm{LCD}_{\at \in\At_\Omega}(\fr(\at)))^{-1}$.
	In particular, this holds for $j= \textrm{LCD}_{\at \in\At_\Omega}(\fr(\at))$.
	Now, let us consider two sub-cases:
	\begin{itemize}
		\item If $m+1$ is not a common denominator of all $\fr(\at)$ for $\at \in\At_\Omega$, then the statement of the theorem is still vacuously true for $w_{m+1}$;
		\item If $m+1$ is not a common denominator of all $\fr(\at)$ for $\at \in\At_\Omega$, then it is a multiple of $j = \textrm{LCD}_{\at \in\At_\Omega}(\fr(\at))$.
		Let it be $ m+1 = j \times l$.
		Since all the common denominators of all $\fr(\at)$ are multiples of $j$, before $w_{m+1}$ there are $j \times l$ worlds that verify the $\land \wbox$-formula for at least one $\vNseg '\in \Pg$.
		Hence, the greatest $q$ for which the $\bbox$-formula is verified is		
		$
		\frac{l}{(\textrm{LCD}_{\at \in\At_\Omega}(\fr(\at)))\times l} = (\textrm{LCD}_{\at \in\At_\Omega}(\fr(\at)))^{-1}.
		$
	\end{itemize}	
	\end{itemize}
\end{sloppypar}
\end{proof}

\noindent
Th.~\ref{conj:lcd(mu(p))} claims that the existence of an assignment that associates to each atom $\at$ exactly $\fr (\at) \times m$ worlds accessible from $w_m$ can be expressed by the satisfaction at $w_m$ of the formula $\bbox _{q}(\bigwedge_{\at\in\At_\Omega}\wbox_{\fr(\at)}\at)$ for a suitable value of $q$.
Alternatively, one may avoid fixing a specific value of $q$, and display the existence of such an assignment by observing that there exists some $q$ for which this formula holds at $w_m$, while it fails at both the preceding and the following worlds.
 In other words, the value of $q$ for which the formula is satisfied exhibits a peak at $w_m$ if and only if there exists an assignment that assigns to each atom $\at$ exactly $\fr(\at)\times m$ worlds accessible from $w_m$:

\begin{theorem}\label{lemma:mu2}
Given a $\model$ associated with $W$ and an assignment $\vNseg  \in \Pg \cup \{\vseg\}$, for every world $w_m \in W$, there is an assignment $\vNseg '\in \Pg$ that assigns each atom $\at$ exactly to $\fr (\at) \times m$ worlds accessible from $w_m$ iff there is a $q$ s.t.
$$
\model , w_m \vDash _{\vNseg} \bbox _{q}(\bigwedge_{\at\in\At_\Omega}\wbox_{\fr(\at)}\at)
$$
\noindent
but
\begin{align*}
\model , w_{m-1} &\nvDash _{\vNseg} \bbox _{q}(\bigwedge_{\at\in\At_\Omega}\wbox_{\fr(\at)}\at) \\
\model , w_{m+1} &\nvDash _{\vNseg} \bbox _{q}(\bigwedge_{\at\in\At_\Omega}\wbox_{\fr(\at)}\at)
\end{align*}
\end{theorem}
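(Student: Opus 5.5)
We plan to reduce the statement to the analysis already carried out for Th.~\ref{conj:lcd(mu(p))}, working with the counting function
$$
f(k) = \max_{\vNseg'\in\Pg} \frac{|\{w_l : l\leq k \wedge \model, w_l \vDash_{\vNseg'} \bigwedge_{\at}\wbox_{\fr(\at)}\at\}|}{k},
$$
so that $\model, w_k \vDash_{\vNseg} \bbox_q(\bigwedge_{\at}\wbox_{\fr(\at)}\at)$ iff $f(k)\ge q$. The formula is independent of $\vNseg$, since $\bbox$ quantifies over $\Pg$. Following the first paragraph of the proof of Th.~\ref{conj:lcd(mu(p))}, the $\land\wbox$-formula holds at $w_l$ under $\vNseg'$ iff $\vNseg'$ realizes exactly the frequencies $\fr(\at)$ on $w_1,\dots,w_l$. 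This requires $l$ to be a common multiple of the denominators of the $\fr(\at)$, i.e.\ a multiple of $L=\textrm{LCD}_{\at\in\At_\Omega}(\fr(\at))$.

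The first step is to compute $f$. We will show that a single assignment $\vNseg^*\in\Pg$ realizes $\fr$ on every initial block $w_1,\dots,w_{jL}$ with $jL\le n$: repeat a fixed pattern of length $L$ containing each $\at$ exactly $\fr(\at)L$ times. Because $n$ is itself a multiple of $L$ (well-definedness of $\Pg$), $\vNseg^*$ belongs to $\Pg$. Hence
$$
f(k)=\frac{\lfloor k/L\rfloor}{k}.
$$
In particular, there is an $\vNseg'\in\Pg$ with exact frequencies on $w_1,\dots,w_m$ iff $L \mid m$.

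The second step is to show that $L\mid m$ holds iff $f$ has a strict local peak at $m$. Taking $q=f(m)$, the condition in the statement is $f(m)\ge q$, $f(m-1)<q$ and $f(m+1)<q$.

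For ($\Rightarrow$), write $m=jL$, so $f(m)=1/L$. Then $f(m-1)=(j-1)/(jL-1)<1/L$, and $f(m+1)=j/(jL+1)<1/L$.

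For ($\Leftarrow$), suppose $L\nmid m$. Write $m=jL+r$ with $0<r<L$. If $r\ge 2$, or if $r=1$ and $j\ge1$, then $\lfloor (m-1)/L\rfloor=j$, so $f(m-1)=j/(m-1)\ge j/m=f(m)$ and there is no peak. If $j=0$, then $f(m)=0$ and $f(m-1)\ge 0$, so again there is no peak.

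The main obstacle is the boundary worlds $w_1$ and $w_n$, where $w_{m-1}$ or $w_{m+1}$ does not exist. We will adopt the convention that a nonexistent world falsifies the formula, or equivalently read the missing condition as vacuous. We then check that $m=L=1$ does not break the argument; this is the degenerate case where $\fr(\at)\in\{0,1\}$, which Th.~\ref{conj:lcd(mu(p))} also sets aside. We will also state explicitly the small fact that a rational threshold $q$ strictly between the relevant values can always be chosen, though $q=f(m)$ already suffices.
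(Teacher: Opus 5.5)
Your proposal is correct and rests on the same idea as the paper's proof. Set $L=\textrm{LCD}_{\at\in\At_\Omega}(\fr(\at))$. The largest $q$ for which the $\bbox$-formula holds at $w_k$ never exceeds $L^{-1}$, and it attains $L^{-1}$ exactly when $L\mid k$. The two routes differ in how this profile is obtained and how it is used.

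The paper imports the profile from the inductive sketch of Th.~\ref{conj:lcd(mu(p))}, which keeps its own proof to two lines. It then argues the converse by the value of the threshold: for $q\neq L^{-1}$, truth at $w_m$ is claimed to propagate to $w_{m-1}$.

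You instead compute the profile in closed form, $f(k)=\lfloor k/L\rfloor/k$, with the periodic assignment $\vNseg^*$ supplying the lower bound. You then argue the converse by position, showing $f(m-1)\ge f(m)$ whenever $L\nmid m$. This makes your argument self-contained, and also sounder. The paper's threshold step is false as literally written at multiples of $L$: for a fair coin with $n=m=4$ and $q=\frac{2}{5}$, we have $f(4)=\frac12\ge q$ but $f(3)=\frac13<q$. Your positional form is exactly what the converse needs.

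Two small corrections.

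First, the case $L=1$ (some atom has $\fr(\at)=1$) is not something to ``check'' but a genuine exception. There $f\equiv 1$, so no world is a strict peak, although the left-hand side holds at every $m$; the statement therefore fails whenever $n\ge 2$. Your inequalities $(j-1)/(jL-1)<1/L$ and $\lfloor (jL+1)/L\rfloor=j$ both use $L\ge 2$, so state this as a hypothesis. The paper makes the same exclusion inside the proof of Th.~\ref{conj:lcd(mu(p))}.

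Second, at $m=1$, where $w_0$ does not exist, exclude a peak via $f(2)\ge 0=f(1)$ rather than via $f(m-1)$.
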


\begin{proof}
    The proof follows from Th.~\ref{conj:lcd(mu(p))}, by observing that:
    \begin{itemize}
    \itemsep0em
\item $(\textrm{LCD}_{\at \in\At_\Omega}(\fr(\at)))^{-1}$ is the greatest possible value of $q$ for which the $\bbox$-formula is satisfiable (this gives the \textit{if} direction);
\item for $q\neq (\textrm{LCD}_{\at \in\At_\Omega}(\fr(\at)))^{-1}$, if 
$
\model , w_{m} \vDash _{\vNseg} \bbox _{q}\big(\bigwedge_{\at\in\At_\Omega}\wbox_{\fr(\at)}\at\big)
$
then also $\model , w_{m-1} \vDash _{\vNseg} \bbox _{q}\big(\bigwedge_{\at\in\At_\Omega}\wbox_{\fr(\at)}\at\big)$ (this gives \textit{only if} direction).
\end{itemize}
\end{proof}

\noindent
The following Theorems generalize the previous results to non-atomic formulas, expressing the existence of an assignment in $\Pg$ which satisfies a distribution of frequency for more than one formula in the accessible worlds.

\begin{theorem}\label{theorem:frequencyMoreFormulas}
Given a model $\model$ and an assignment $\vNseg  \in \Pg \cup \{\vseg\}$, for every world $w_m \in W$, there is an $\vNseg '\in \Pg$ that satisfies the formulas $\fOne_1,\ldots, \fOne_i$ in at least, respectively, $q_1 \times m, \ldots , q_i \times m$ (with $q_1,\ldots , q_i \in \mathbb{Q} $) worlds accessible from $w_m$ iff for some $q' \in \mathbb{Q} $ it holds that:
$$
\model , w_m \vDash _{\vNseg} \sat_{=q'}(\fOne \lor \neg \fOne)\land \bbox_{\frac{1}{m}} \Big(\sat_{=q'}(\fOne \lor \neg \fOne)\land \bigwedge^i_{j=1}\wbox_{q_j}\fOne_j \Big)
$$    
\end{theorem}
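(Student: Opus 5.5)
The plan is to unfold the semantics of Def.~\ref{def:semantics} for each conjunct and read the formula as a way of picking out the current world $w_m$ from inside the scope of $\bbox$. The key observation is that the formula $\fOne \lor \neg \fOne$ is true at every world under every assignment. Hence $\model, w_l \vDash_{\vNseg} \sat_{=q'}(\fOne\lor\neg\fOne)$ holds iff $\frac{l}{n} = q'$. This condition is independent of the assignment and true at exactly one world, namely $w_{q'n}$. So the first conjunct, evaluated at $w_m$, forces $q' = \frac{m}{n}$. Inside the $\bbox$, the conjunct $\sat_{=q'}(\fOne\lor\neg\fOne)$ is then satisfied, for any $\vNseg'$, only at $w_m$ among $w_1,\dots,w_m$. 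This is what makes the formula independent of the particular $\vNseg$, in line with the statement quantifying over $\vNseg \in \Pg\cup\{\vseg\}$.

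Next I would analyse the $\bbox_{\frac1m}$ clause. By definition it holds at $w_m$ iff there is $\vNseg'\in\Pg$ such that at least $\frac1m\cdot m = 1$ world $w_l$ with $l\le m$ satisfies, under $\vNseg'$, the inner conjunction. By the previous step, the only candidate is $w_l = w_m$. So the clause is equivalent to the following: there is $\vNseg'\in\Pg$ with $\model, w_m\vDash_{\vNseg'} \wbox_{q_j}\fOne_j$ for all $j\le i$. Unfolding $\wbox$, this says that $|\{w_l : l\le m,\ \model,w_l\vDash_{\vNseg'}\fOne_j\}| \ge q_j\times m$ for each $j$, i.e.\ $\vNseg'$ satisfies each $\fOne_j$ in at least $q_j\times m$ worlds accessible from $w_m$ (reading accessibility backward as in the paper's rephrasing of $\wbox$). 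Both directions then follow. For ($\Leftarrow$), extract $\vNseg'$ from the $\bbox$ clause as above. For ($\Rightarrow$), take $q'=\frac mn$ so that the first conjunct holds, and use the given $\vNseg'$ to witness the $\bbox_{\frac1m}$ clause at world $w_m$.

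The main obstacle I expect is bookkeeping rather than mathematics. First, the $\fOne$ in $\fOne\lor\neg\fOne$ is just an arbitrary formula used to produce a tautology, and I should say explicitly that its truth does not depend on the assignment, so the counting in $\sat$ is the same under $\vNseg$ and $\vNseg'$. Second, the statement allows $q'\in\mathbb{Q}$, and I need to check that the unique admissible value $\frac mn$ lies in $\mathbb{Q}_{[0,1]}$. Third, the strict-equality operator $\sat_{=q'}$ is only a notational abbreviation (Notation~\ref{notation:=}), so I will use its direct semantic reading as given there rather than its primitive definition from the appendix. Finally, I should remark that the threshold $\frac1m$ is exactly what is needed: it requires at least one witnessing world, and the $\sat$ marker guarantees there is at most one.
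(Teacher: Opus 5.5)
Your proposal is correct and follows the same approach as the paper: both use $\sat_{=q'}(\fOne\lor\neg\fOne)$ as a world marker. Since $q'$ is forced to be $\frac{m}{n}$, the $\bbox_{\frac{1}{m}}$ clause can only be witnessed at $w_m$ itself, which reduces the formula to $\bigwedge_j \wbox_{q_j}\fOne_j$ holding at $w_m$ under some $\vNseg'\in\Pg$. Your version simply writes out the semantic unfolding that the paper's sketch delegates to its proof of the $\bbox_{=}$-definability result and to the $\wbox$-frequency theorem.
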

\begin{proof}
    The formula $\sat_{=q'}(\fOne \lor \neg \fOne)$ can be used to fix the world looked at by $\bbox$ and $\wbox$ (see the proof of Th.~\ref{lemma:Def=BBox} for a more detailed explanation).
    In this way, we know that $\bbox$ possibly changes the assignment $\vNseg$ with one in $\Pg$, but it does not change world.
    Hence, $\bigwedge^i_{j=1}\wbox_{q_j}\fOne_j$ is true in $w_m$ according to an assignment in $\Pg$.
    But this means exactly that the desired distribution of frequency is realized, because of Th.~\ref{prop:frequencyWBox}.
\end{proof}

\noindent
By using $\wbox_{=q}$ we can strengthen the previous result, by asking the existence of an assignment in $ \Pg$ that satisfies the formulas $\fOne_1,\ldots, \fOne_i$ in, respectively, (exactly) $q_1 \times m, \ldots , q_i \times m$, with $q_1,\ldots , q_i \in \mathbb{Q} $, worlds accessible from $w_m$.

\subsection{Checking compatibility of an observation with a desired frequency}
\label{subsec:CheckCompatibility}

In this section, we reconsider the compatibility of an observation with a target frequency.

\probOne*

\noindent
%
Specifically, we can use either $\bbox$ and $\wbox$, or $\bstar$ and $\sat$, or $\wbox$ and $\bstar$ to check that at a time $w$, a given assignment $\vseg$ is \emph{not} consistent with a given $\Pg$.

\begin{theorem}\label{prop:incons}
For any structure model $\model$, associated with $W$ and $\vseg$, and world $w_m\in W$ the following hold:
\begin{align}
\model ,w_m \vDash _{\vseg} \wbox_{q} \fOne \land \neg \bbox_{q} \fOne 
\quad &\Longrightarrow \quad
\vseg \not\in \Pg \\
\model ,w_m \vDash _{\vseg} \sat_{q} \fOne \land \neg \bstar _{q} \fOne  \quad &\Longrightarrow \quad \vseg \not\in \Pg \\
\model ,w_m \vDash _{\vseg} \wbox _{q} \fOne \land \neg \bstar_{q} \fOne
\quad &\Longrightarrow \quad
\vseg \not\in \Pg 
\end{align}
\end{theorem}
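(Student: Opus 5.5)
The plan is to prove all three implications by contraposition. In each case I assume $\vseg \in \Pg$ and the positive conjunct at $w_m$, and derive the corresponding black formula at $w_m$. The key fact used throughout is that membership $\vseg \in \Pg$ allows $\vseg$ itself to serve as a witness, or as one of the counted assignments, in the clauses for $\bbox$ and $\bstar$ of Def.~\ref{def:semantics}.

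For (1), the argument is immediate. If $\vseg\in\Pg$ and $\model, w_m \vDash_{\vseg} \wbox_q\fOne$, then the set $\{w_l : l\le m \wedge \model, w_l\vDash_{\vseg}\fOne\}$ has cardinality at least $q\times m$. Taking $\vNseg'=\vseg$ in the $\bbox$ clause gives $\model, w_m\vDash_{\vseg}\bbox_q\fOne$, contradicting $\neg\bbox_q\fOne$. This works for arbitrary $\fOne$, because the $\bbox$ clause evaluates $\fOne$ at the same worlds under $\vNseg'$ as the $\wbox$ clause does under $\vseg$.

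For (2) and (3), I would compare the count of worlds satisfying $\fOne$ under $\vseg$ with the count of assignments in $\Pg$ satisfying $\fOne$ at a single world. I would use a double-counting identity:
$$\sum_{w'\in W}|\{\vNseg'\in\Pg:\model,w'\vDash_{\vNseg'}\fOne\}|=\sum_{\vNseg'\in\Pg}|\{w':\model,w'\vDash_{\vNseg'}\fOne\}|.$$
Hence some $w'$ attains at least the average $\frac{1}{n|\Pg|}\sum_{\vNseg'}|\{w':\model,w'\vDash_{\vNseg'}\fOne\}|$, and that $w'$ is the witness I would supply in the $\bstar$ clause.

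When $\fOne$ is propositional, I would use the paper's Lemma~\ref{lemma:mu} and Th.~\ref{theorem:AlternativeStar}, which give $\bstar_q\fOne$ iff $\fr(\fOne)\ge q$. Since $\vseg\in\Pg$, the whole series under $\vseg$ has frequency $\fr(\fOne)$, so for (2)
$$q\le\frac{|\{l\le m:\model,w_l\vDash_{\vseg}\fOne\}|}{n}\le \fr(\fOne),$$
which yields $\bstar_q\fOne$.

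For modal $\fOne$ in (2), I would try to show that every $\vNseg'\in\Pg$ gives a whole-series count of $\fOne$ at least that of $\vseg$ up to $w_m$. This would go by induction on $\fOne$, exploiting the fact that $\bbox,\bstar,\wnext$-subformulas are (largely) insensitive to which member of $\Pg$ is used. I expect this to be the main obstacle, and it may need restricting to the fragment where Lemma~\ref{lemma:mu} applies.

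For (3), the same route requires $\frac{|\{l\le m:\model,w_l\vDash_{\vseg}\fOne\}|}{m}\le\fr(\fOne)$, and a prefix frequency can exceed the global one. For instance, take $W=\{w_1,\dots,w_4\}$, $\fr(\at_{\head})=\frac12$, $\vseg(\at_{\head})=\{w_1,w_2\}$, $m=2$, $q=1$. Then $\wbox_1\at_{\head}$ holds at $w_2$ and $\vseg\in\Pg$. However, every world is $\at_{\head}$ under only half of $\Pg$, so $\bstar_1\at_{\head}$ fails. So before attempting (3) I would first check this instance against Def.~\ref{def:semantics}. If it survives, (3) as literally stated cannot be proved, and the argument can only deliver it at $m=n$, where $\wbox$ and $\sat$ coincide and (3) reduces to (2).
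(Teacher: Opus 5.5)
Your argument for (1) is the paper's argument read contrapositively: if $\vseg\in\Pg$, then $\vseg$ itself witnesses the $\bbox$ clause. Your counterexample to (3) is correct. Each world is $\at_{\head}$ under only $3$ of the $6$ members of $\Pg$, so $\bstar_1\at_{\head}$ fails, while $\wbox_1\at_{\head}$ holds at $w_2$ under $\vseg\in\Pg$. So the paper's ``the other cases are proved in the same way'' does not hold. $\bstar$ counts assignments at a fixed world rather than worlds under a fixed assignment, so $\vseg$ cannot be plugged in as a witness. Your propositional argument for (2), via Lemma~\ref{lemma:mu} and Th.~\ref{theorem:AlternativeStar}, is sound. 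The paper does not give it, and it covers the only later uses of (2), which are atomic.

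The genuine gap is the modal case of (2), and it cannot be closed. The inductive claim you plan to prove (every $\vNseg'\in\Pg$ has a whole-series count of $\fOne$ at least $\vseg$'s count up to $w_m$) is false, and so is (2) itself.

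Take $W=\{w_1,w_2\}$ and $\fr(\at_{\head})=\fr(\at_{\tail})=\frac12$, so $\Pg=\{\vseg,\vNseg'\}$ with $\vseg(\at_{\head})=\{w_1\}$ and $\vNseg'(\at_{\head})=\{w_2\}$. Let $\fOne=\wbox_{\frac12}\wbox_1\at_{\head}$.
Under $\vseg$, $\wbox_1\at_{\head}$ holds only at $w_1$. Hence $\fOne$ holds at both $w_1$ and $w_2$, and $\model,w_2\vDash_{\vseg}\sat_1\fOne$.
Under $\vNseg'$, $\wbox_1\at_{\head}$ holds nowhere, hence neither does $\fOne$. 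So at each world exactly half of $\Pg$ satisfies $\fOne$, and $\bstar_1\fOne$ fails.
Thus $\sat_1\fOne\land\neg\bstar_1\fOne$ holds at $w_2$ even though $\vseg\in\Pg$.

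Your double-counting identity shows exactly where this breaks. It only bounds the best per-world fraction by the \emph{average} over $\Pg$ of whole-series frequencies. Order-sensitive subformulas can push $\vseg$ well above that average: $\wbox$, $\sat$, and also $\wnext$, whose clause depends on the evaluating assignment's prefix, contrary to your remark.

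So instead of hedging, state the outcome as follows. (1) holds for all $\fOne$. (2) holds for propositional $\fOne$ only. (3) fails in general. At $m=n$, (3) coincides with (2), so it holds there for propositional $\fOne$ but, by the same example, not for modal ones.
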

\begin{proof}[Proof Sketch]
    By straightforward application of Def.~\ref{def:semantics} and~\ref{df:validity}.
    We consider the first case only: 
    \footnotesize
    \begin{align*}
        \model, w_m \vDash_\vseg \wbox_q \fOne \wedge \neg \bbox_q \fOne \quad &\Longleftrightarrow \quad \model, w_m \vDash_\vseg \wbox_q \fOne \text{ and }
        \model, w_m \vDash_\vseg \neg \bbox_q \fOne \\
        &\Longleftrightarrow \quad \frac{|\{w_k : l \leq m \wedge \model, w_l \vDash_\vseg \fOne\}|}{m} \ge q \text{ and }
        \model, w_m \not\vDash_\vseg \bbox_q \fOne \\
        & \Longleftrightarrow \quad \frac{|\{w_k : l \leq m \wedge \model, w_l \vDash_\vseg \fOne\}|}{m} \ge q \text{ and } \\
        &\quad \quad \quad \quad \text{there is no } \vNseg \in \Pg \text{ s.t. }
        \frac{|\{w_k : l \leq m \wedge \model, w_l \vDash_a \fOne\}|}{m} \ge q
    \end{align*}
    \normalsize
    So clearly, for any $\vNseg \in \Pg$, $\vseg \neq \vNseg$. The other cases are proved in the same way.
\end{proof}
\noindent
Intuitively, this allows us to check consistency between $\Pg$ and $\vseg$ \emph{inside the model}.

The converse direction, i.e.,~that $\vseg$ \emph{is} consistent w.r.t.~ a given distribution $\Pg$, holds in a weaker sense only. 
Indeed, while we cannot ensure that $\vseg \in \Pg$ until all worlds have been checked, we can show that an assignment $\vseg$ is compatible with $\Pg$ \textit{up to a world $w_m$}:

\begin{definition}\label{def:compatiUpTo}
    For every model $\model = (W, R_E, \Pg , \vseg)$ and world $w_m\in W$, we say that $\vseg$ is compatible with $\Pg$ up to $w_i$ iff there is $\vNseg\in \Pg$ s.t. 
$$
    \forall \at \in \At_{\Omega} \forall i\leq m (w_i \in \vseg (\at) \leftrightarrow w_i \in \vNseg (\at)) 
$$
\end{definition}

\noindent
Due to this notion, we can strengthen Th.~\ref{prop:incons} in the other direction.
Preliminarily, notice that by the finite nature of the structure we are dealing with and by the semantic definitions of operators $\sat$ and $\bstar$, we can always define the probability of argument formula in terms of a fraction $\frac{l}{n}$, for $l,n\in \Nat$, where $n$ is the cardinality of the corresponding model.

\begin{prop}\label{prop:auxCons}
For any model $\model=(W, \Rexp, \Pg,\vseg)$ such that $|W|=n$ and associated with a given sample space $\Omega$, for any atom associated with it $\at\in \At_\Omega$, it holds that:
    \begin{itemize}
        \itemsep0em
        \item there is an $l\in \{1,\dots, n\}$ s.t.~for any $w\in W$ 
        $\model, w \vDash_{\vseg} \bstar_{=\frac{l}{n}} \at$;
        \item for any $w_i\in W$, there is an $l'\in \{1,\dots, n\}$ s.t.~$\model, w_i \vDash_{\vseg} \sat_{=\frac{l'}{n}}\at.$
    \end{itemize}
\end{prop}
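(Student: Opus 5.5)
The plan is to unfold the semantic clauses of Def.~\ref{def:semantics} together with the strict/equality variants of Notation~\ref{notation:=}. Each item then reduces to the observation that the relevant ratio is a natural number divided by $n=|W|$. No induction is needed, since $\at$ is atomic.

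For the first item, I will use Th.~\ref{theorem:AlternativeStar}, which applies because an atom is trivially a propositional formula. It gives $\model, w \vDash_{\vseg} \bstar_{=q}\at$ iff $\fr(\at)=q$, for every $w\in W$; in particular the right-hand side does not depend on $w$ or on $\vseg$. It then suffices to show that $\fr(\at)$ has the form $\frac{l}{n}$. This follows from Def.~\ref{def:PrL}: every assignment $\vNseg\in\Pg$ satisfies $\frac{|\{w\in W : w\in\vNseg(\at)\}|}{|W|}=\fr(\at)$. Pick any $\vNseg\in\Pg$, which is nonempty because $\fr$ is well-defined; this is implicit in the definition and is also needed for the $\bstar$ clause to have a nonzero denominator. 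Setting $l := |\vNseg(\at)|$ gives a natural number with $0\le l\le n$ and $\fr(\at)=\frac{l}{n}$. Hence $\model, w\vDash_{\vseg}\bstar_{=\frac{l}{n}}\at$ holds uniformly in $w$. One could also bypass Th.~\ref{theorem:AlternativeStar} and apply Lemma~\ref{lemma:mu} directly to the $\bstar$ clause.

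For the second item, I will use the $\sat$ clause directly. Fix $w_i$ and set
$l' := |\{w_k : k\le i \wedge \model, w_k\vDash_{\vseg}\at\}| = |\vseg(\at)\cap\{w_1,\dots,w_i\}|$. This is a natural number with $0\le l'\le i\le n$. By the clause for $\sat$ with equality, $\model, w_i\vDash_{\vseg}\sat_{=\frac{l'}{n}}\at$ holds, since the defining fraction is exactly $\frac{l'}{n}$. Here $\vseg$ need not belong to $\Pg$.

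The only delicate point, and the main obstacle, is the range $\{1,\dots,n\}$ in the statement, which excludes $0$. If $\fr(\at)=0$, or if $\at$ has not yet occurred up to $w_i$ under $\vseg$, the correct witness is $l=0$, respectively $l'=0$. My plan is to read the index set as $\{0,\dots,n\}$, or equivalently to treat the zero case separately, where $\bstar_{=0}\at$ and $\sat_{=0}\at$ hold by the same clauses. I would add a remark that the statement as written presupposes $\fr(\at)>0$ for the first item, and at least one prior occurrence of $\at$ for the second. I would also note that Th.~\ref{theorem:=TopStarProp} and Th.~\ref{Univocita=white} make these witnesses unique.
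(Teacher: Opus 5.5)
Your argument is correct and is essentially the paper's own justification. The paper gives no separate proof, only the remark that finiteness and the semantic clauses for $\sat$ and $\bstar$ yield fractions of the form $\frac{l}{n}$; you usefully make explicit the appeal to Th.~\ref{theorem:AlternativeStar}, which is needed to turn the $|\Pg|$-denominator of the $\bstar$ clause into $\fr(\at)=\frac{l}{n}$. You are also right that the witness range must be $\{0,\dots,n\}$ rather than $\{1,\dots,n\}$: the witness is forced to be the actual count, so the statement as written fails when $\fr(\at)=0$ or when $\at$ has not yet occurred under $\vseg$, and this matches the paper's own use of $\{0,\dots,n\}$ after Th.~\ref{prop:cons} and of $\sat_{=0}$, $\bstar_{=0}$ in Ex.~\ref{ex:probabilitaPg}.
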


\begin{theorem}\label{prop:cons}
    For any model $\model=(W, \Rexp, \Pg, \vseg)$ such that $|W|=n$ and associated with a given $\Omega$ then, up to some $w_m\in W$, $\vseg$ is compatible with $\Pg$ iff for every atom $\at\in \At_\Omega$ either
    $$
    \model, w_m \vDash_{\vseg} \sat_{=\frac{l}{n}}\at
    \wedge
    \bstar_{=\frac{r}{n}}\at 
    $$
    where $l,r\in \Nat$ and are such that $l< r$ and $r-l\leq n-m$ or 
    $$
    \model, w_m \vDash_{\vseg} \sat_{=q}\at
    \wedge
    \bstar_{=q}\at 
    $$
    for some $q\in \mathbb{Q}_{[0,1]}$ for $q= \frac{l}{n}= \frac{r}{n}$.
\end{theorem}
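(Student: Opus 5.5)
We reduce both sides of Th.~\ref{prop:cons} to arithmetic on counts. For each atom $\at\in\At_\Omega$, let $c_m(\at)=|\{w_i : i\le m,\ w_i\in\vseg(\at)\}|$ be the observed count up to $w_m$, and let $N(\at)=\fr(\at)\times n$ be the target count fixed by $\Pg$. By Prop.~\ref{prop:auxCons} and the truth clause for $\sat$, the value $\frac{l}{n}$ with $\model,w_m\vDash_{\vseg}\sat_{=\frac{l}{n}}\at$ is exactly $l=c_m(\at)$. By Th.~\ref{theorem:AlternativeStar}, $\model,w_m\vDash_{\vseg}\bstar_{=\frac{r}{n}}\at$ iff $\fr(\at)=\frac{r}{n}$, so $r=N(\at)$. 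By the uniqueness results (Th.~\ref{Univocita=white} and Th.~\ref{theorem:=TopStarProp}), these $l,r$ are determined. The right-hand side of the theorem therefore reads: for every atom, either $c_m(\at)=N(\at)$, or $c_m(\at)<N(\at)$ and $N(\at)-c_m(\at)\le n-m$. In short, for every $\at$ we need $c_m(\at)\le N(\at)$ and $N(\at)-c_m(\at)\le n-m$.

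For the left-to-right direction, suppose $\vNseg\in\Pg$ agrees with $\vseg$ on $w_1,\dots,w_m$. Then $c_m(\at)$ is also the count of $\at$ among the first $m$ worlds under $\vNseg$. The total count under $\vNseg$ is $N(\at)$, so $c_m(\at)\le N(\at)$. The remaining $N(\at)-c_m(\at)$ occurrences lie among $w_{m+1},\dots,w_n$, hence $N(\at)-c_m(\at)\le n-m$. The two cases of the statement are just the split between $c_m(\at)=N(\at)$ and $c_m(\at)<N(\at)$.

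For the right-to-left direction, we construct $\vNseg$. It copies $\vseg$ on $w_1,\dots,w_m$. It then distributes the remaining worlds $w_{m+1},\dots,w_n$ among the atoms, giving each atom $\at$ exactly $N(\at)-c_m(\at)\ge 0$ of them, for instance in a fixed order of atoms on consecutive blocks. This filling is possible, and is exhaustive and exclusive as Def.~\ref{def:Assignments} requires, exactly when
\[
\sum_{\at}(N(\at)-c_m(\at)) = n-m .
\]
This identity holds because $\sum_\at N(\at)=n$, since $\fr$ is a probability measure, and $\sum_\at c_m(\at)=m$, since $\vseg$ assigns exactly one atom per world. The resulting $\vNseg$ has counts $N(\at)$ for every atom, so $\vNseg\in\Pg$, and it agrees with $\vseg$ up to $w_m$, giving compatibility in the sense of Def.~\ref{def:compatiUpTo}.

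The main obstacle is the bookkeeping in this sufficiency step. The per-atom bound $r-l\le n-m$ is not by itself what makes the construction work; the global equality of the sum of the deficits with $n-m$ is what guarantees the filling exists, so I would make that computation explicit. We also need $\vseg$ to be a genuine assignment for $\sum_\at c_m(\at)=m$ to hold. Finally, the edge cases need care: atoms with $\fr(\at)=0$, where $r=0$ falls outside the range $\{1,\dots,n\}$ of Prop.~\ref{prop:auxCons}, and the case $m=n$, where $\vseg$ itself must lie in $\Pg$.
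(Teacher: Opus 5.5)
Your proof is correct and follows the paper's route. The paper only says the result is a straightforward consequence of the truth clauses (Def.~\ref{def:semantics}) and Prop.~\ref{prop:auxCons}, and your reduction to the counts $c_m(\at)$ and $N(\at)=\fr(\at)\times n$ is exactly that unpacking, with the converse direction's identity $\sum_{\at}(N(\at)-c_m(\at))=n-m$ supplying the detail the sketch leaves implicit and showing, as you note, that the per-atom bound $r-l\le n-m$ is redundant once $l\le r$ holds for every atom.
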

\begin{proof}[Proof Sketch]
    Straightforward consequence of Def.~\ref{def:semantics} and Prop.~\ref{prop:auxCons}.
\end{proof}
\noindent

\begin{sloppypar}
Thus, we can formalize our problem by considering a model $\model = (W, \Rexp, \Pg, \vseg)$, where $|W|=n$ and $\Pg$ corresponds to the mentioned frequency distribution. Intuitively, we check whether, up to $m$, the observation $\vseg$  is compatible with $\Pg$ by checking the satisfiability of formulas defined in Th.~\ref{prop:cons}, i.e.~whether for any $\at\in \At_\Omega$, there are $l,r\in \{0,\dots, n\}$ s.t.
$$
\model, w_m \vDash_{\vseg} \sat_{=\frac{l}{n}}\at
    \wedge
    \bstar_{=\frac{r}{n}}\at 
$$
where $l\leq r$ and $r-l \leq n-m$.

\begin{ex}
Going back to Ex.~\ref{ex1}, we can show that the first observation – i.e.~$\tail\tail$ – is compatible with the desired frequency $\Pg$. 
In this case, $\At_\Omega=\{\at_{\head}, \at_{\tail}\}$, $
\model =(\{w_1, w_2, w_3, w_4\}, \Rexp, \Pg,\vseg)$, where $\Pg$ is s.t.~$\fr(\at_{\head})=\frac{1}{2}$, $\fr(\at_{\tail})=\frac{1}{2}$, 
and $\vseg$ is s.t.~$w_1,w_2, w_3\in \vseg(\at_{\tail})$.
Therefore, by Def.~\ref{def:semantics},
$
\model, w_2 \vDash_{\vseg} \circ_{=\frac{2}{4}} \at_{\tail}.
$
Moreover, for any $w\in W$ and $\vNseg\in \Pg\cup \{\vseg\}$, $\model, w \vDash_{\vNseg} \bstar_{=\frac{1}{2}} \at_{\tail}$, so that \textit{a fortiori}
$
\model, w_2 \vDash_{\vseg} \circ_{=\frac{1}{2}} \at_{\tail} \wedge \bstar_{=\frac{1}{2}} \at_{\tail}.
$
Since the same holds for $\at_{\head}$, by Theorem ~\ref{prop:cons}, we conclude that our observation $\vseg$, up to $w_2$, is consistent with $\Pg$.
On the other hand, when we move to $w_3$, we have
$
\model, w_3 \vDash_{\vseg} \circ_{\frac{3}{4}}\at_{\tail}
$
but clearly, for any $w\in W$,
$
\model, w \vDash_{\vseg} \neg \bstar_{\frac{3}{4}} \at_{\tail}.
$
Therefore, 
$
\model, w_3 \vDash_{\vseg} \circ_{\frac{3}{4}}\at_{\tail} \wedge \neg \bstar_{\frac{3}{4}} \at_{\tail}.
$
So, by Th.~\ref{prop:incons}, $\vseg$ is not consistent with $\Pg$.
\end{ex}
\end{sloppypar}
\noindent
Clearly, Th.~\ref{prop:cons} is stronger than Th.~\ref{prop:incons}.
In particular, the following holds:

\begin{theorem}
    For any $\model =(W, \Rexp, \Pg, \vseg )$, $w_m\in W$, $l,r\in \Nat$ such that $l\leq r$ and $r-l\leq n-m$, and $q\in \mathbb{Q}_{[0,1]}$
    $$
    \model, w_m \vDash_{\vseg} \sat_{=\frac{l}{n}}\at \wedge \bstar_{=\frac{r}{n}} \at \quad \Longrightarrow \quad \model, w_m \not\vDash_{\vseg} \sat_{q} \at \wedge \neg \bstar_{q} \at. 
    $$
\end{theorem}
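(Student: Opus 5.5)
The plan is a direct argument by contradiction. The only ingredients are the truth clauses of Def.~\ref{def:semantics} and the characterisation of $\bstar$ on propositional arguments given in Th.~\ref{theorem:AlternativeStar}, which rests on Lemma~\ref{lemma:mu}. I expect the hypothesis $r-l\leq n-m$ to play no role: only $l\leq r$ is needed.

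First, I unpack the antecedent. By the clause for $\sat$ together with Notation~\ref{notation:=}, $\model, w_m \vDash_{\vseg} \sat_{=\frac{l}{n}}\at$ means exactly
\[
|\{w_k : k\leq m \wedge \model, w_k \vDash_{\vseg}\at\}| = l .
\]
Since $\at$ is an atom, hence propositional, Th.~\ref{theorem:AlternativeStar} turns $\model, w_m \vDash_{\vseg} \bstar_{=\frac{r}{n}}\at$ into $\fr(\at)=\frac{r}{n}$. The same theorem shows that this value does not depend on the world or the assignment, and Th.~\ref{theorem:=TopStarProp} shows it is the unique such index.

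Next, I fix an arbitrary $q\in\mathbb{Q}_{[0,1]}$ and suppose, towards a contradiction, that $\model, w_m \vDash_{\vseg} \sat_{q}\at \wedge \neg\bstar_{q}\at$. The first conjunct gives $\frac{l}{n}\geq q$, since the fraction in the $\sat$ clause is exactly $\frac{l}{n}$. Using $l\leq r$, this yields $\fr(\at)=\frac{r}{n}\geq\frac{l}{n}\geq q$. Applying Th.~\ref{theorem:AlternativeStar} once more, now in the ``$\geq$'' form, gives $\model, w_m\vDash_{\vseg}\bstar_q\at$. This contradicts the second conjunct, so $\model, w_m\not\vDash_{\vseg}\sat_q\at\wedge\neg\bstar_q\at$, as required.

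The only step needing care is the passage from the raw $\bstar$ clause to the value $\fr(\at)$. The raw clause is existential over worlds, so in principle $\bstar_{=\frac{r}{n}}$ could be witnessed at one world while $\bstar_q$ fails at every world. Lemma~\ref{lemma:mu} rules this out: for an atom, the fraction $|\{\vNseg\in\Pg : \model,w\vDash_\vNseg \at\}|/|\Pg|$ equals $\fr(\at)$ at every $w\in W$. So I invoke the lemma, via Th.~\ref{theorem:AlternativeStar}, rather than reasoning from the bare clause. Everything else is arithmetic.
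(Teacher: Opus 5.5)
Your argument is correct and is essentially the paper's own proof. The paper also assumes $\sat_{q'}\at\wedge\neg\bstar_{q'}\at$, first replaces $q'$ by $\frac{l}{n}$ (via Th.~\ref{theorem:Top=White} and upward closure of $\neg\bstar$), and then uses Th.~\ref{theorem:AlternativeStar} to obtain $\frac{r}{n}<\frac{l}{n}$, contradicting $l\leq r$; you run the same chain $q\leq\frac{l}{n}\leq\frac{r}{n}=\fr(\at)$ directly, and like the paper you never use $r-l\leq n-m$. Also, the step you flag as delicate needs no appeal to Lemma~\ref{lemma:mu}: the world witnessing $\bstar_{=\frac{r}{n}}\at$ has fraction exactly $\frac{r}{n}\geq q$, so it already witnesses $\bstar_q\at$.
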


\begin{proof}
    By contraposition, assume that for some $q'\in \mathbb{Q}_{[0,1]}$, $\model, w_m \vDash_{\vseg} \sat_{q'} \at \wedge \neg \bstar_{q'} \at$.
    If there is such $q'$, then the formula is satisfied when $q'$ is substituted with $\frac{l}{n} $ as well, due Th.~\ref{theorem:Top=White}
    and the trivial observation that if $\neg \bstar_{q'} \at$ is satisfied in a world $w$ and $q''\geq q'$, then also $\neg \bstar_{q''} \at$ is satisfied in $w$.
        By Theorems~\ref{theorem:AlternativeStar} and~\ref{theorem:=TopStarProp}, the following, respectively, hold: $\fr (\at)=\frac{r}{n}$ and $\model, w_m \vDash_{\vseg} \bstar_{q} \at$ iff $q \leq \frac{r}{n}$.
    Hence, $\frac{r}{n} < \frac{l}{n}$ and so $r < l$. 
\end{proof}


\subsection{Expressing the probability of realizing
a desired frequency}

In this section, we consider Problem~\ref{prob2}, which deals with the realization of a given frequency compatible with an expected one.
\probTwo*

\noindent
Given $\vseg$ and $\vNseg \in \Pg \cup \{\vseg\}$, the probability we are looking for is:

$$P(\vNseg \in \Pg | \forall i \leq m (\vNseg (w_i)=\vseg (w_i)) ).
$$
Let us call $P^{\vseg}_{w_m}$ this probability.

As seen, when all outputs in $\Omega$ are equiprobable, this probability can be defined in terms of formulas expressing the compatibility of the assignment $\vseg$ with $\Pg$ (see Sec.~\ref{subsec:CheckCompatibility}).

\begin{theorem}
\label{conj:probabilityFrequence}
    For any model $\model=(W, \Rexp, \Pg,\vseg )$ associated with a given $\Omega$ and s.t.~$|W|=n$, world $w_m \in W$, if the following formula holds
    $$
    \model, w_m \vDash_{\vseg} \bigwedge_i^{\at_i\in \At_\Omega} \sat_{=\frac{l_i}{n}}\at_i
    \wedge
    \bstar_{=\frac{r_i}{n}}\at_i 
    $$
    and the values of $\Omega$ are equiprobable, then the probability $P^{\vseg}_{w_m}$ that $\vseg$ is compatible with $\Pg$ evaluated at world $w_m$ is:
    $$
    P^{\vseg}_{w_m}=\frac{\prod ^{| \At_\Omega|}_{i=1}{{n - m - \sum ^{i-1}_{j=1} (r_j - l_j)}\choose{r_i - l_i}}}{| \At_\Omega|^{n-m}}
    $$
\end{theorem}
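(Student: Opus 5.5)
Taking the statement at face value, the plan is to turn the logical hypothesis into arithmetic and then count completions. First, by the clause for $\sat$ in Def.~\ref{def:semantics} and Th.~\ref{Univocita=white}, $\model, w_m \vDash_{\vseg} \sat_{=\frac{l_i}{n}}\at_i$ says exactly that $\vseg$ assigns $\at_i$ to $l_i$ of the worlds $w_1,\dots,w_m$. By Th.~\ref{theorem:AlternativeStar}, $\model, w_m \vDash_{\vseg}\bstar_{=\frac{r_i}{n}}\at_i$ says $\fr(\at_i)=\frac{r_i}{n}$, so every $\vNseg\in\Pg$ assigns $\at_i$ to exactly $r_i$ worlds of $W$. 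An assignment $\vNseg$ that agrees with $\vseg$ on $w_1,\dots,w_m$ is therefore in $\Pg$ iff, among the remaining $n-m$ worlds $w_{m+1},\dots,w_n$, it assigns exactly $r_i-l_i$ of them to $\at_i$ for every $i$. (If some $r_i<l_i$ there is no such completion; the formula then counts zero, so we may assume $l_i\le r_i$, as in Th.~\ref{prop:cons}.) Also $\sum_i (r_i-l_i)=n-m$, since $\sum_i r_i=n$ and $\sum_i l_i=m$.

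Second, compute the sample space and the favourable set. Because the values of $\Omega$ are equiprobable and successive trials are independent, every completion of the first $m$ outcomes is equally likely. By Def.~\ref{def:Assignments} each of the $n-m$ remaining worlds receives exactly one of $|\At_\Omega|$ atoms, so there are $|\At_\Omega|^{n-m}$ completions, which gives the denominator. For the numerator I would use the same sequential-choice argument as in the base case of Lemma~\ref{lemma:mu}. Order the atoms $\at_1,\dots,\at_{|\At_\Omega|}$. For $\at_i$, choose its $r_i-l_i$ worlds among the $n-m-\sum_{j<i}(r_j-l_j)$ worlds still free, which gives $\binom{n-m-\sum_{j<i}(r_j-l_j)}{r_i-l_i}$ choices. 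Different choice sequences give different assignments and every favourable assignment arises this way, so the product counts the favourable completions exactly. The last factor equals $1$ because of the sum identity above, exactly as noted in Lemma~\ref{lemma:mu}, so it does no harm to include it.

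Third, $P^{\vseg}_{w_m}$ is the ratio of favourable to total equiprobable completions, which is the stated formula.

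The main obstacle is not the counting, which is a standard multinomial argument. It is making the probabilistic reading precise. I must fix the conditional probability $P(\vNseg\in\Pg\mid \forall i\le m\,(\vNseg(w_i)=\vseg(w_i)))$ as uniform measure over all assignments agreeing with $\vseg$ up to $w_m$. This is where the equiprobability hypothesis is used: it justifies the uniform measure on the $|\At_\Omega|^{n-m}$ completions. I would state this identification explicitly before counting.
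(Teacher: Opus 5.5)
Your proposal is correct and follows the same route as the paper: you read off $l_i$ (the occurrences of $\at_i$ among $w_1,\dots,w_m$) and $r_i$ (the total count of $\at_i$ fixed by $\Pg$) from the hypothesis, count the favourable completions of $w_{m+1},\dots,w_n$ with the sequential product of binomials, and divide by the $|\At_\Omega|^{n-m}$ equiprobable completions. Two refinements over the paper's text are welcome but do not change the argument: you state explicitly that trials are independent, which the paper uses tacitly when it infers that all assignments are equiprobable, and you treat the degenerate case $r_i<l_i$.
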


\begin{proof}
    If the values of $\Omega$ are equiprobable, then all assignments are equiprobable.
    If all assignments are equiprobable, then the probability that $\vseg$ is in $\Pg$ is the ratio of the number of assignments in $\Pg$ behaving like $\vseg$ in the first $m$ worlds over the total number of all assignments behaving like $\vseg$ in these worlds.
    To count the number of assignments in $\Pg$, which behave like $\vseg$ in the first $m$ worlds, we consider how many ways there are to assign worlds after $w_m$ to the atoms so that the frequencies in $\Pg$ are realized.
The expression below
$$
\prod ^{| \At_\Omega|}_{i=1}{{n - m - \sum ^{i-1}_{j=1} (r_j - l_j)}\choose{r_i - l_i}}
$$
\noindent
computes precisely this.
In particular, each factor $n - m - \sum ^{i-1}_{j=1} (r_j - l_j)$ represents the worlds after $w_m$ left open by the assignment of the atoms $\at_1,\dots,p_{i-1}$, and $r_i - l_i$ is the number of worlds to be still assigned to $\at _i$.
On the other hand, to count the total number of assignments behaving like $\vseg$ in the wolds preceding $w_m$, we consider how many ways there are \textit{in general} to assign worlds after $w_m$ to the atoms in $\At_\Omega$.
This is precisely what the expression
$
| \At_\Omega|^{n-m}
$
computes.
\end{proof}

\begin{ex}\label{ex:probabilitaPg}
For $|\At_{\Omega}|=2$, the 
expression for the probability that 
$\vseg$ is 
compatible with $\Pg$ can be simplified as follows:
$$
\frac{{{n-m}\choose{r_1 - l_1}}\times {{n-m - (r_1-l_1)}\choose{r_2 - l_2}}}{2^{n-m}}
$$
Let $\model =(W, \Rexp, \Pg,\vseg)$ be s.t.~$|W|= 2$, $\At_\Omega= \{\at_{\head},\at_{\tail}\}$, $\fr(\at_{\head})=\fr(\at_{\tail})=\frac{1}{2}$ and $\vseg$ be so defined that $w_1 \in \vseg(\at_{\head})$. At $w_1$ the probability that $\vseg$ is compatible with $\Pg$ is $\frac{1}{2}$.
This can be expressed in our language as follows:
$$
\model, w_1 \vDash_{\vseg} (\sat_{=\frac{1}{2}} \at_{\head} \wedge \bstar_{=\frac{1}{2}}\at_{\head}) \wedge
(\sat_{=0} \at_{\tail} \wedge \bstar_{=\frac{1}{2}}\at_{\tail}).
$$
Indeed, as desired,
$
P^{\vseg}_{w_1} = \frac{{{2-1}\choose{0}}\times {{2-1 - 0}\choose{1}}}{2} = \frac{1}{2}.
$
\\
Consider also $\model' = (W, \Rexp, \Pg ',\vseg)$, with $\Pg'$ s.t.~$\fr(\at_\head)=1$ and $\fr(\at_\tail)=0$. 
The probability that $\vseg$ is compatible with $\Pg'$ is expressed by the formula:
$$
\model, w_1 \vDash_{\vseg} (\sat_{=\frac{1}{2}} \at_{\head} \wedge \bstar_{=1}\at_{\head}) \wedge
(\sat_{=0} \at_{\tail} \wedge \bstar_{=0} \at_{\tail})
$$
and, as desired,
$
P^{\vseg}_{w_1} =\frac{{{1}\choose{2 - 1}}\times {{1- 1}\choose{0}}}{2} = \frac{1}{2}.
$
\\
As a more complex example, consider the model $
\model''
= (W, \Rexp, \Pg,\vseg)$ with $|W| = 6 $, and, again, $\fr (\at_{\head})=\fr (\at_{\tail})= \frac{1}{2}$.
As for $\vseg$, assume $w_1\in\vseg (\at_{\head})$ and $w_2,w_3\in\vseg (\at_{\tail})$
(the rest of the assignment is irrelevant).
The probability that $\vseg$ ends in $\Pg$ evaluated at $w_3$ is $\frac{3}{8}$, since there are only three assignments in $\Pg$ 
assigning $w_1$ at $\at_{\head}$, and both $w_2$ and $w_3$ at $\at_{\tail}$, and only eight assignments in general behaving like this on the first three worlds.
This is in consistent with our Theorem, since
$$
    \model'', w_3 \vDash_{\vseg} (\sat_{=\frac{1}{6}}\at_{\head}
    \wedge
    \bstar_{=\frac{3}{6}}\at_{\head})
    \wedge
    (\sat_{=\frac{2}{6}}\at_{\tail}
    \wedge
    \bstar_{=\frac{3}{6}}\at_{\tail})
    $$
and

$$
    P^{\vseg}_{w_3} =\frac{{{6 - 3}\choose{3 - 1}}{{6 - 3 -  (3 - 1)}\choose{3 - 2}}}{2^{6-3}}=
    \frac{{{3}\choose{2}}{{1}\choose{1}}}{2^{3}}=
    \frac{{3}}{8}
$$ 
\end{ex}

\noindent
Th.\ref{conj:probabilityFrequence} can be generalized to cases in which the outcomes in $\Omega$ are \textit{not} equiprobable.
However, we still need the assumption 
that all outcomes in $\Omega$ are independent, that is that the satisfaction of $\at \in \At_\Omega$ at world $w_m$ is not affected by what happens in the rest of the series.

\begin{theorem}
\label{conj:probabilityFrequenceGeneral}
    For any model $\model=(W, \Rexp, \Pg,\vseg )$ associated with a given $\Omega$ and s.t.~$|W|=n$, world $w_m \in W$, if the following holds
    $$
    \model, w_m \vDash_{\vseg} \bigwedge_i^{\at_{\omega_i}\in \At_\Omega} \sat_{=\frac{l_i}{n}}\at_{\omega_i}
    \wedge
    \bstar_{=\frac{r_i}{n}}\at_{\omega_i} 
    $$
    and the values of $\Omega$ are all independent,
    then the probability $P^{\vseg}_{w_m}$ that $\vseg$ is compatible with $\Pg$ evaluated at world $w_m$ is:
    $$
    P^{\vseg}_{w_m}=\prod ^{| \At_\Omega|}_{i=1}{{n - m - \sum ^{i-1}_{j=1} (r_j - l_j)}\choose{r_i - l_i}}\times \prod ^{| \At_\Omega|}_{i=1}{P(\at _i)^{r_i-l_i}}
    $$
\noindent
Where $P(\at_{\omega_i})$ is the probability associated with output $\omega_i$.
\end{theorem}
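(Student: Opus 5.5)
We follow the same strategy as in the proof of Th.~\ref{conj:probabilityFrequence}, but replace the uniform counting measure with a product measure on continuations. First we read off the data from the hypothesis. By Prop.~\ref{prop:auxCons} and Th.~\ref{theorem:AlternativeStar}, the conjunct $\sat_{=\frac{l_i}{n}}\at_{\omega_i}$ says that $\vseg$ assigns exactly $l_i$ of the worlds $w_1,\dots,w_m$ to $\at_{\omega_i}$. The conjunct $\bstar_{=\frac{r_i}{n}}\at_{\omega_i}$ says that $\fr(\at_{\omega_i})=\frac{r_i}{n}$, so every $\vNseg\in\Pg$ assigns exactly $r_i$ worlds of $W$ to $\at_{\omega_i}$. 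Hence an assignment $\vNseg$ that agrees with $\vseg$ on $w_1,\dots,w_m$ belongs to $\Pg$ iff, among the $n-m$ remaining worlds $w_{m+1},\dots,w_n$, it assigns exactly $r_i-l_i$ of them to $\at_{\omega_i}$ for each $i$. If some $r_i<l_i$, then $\vseg$ is not compatible up to $w_m$ (cf.\ Th.~\ref{prop:cons}), both sides are $0$ under the convention that a binomial with negative lower index is $0$, and the claim holds trivially. Otherwise $\sum_i (r_i-l_i)=n-m$.

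Next we fix the probability model. The independence hypothesis means that the outcome at each of the worlds $w_{m+1},\dots,w_n$ is drawn independently of the others and of the observed prefix, with $\at_{\omega_i}$ occurring with probability $P(\at_{\omega_i})$. Conditioning on the prefix $\vseg\restriction\{w_1,\dots,w_m\}$ therefore leaves the joint distribution of the suffix unchanged. The probability of any particular suffix that realises counts $(k_1,\dots,k_{|\At_\Omega|})$ is then $\prod_i P(\at_{\omega_i})^{k_i}$. This value depends only on the counts $k_i$, not on which worlds carry which atom.

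We then count and sum. The event $P^{\vseg}_{w_m}$ is the disjoint union, over all suffixes with $k_i=r_i-l_i$, of these elementary events. Each such event has the same probability $\prod_i P(\at_{\omega_i})^{r_i-l_i}$. Their number is the multinomial coefficient, obtained exactly as in the proof of Th.~\ref{conj:probabilityFrequence}: choose $r_1-l_1$ of the $n-m$ free worlds for $\at_{\omega_1}$, then $r_2-l_2$ of the remaining $n-m-(r_1-l_1)$ for $\at_{\omega_2}$, and so on. This gives $\prod_i\binom{n-m-\sum_{j<i}(r_j-l_j)}{r_i-l_i}$. Multiplying the count by the common probability yields the stated formula.

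As a sanity check, the equiprobable case $P(\at_{\omega_i})=1/|\At_\Omega|$ recovers the formula of Th.~\ref{conj:probabilityFrequence}, because $\sum_i(r_i-l_i)=n-m$.

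The main obstacle is the modelling step: making precise that ``independence'' licenses the product form of the conditional probability. In particular, $P^{\vseg}_{w_m}$ must be read as the probability under the i.i.d.\ measure on $\At_\Omega^{\,n-m}$, not as a ratio of counts of assignments. Once that reading is fixed, the remaining steps are the same routine multinomial count as before.
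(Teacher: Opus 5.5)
Your proposal is correct and follows essentially the same route as the paper. You count the completions of the observed prefix with the same product of binomials as in the equiprobable case, note that by independence each completion has probability $\prod_i P(\at_i)^{r_i-l_i}$, and multiply the count by this common probability; your explicit i.i.d.\ reading of $P^{\vseg}_{w_m}$ and your handling of the degenerate case $r_i<l_i$ make precise points the paper leaves implicit.
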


\begin{proof}
    The proof is similar to the one for Th.~\ref{conj:probabilityFrequence}.
    First, we count the possible ways to assign worlds after $w_m$ to the atoms so that the frequencies in $\Pg$ are realized. This part of the proof is proved exactly as before.
    Then, we obtain the probability of each of these ways of completing the series.
    This is easier than it might seem because the atoms verified in the worlds $w_{m+1},\ldots, w_n$ are still the same.
    We just need to verify $r_1-l_1$ times the atom $\at_{\omega_1}$, and we know that the probability that this happens is $P(\at_{\omega_1})^{r_1-l_1}$.
    The same applies to the other atoms.
    Hence, the probability of each of these endings of the series is
$
    \prod ^{| \At_\Omega|}_{i=1}{P(\at _i)^{r_i-l_i}}.
$
    The desired result is obtained by summing together the probabilities of each of these endings, that is by multiplying this probability for the number of such endings.
\end{proof}

\begin{ex}\label{ex:probabilitaPgGeneral}
Let us consider again Ex.~\ref{ex:probabilitaPg}.
This time, let us assume that $\head$ and $\tail$ are not equiprobable, but that $P(\head)=\frac{2}{3}$ and $P(\tail)=\frac{1}{3}$.
As for the first model, it still holds that
$$
\model, w_1 \vDash_{\vseg} (\sat_{=\frac{1}{2}} \at_{\head} \wedge \bstar_{=\frac{1}{2}}\at_{\head}) \wedge
(\sat_{=0} \at_{\tail} \wedge \bstar_{=\frac{1}{2}}\at_{\tail})
$$
However, this time the probability of ending with the distribution of frequency in $\Pg$ is:
\footnotesize
$$
P^{\vseg}_{w_1} = {{2-1}\choose{0}}\times {{2-1 - 0}\choose{1}}\times P(\at_\head)^0 \times P(\at_\tail)^1 = \frac{1}{3}.
$$
\normalsize
Indeed, although $w_2$ can still satisfy either $\at_\tail$ or $\at_\head$, the probability that $\at_\tail$ is satisfied (the only one leading to an assignment in $\Pg$) is $\frac{1}{3}$.

Let us skip the second model (which is a trivial variation of the first) and focus on the third one, keeping the previous assumptions on the probabilities of $\head$ and $\tail$.
It still holds that
$$
    \model'', w_3 \vDash_{\vseg} (\sat_{=\frac{1}{6}}\at_{\head}
    \wedge
    \bstar_{=\frac{3}{6}}\at_{\head})
    \wedge
    (\sat_{=\frac{2}{6}}\at_{\tail}
    \wedge
    \bstar_{=\frac{3}{6}}\at_{\tail})
    $$
However, this time the probability of ending with the frequency distribution compatible with $\Pg$ is:
$
P^{\vseg}_{w_3} ={{6 - 3}\choose{3 - 1}}\times{{6 - 3 -  (3 - 1)}\choose{3 - 2}}\times P(\at_\head)^2 \times P(\at_\tail)^1 =
    \frac{{4}}{9}.
$
Notice that, not surprisingly, this probability is higher than that for a fair coin, since we want more $\head$ than $\tail$, and this result shows that our coin is biased.
\end{ex}

\subsection{
Expressing the probability of subsequent outcomes}

Finally, we consider the problem of estimating at one point of the series the probability of an outcome in the (relative) future, given that we know its global frequency.

\probThree*

\noindent
Given a model $\model = (W, \Rexp, \Pg,\vseg)$, the probability that we are looking for is the following one:
$$
P(\model , w_{m+1}\vDash_\vNseg \fOne \; | \; \vNseg \in \Pg \land  \forall i \leq m (\vNseg (w_i)=\vseg (w_i)) ).
$$
Our operator $\wnext$ is explicitly designed to compute this.

\begin{theorem}
\label{theorem:probabilityNext}
    For any model $\model = (W, \Rexp, \Pg,\vseg)$, world $w_m\in W$, $q\in \mathbb{Q}_{[0,1]}$ and formula $\fOne \in \PL$
    $$
    \model, w_m \vDash_\vseg \wnext_{q} \fOne
    $$
    iff the probability that $ \model, w_{m+1} \vDash_\vseg \fOne$ under condition that $\vseg\in \Pg$ is at least $q$.
    Moreover,
    $$
    \model, w_m \vDash_\vseg \wnext_{=q} \fOne
    $$
    iff the probability that $ \model, w_{m+1} \vDash_\vseg \fOne$ under condition that $\vseg\in \Pg$ is exactly $q$.
\end{theorem}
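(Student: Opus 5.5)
The plan is to unfold the semantic clause for $\wnext_q$ in Def.~\ref{def:semantics} and read it as a conditional probability. The target quantity is
$$
P(\model , w_{m+1}\vDash_\vNseg \fOne \; | \; \vNseg \in \Pg \land  \forall i \leq m (\vNseg (w_i)=\vseg (w_i)) ),
$$
where $\vNseg$ ranges over assignments and the outcomes are equiprobable (so assignments are equiprobable, as assumed for $\wnext$ in Sec.~\ref{sec:semantics}). I will fix the conditioning set $C=\{\vNseg' \in \Pg : \forall \at, i (i\leq m \to (w_i \in \vNseg'(\at) \leftrightarrow w_i\in \vseg(\at)))\}$. The condition ``$\forall i\le m\,(\vNseg(w_i)=\vseg(w_i))$'' is just a restatement, in the notation of Def.~\ref{def:Assignments}, of agreement on every atom of $\At_\Omega$ on $w_1,\dots,w_m$. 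This holds because each world receives exactly one atom.

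First I will check that, under the uniform distribution on assignments, the conditional probability is the ratio of the count of assignments in $C$ satisfying $\fOne$ at $w_{m+1}$ to $|C|$. This is elementary: $P(A\mid B)=|A\cap B|/|B|$ for a uniform measure on a finite set. It requires $C\neq\emptyset$, i.e.\ $\vseg$ compatible with $\Pg$ up to $w_m$ in the sense of Def.~\ref{def:compatiUpTo}; if $C=\emptyset$ both sides are undefined or degenerate, and I will state that case separately or assume compatibility. Also $m<n$ is needed so that $w_{m+1}$ exists.

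The numerator of the clause for $\wnext_q$ is exactly $A\cap C$, where $A$ is the event ``$\model, w_{m+1}\vDash_{\vNseg'}\fOne$''. The denominator is exactly $|C|$. Hence the first claim is immediate: the fraction is at least $q$ iff the probability is at least $q$. For the second claim I use Notation~\ref{notation:=}, under which $\wnext_{=q}$ replaces $\ge$ by $=$ in the same fraction. The same identification then gives equality; alternatively, it follows from Th.~\ref{theorem:Top=White} and Th.~\ref{Univocita=white}, since the fraction has a unique value.

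The main obstacle is interpretive rather than computational. The statement speaks of ``$\model, w_{m+1}\vDash_\vseg\fOne$ under the condition $\vseg\in\Pg$''. This must be read with $\vseg$ treated as a random assignment constrained to agree with the observed one on the first $m$ worlds; otherwise the probability would be trivially $0$ or $1$. I will make this reading explicit before the computation. I will also note that $\fOne$ may be modal, possibly containing $\bbox$, $\bstar$ or nested $\wnext$. This causes no problem, because truth of $\fOne$ at $w_{m+1}$ under each $\vNseg'$ is a well-defined event by Def.~\ref{def:semantics}, so no induction on $\fOne$ is required.
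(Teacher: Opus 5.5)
Your argument is the paper's own: both identify the numerator and denominator of the $\wnext_q$ clause with $|A\cap C|$ and $|C|$, read their ratio as the conditional probability, and handle $\wnext_{=q}$ with the same fraction. The paper does add one refinement: your equiprobability hypothesis is unnecessary, because every assignment in $C\subseteq\Pg$ makes each atom $\at$ true in exactly $\fr(\at)\cdot n$ worlds, so for any independent, identically distributed trials all of them have the same probability $\prod_{\at}P(\at)^{\fr(\at)\cdot n}$, and conditioning on $C$ already gives the uniform measure on $C$.
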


\begin{proof}
    As for Th.~\ref{conj:probabilityFrequence}, the probability is obtained by counting the ratio of assignments that satisfy given restrictions.
    In this case, we count the assignments in $\Pg$ which behave like $\vseg$ in the first $m$ worlds and satisfy $\fOne$ in the world $w_{m+1}$ on top of the assignments in $\Pg$ behaving like $\vseg$ in the first $m$ worlds.
    Notice that, since we count only assignments in $\Pg$, and these assign the same number of worlds to the atoms in $\At_\Omega$, in this case, we do not need to assume that the outcomes in $\Omega$ are equiprobable.
\end{proof}

\noindent
For atomic formulas, we can state the previous result in the following, more elegant, manner:

\begin{theorem}
\label{theorem:probabilityNextAtom}
    For any model $\model = (W, \Rexp, \Pg,\vseg)$, world $w_m\in W$, $q\in \mathbb{Q}_{[0,1]}$
    $$
    \model, w_m \vDash_\vseg \bigwedge_i^{\at_i\in \At_\Omega}  \wnext_{q_i} \at_i
    $$
    iff, for every $\at_i\in \At_\Omega$, the probability that $ \model, w_{m+1} \vDash_\vseg \at_i$ proviso that $\vseg\in \Pg$ is exactly $q_i$.
\end{theorem}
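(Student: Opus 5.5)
The plan is to reduce the statement to Th.~\ref{theorem:probabilityNext} applied atom by atom, and then use the mutual exclusivity and exhaustiveness of the atoms in $\At_\Omega$ to turn the family of lower bounds into exact values. I read the statement with three standing assumptions that it seems to presuppose. First, $m<n$, so that $w_{m+1}$ exists. Second, $\vseg$ is compatible with $\Pg$ up to $w_m$ in the sense of Def.~\ref{def:compatiUpTo}, so that the denominator in the clause for $\wnext$ is nonzero. Third, the indices satisfy $\sum_i q_i = 1$. Without the third assumption the statement fails as written, since $\wnext_{q_i}$ only gives lower bounds.

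Write $C$ for the set of assignments $\vNseg' \in \Pg$ that agree with $\vseg$ on $w_1,\dots,w_m$, and $C_i \subseteq C$ for those with $w_{m+1} \in \vNseg'(\at_i)$. By Def.~\ref{def:semantics}, $\model, w_m \vDash_\vseg \wnext_{q_i}\at_i$ iff $|C_i|/|C| \ge q_i$. By Th.~\ref{theorem:probabilityNext}, the ratio $|C_i|/|C|$ is precisely the conditional probability that $\at_i$ holds at $w_{m+1}$ given $\vseg\in\Pg$. Call this probability $P_i$.

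The key step is to observe that, by Def.~\ref{def:Assignments}, every $\vNseg' \in C$ places $w_{m+1}$ in exactly one $\vNseg'(\at_i)$. Hence the $C_i$ partition $C$, and therefore $\sum_i P_i = 1$.

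The two directions then follow.
\begin{itemize}
\item ($\Leftarrow$) If each $P_i = q_i$, then each $P_i \ge q_i$, and the conjunction holds by the clause for $\wedge$.
\item ($\Rightarrow$) If the conjunction holds, then $P_i \ge q_i$ for every $i$, and $\sum_i (P_i - q_i) = 1 - 1 = 0$ is a sum of nonnegative terms. So every term vanishes, that is, $P_i = q_i$.
\end{itemize}
Equivalently, in the notation of Notation~\ref{notation:=} and Th.~\ref{theorem:Top=White}, each conjunct can be strengthened to $\wnext_{=q_i}\at_i$.

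I expect the main obstacle to be making these implicit hypotheses explicit rather than the computation itself. If the denominator $|C|$ is zero, the $\wnext$ clause is ill-defined, and one has to fix a convention or exclude this case. If $\sum_i q_i<1$, the left-to-right direction genuinely fails. So I would first state the assumptions $m<n$, compatibility up to $w_m$, and $\sum_i q_i=1$, and only then run the partition argument above.
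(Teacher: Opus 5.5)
Your proposal is correct and follows the paper's route. The paper's one-line proof likewise combines Th.~\ref{theorem:probabilityNext} with Def.~\ref{def:Assignments} (exactly one atom per world), and your partition-and-sum argument is the spelled-out version of that remark. The hypotheses you make explicit ($m<n$, $\vseg$ compatible with $\Pg$ up to $w_m$, and $\sum_i q_i = 1$) are genuinely needed but left implicit in the paper; in particular, without $\sum_i q_i=1$ the left-to-right direction fails, e.g.\ when all $q_i=0$.
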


\begin{proof}
    This property follows from Th.~\ref{theorem:probabilityNext} and Def.~\ref{def:Assignments}, imposing that exactly one atom is verified in each world. 
\end{proof}

\noindent
To illustrate this point, we recall and extend Ex.~\ref{ex3}:
\begin{ex}
Let us assume that Tiresias, who foresees the future, told us that today Ulysses will toss the coin four times and obtain $\tail$ half of the time.
%
Moreover, let us assume that Ulysses has thrown the coin the first time, obtaining $\tail$.
We can represent this situation using $\model = (W, \Rexp, \Pg,\vseg)$ with $|W|=4$, $\Omega = \{\head,\tail\}$, $\Pg$ s.t.~$\fr(\at_\head)=\fr(\at_{\tail})=\frac{1}{2}$, and $w_1\in\vseg (\at_\head)$.
If we trust Tiresias' prediction, we will express the probability of the next toss as follows:
$$
\model, w_1 \vDash_\vseg \wnext_{q} \at_{\head} \wedge \wnext_{q'
} \at_{\tail}
$$
To calculate the values of $q$ and $q'$, we use the semantics of Def.~\ref{def:semantics},
    $\model, w_1 \vDash_{\vseg} \wnext_{q
    } \at_{\head}$ iff 
    $$
    \frac{|\{ \vNseg \; : \; \vNseg \in \Pg \; \wedge \; w_1 \in \vNseg(\at_{\tail}) \; \wedge \; w_2 \vDash_{\vNseg} \at_\head\}|}{|\{\vNseg \; : \; \vNseg \in \Pg \wedge w_1 \in \vNseg(\at_{\tail})\}|} = \frac{2}{3} \ge q
    $$ 
%
%
%
Indeed, the only assignments $\vNseg',\vNseg'',\vNseg''' \in \Pg$ for $|W|=4$ and s.t.~$w_1$ is assigned to $\at_{\tail}$ are precisely the ones such that either: (i) $w_2,w_3\in \vNseg'(\at_\head)$ and $w_4\in \vNseg'(\at_\tail)$ or (ii) $w_2, w_4 \in \vNseg''(\at_{\head})$ and $w_3\in \vNseg '' (\at_\tail)$ or (iii) $w_3,w_4 \in \vNseg'''(\at_{\head})$ and $w_2 \in \vNseg'''(\at_{\tail})$.
These are the three and only possible assignments  compatible with $\Pg$. 
Additionally, if we also impose that $w_2$ is assigned to $\at_{\head}$, 
clearly only the former two assignments are acceptable.
A similar story could be told for $\model, w_1 \vDash_{\vseg} \wnext_{q'
    } \at_{\tail}$, leading to $q'\leq \frac{1}{3}$.
Then, Ulysses tosses the coin for the second time and, again, we have $\tail$. Due to Tiresias' premonition, we know that then, we will always obtain $\head$. 
Indeed, the following relation holds in our model:
$$
\model, w_2 \vDash_\vseg \wnext_1 \at_{\head} \wedge \wnext_0 \at_{\tail}
$$
Notably this situation essentially corresponds to Ex.~\ref{ex3}, which can be formalized as done here with Tiresias's prediction. 

We can consider more complicated formulas. For example, after the first toss, we already knew that \emph{if} the second toss had returned $\tail$, then the third toss would have been $\head$ with (at least) probability 1, while \emph{if} the second toss had returned $\head$, then the third toss would have been $\head$ with (at least) probability $\frac{1}{2}$.
Formally, this is obtained by considering the same model $\model$ and by showing that the following satisfiability relation holds:
$$
\model, w_1 \vDash_\vseg \wnext_1((\at_{\tail} \wedge \wnext_{1} \at_{\head}) \vee (\at_{\head} \wedge \wnext_{\frac{1}{2}} \at_{\tail}))
$$
Indeed, by Def.~\ref{def:semantics}, this formula holds when
\small
$$
    \frac{|\{\vNseg : \vNseg \in \Pg \wedge w_1 \in \vNseg(\at_{\tail}) \wedge \model, w_2 \vDash_{\vNseg} (\at_{\tail} \wedge \wnext_{1} \at_{\head})  \lor  (\at_{\head} \wedge \wnext_{\frac{1}{2}} \at_{\tail})\}|}{|\{\vNseg : \vNseg \in \Pg \wedge w_1 \in \vNseg(\at_{\tail})\}|} \ge 1
$$
\normalsize
The set at the numerator contains the assignments $a\in \Pg$ s.t.~$\model, w_1 \vDash_{\vNseg} \at_{\tail}$ and either $\model, w_2 \vDash_{\vNseg} \at_{\tail}$ and $\model, w_2 \vDash_{\vNseg}  \wnext_1 \at_{\head}$, or $\model, w_2 \vDash_{\vNseg} \at_{\head}$ and $\model, w_2 \vDash_{\vNseg}  \wnext_{\frac{1}{2}} \at_{\tail}$.
Let us see that this set contains three assignments:
\begin{itemize}
    \item The assignments s.t. $\model, w_1 \vDash_{\vNseg} \at_{\tail}$, $\model, w_2 \vDash_{\vNseg} \at_{\tail}$ and $\model, w_2 \vDash_{\vNseg}  \wnext_1 \at_{\head}$ are those $\vNseg \in \Pg$ s.t. $w_1,w_2\in \vNseg (\at_{\tail})$ and
\footnotesize
$$
\frac{|\{\vNseg' : \vNseg' \in \Pg \wedge \forall \at, i(i\leq m \to (w_i \in \vNseg '(\at) \leftrightarrow w_i \in \vNseg (\at)) \wedge w_3\in\vNseg'(\at_{\head})\}|}{|\{\vNseg' : \vNseg' \in \Pg \wedge \forall \at, i(i\leq m \to (w_i \in \vNseg '(\at) \leftrightarrow w_i \in \vNseg (\at)) \}|} \ge 1 
$$
\normalsize
It should be obvious that there is only one assignment in $\Pg$ s.t. $w_1,w_2\in \vNseg (\at_{\tail})$, and it satisfies also the third condition, since it assigns $w_3$ (and $w_4$) to $\at_{\head}$;

\item On the other hand, the assignments s.t. $\model, w_1 \vDash_{\vNseg} \at_{\tail}$, $\model, w_2 \vDash_{\vNseg} \at_{\head}$ and $\model, w_2 \vDash_{\vNseg}  \wnext_{\frac{1}{2}} \at_{\tail}$ are those $\vNseg \in \Pg$ s.t. $w_1\in \vNseg (\at_{\tail})$, $w_2\in \vNseg (\at_{\head})$, and
\footnotesize
$$
\frac{|\{\vNseg' : \vNseg' \in \Pg \wedge \forall \at, i(i\leq m \to (w_i \in \vNseg '(\at) \leftrightarrow w_i \in \vNseg (\at)) \wedge w_3\in\vNseg'(\at_{\tail})\}|}{|\{\vNseg' : \vNseg' \in \Pg \wedge \forall \at, i(i\leq m \to (w_i \in \vNseg '(\at) \leftrightarrow w_i \in \vNseg (\at)) \}|} \ge \frac{1}{2} 
$$
\normalsize
There are two assignments in $\Pg$ s.t.~$w_1\in \vNseg (\at_{\tail})$, $w_2\in \vNseg (\at_{\head})$: the first one assigns $w_3$ to $\at_{\head}$ and $w_4$ to $\at_{\tail}$, while the second one does the opposite.
Hence, both of them satisfy condition, since only one of two assigns $w_3$ to $\at_{\tail}$. 
\end{itemize}
As for the set at the denominator, it contains all the assignments $a\in \Pg$ s.t. $\model, w_1 \vDash_{\vNseg} \at_{\tail}$.
Hence, since there are only three such assignments (which assign respectively $w_2$, $w_3$, or $w_4$ to $\at_{\tail}$), the fraction gives exactly $1$ as result.

\end{ex}

A natural question regards how to calculate the values of $q$ that make $\wnext _{q}\fOne$ satisfiable in a world $w_m$, given an assignment $\vseg$.
The following result focuses on atomic formulas:

\begin{theorem}
\label{theorem:ValueNext=}
Given a model $\model = (W, R_E, \Pg , \vseg)$, let us use $W^{\vseg}(\fOne)$ to denote the set $\{w\in W : w \vDash _{\vseg} \fOne \}$ of worlds that satisfy $\fOne$ according to assignment $\vseg$.
Moreover, let us use ${\Rexp} _{w_m}$ for the set $\{w\in W : w_m \Rexp w \}$ of worlds accessible from $w_m$ according to $\Rexp$, and ${\Rexp} _{w_m}^{\vseg}(\fOne)$ for the set $\{w\in W : w_m \Rexp w \land w\vDash \fOne \}$ of worlds accessible from $w_m$ according to $\Rexp$ and that satisfy $\fOne$.
For every model $\model$, world $w_m\in W$, and $\at_x\in \At_\Omega$ (where $x$  is an element of $\Omega$), it holds that:
$$
\model, w_m \vDash _{\vseg} \wnext _{=q} \at_x \quad \text{ iff } \quad  q = \frac{| W ^{\vseg}(\at_x) | - | {\Rexp} _{w_m}^{\vseg}(\at_x) |}{| W | - | {\Rexp} _{w_m} | }
$$
\end{theorem}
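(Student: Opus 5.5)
The plan is to compute the fraction in the clause for $\wnext$ in Def.~\ref{def:semantics} directly, by a counting argument of the same kind as in the base case of Lemma~\ref{lemma:mu}, restricted to the worlds after $w_m$. Two readings of the notation come first. Following the rephrasing given after Ex.~\ref{ex:wbox2}, ${\Rexp}_{w_m}$ is the initial segment $\{w_1,\dots,w_m\}$, so $|{\Rexp}_{w_m}|=m$ and $|W|-|{\Rexp}_{w_m}|=n-m$. Likewise ${\Rexp}^{\vseg}_{w_m}(\at_x)$ counts the occurrences $l$ of $\at_x$ among $w_1,\dots,w_m$ under $\vseg$. For the numerator $|W^{\vseg}(\at_x)|$ I will assume $\vseg\in\Pg$, as the informal reading of $\wnext$ does, or at least that $\vseg$ realizes the target count $r=\fr(\at_x)\times n$. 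Then $|W^{\vseg}(\at_x)|=r$, and the claim becomes $q=\frac{r-l}{n-m}$. I also assume $m<n$ and that $\vseg$ is compatible with $\Pg$ up to $w_m$ (Def.~\ref{def:compatiUpTo}); this makes the denominator of the $\wnext$-clause nonzero. Both are implicit side conditions of the statement, and I will state them explicitly.

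Let $A$ be the set of $\vNseg'\in\Pg$ that agree with $\vseg$ on $w_1,\dots,w_m$. Every $\vNseg'\in A$ puts each atom $\at_i$ on exactly $r_i-l_i$ of the worlds $w_{m+1},\dots,w_n$, where $r_i=\fr(\at_i)n$ and $l_i$ is the count of $\at_i$ in the prefix. Conversely, any such distribution of the remaining $n-m$ worlds gives an element of $A$. So $|A|$ is the multinomial product $\prod_i\binom{n-m-\sum_{j<i}(r_j-l_j)}{r_i-l_i}$, exactly as in the proof of Th.~\ref{conj:probabilityFrequence}.

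Next I count the subset $A_x\subseteq A$ of assignments with $w_{m+1}\in\vNseg'(\at_x)$. If $r-l=0$ this set is empty, matching $q=0$. Otherwise, fixing $w_{m+1}$ to $\at_x$ leaves $n-m-1$ free worlds, with $r-l-1$ of them still to be given to $\at_x$. The ratio $|A_x|/|A|$ then simplifies to $\frac{r-l}{n-m}$. This is the same telescoping computation as in Lemma~\ref{lemma:mu}, with $n$ replaced by $n-m$ and the target counts replaced by the residuals $r_i-l_i$; ordering the atoms so that $\at_x$ is placed last makes it identical. A cleaner alternative is a symmetry argument: the permutations of $\{w_{m+1},\dots,w_n\}$ act on $A$ and preserve it. Hence the number of elements of $A$ placing $\at_x$ at $w_{m+1}$ equals the number placing it at any other $w_{m+k}$. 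Summing over $k$ gives $(n-m)|A_x|=(r-l)|A|$.

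Finally, the $\wnext$-clause of Def.~\ref{def:semantics} together with Notation~\ref{notation:=} gives $\model,w_m\vDash_{\vseg}\wnext_{=q}\at_x$ iff $|A_x|/|A|=q$, which yields both directions at once. Uniqueness of $q$ is consistent with Th.~\ref{Univocita=white}. I expect the main obstacle to be the side conditions rather than the arithmetic. The identification $|W^{\vseg}(\at_x)|=\fr(\at_x)n$ really needs $\vseg\in\Pg$, and the formula is undefined when $m=n$ or when $\vseg$ is incompatible up to $w_m$. The proof will therefore have to make these hypotheses explicit, or else read $|W^{\vseg}(\at_x)|$ as the target count $\fr(\at_x)n$.
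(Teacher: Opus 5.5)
Your proposal is correct and takes essentially the paper's route: count the assignments of $\Pg$ that agree with $\vseg$ on $w_1,\dots,w_m$ (with and without the extra constraint that $w_{m+1}$ is assigned to $\at_x$) as products of binomial coefficients, then divide. The paper carries this out explicitly only for two atoms, so your double-counting argument giving $(n-m)|A_x|=(r-l)|A|$ is a tidier way to settle the general case. Your added hypotheses are also genuinely needed. The paper's computation actually ends with $\fr(\at_x)\times|W|$ in the numerator and silently identifies it with $|W^{\vseg}(\at_x)|$, which requires $\vseg\in\Pg$ (or at least that $\vseg$ realizes the target count of $\at_x$). It likewise presupposes $m<n$ and a nonzero denominator, i.e.\ compatibility up to $w_m$.
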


\begin{proof}
We just have to prove that 
\scriptsize
$$
\frac{|\{\vNseg  :{\vNseg }\in \Pg \; \wedge \; \forall \at, i(i\leq m \to (w_i \in \vNseg (\at) \leftrightarrow w_i \in \vseg (\at)) \wedge \model, w_{m+1} \vDash_{\vNseg } \at_x\}|}{|\{\vNseg  : \vNseg  \in \Pg \; \wedge \; \forall \at, i(i\leq m \to (w_i \in \vNseg (\at) \leftrightarrow w_i \in \vseg (\at)))\}|}
= \frac{| W ^{\vseg}(\fOne) | - | {\Rexp} _{w_m}^{\vseg}(\fOne) |}{| W | - | {\Rexp} _{w_m} | }
$$
\normalsize
As the denominator, 
$\{\vNseg  : \vNseg  \in \Pg \; \wedge \; \forall \at, i(i\leq m \to (w_i \in \vNseg (\at) \leftrightarrow w_i \in \vseg (\at)))\}$ contains all the assignments in $\Pg$ that behave like $\vseg$ in relation to $w_1, \ldots , w_m$.
We can count them by checking how many times each atom $p_i$ is satisfied in $W$ according to $\Pg$ and how many times it has already been satisfied in the worlds accessible by $w_m$.
Indeed, each of their elements 
corresponds to a distribution of the worlds after $w_m$ to the remaining occurrences of the atoms. 
More concretely, the assignments 
correspond to a choice of $| W ^{\vseg}(\at _1)| - | {\Rexp} _{w_m}^{\vseg}(\at_1)|$ worlds in which $\at_1$ is verified taken from the worlds \textit{after} $w_m$ (that is, $| W| - | {\Rexp} _{w_m} |$), a choice of $| W ^{\vseg}(\at _2)| - | {\Rexp} _{w_m}^{\vseg}(\at _2)|$ worlds in which $p_2$ is verified taken from the worlds \textit{after} $w_m$ that do not validate $p_1$ $| W| - | {\Rexp} _{w_m} | - (| W ^{\vseg}(\at_1)| - | {\Rexp} _{w_m}^{\vseg}(\at_1)|)$, and so on.
This leads to the following number of assignments in the denominator:\footnote{
With the convention that for every formula $\fOne$, $\sum ^{0}_{j=1} \fOne = 0$.
}

\begin{equation}
\label{eq:denum}
\prod _{i=1}^{s} {{| W | - | {\Rexp} _{w_m} | - \sum ^{i}_{j=1}(\fr (p_j)\times | W | - | {\Rexp} _{w_m}^{\vseg}(p_j) |) }\choose{\fr (p_i)\times | W | - | {\Rexp} _{w_m}^{\vseg}(p_i) |}}
\end{equation}
where:
\begin{itemize}
\itemsep0em
\item $| W | - | {\Rexp} _{w_m} |$ are the worlds after $w_m$;
\item for every atom $\at_i$, $\sum ^{i}_{j=1}(\fr (p_j)\times | W | - | {\Rexp} _{w_m}^{\vseg}(p_j)|)$ is the number of worlds after $w_m$ that verify atoms $\at_1 , \ldots , \at_{i-1}$;
\item $\fr (\at_i)\times | W |$ is the total number of worlds that verify $\at_i$ and so $\fr (\at_i)\times | W | - | {\Rexp} _{w_m}^{\vseg}(\at_i) |$ is the number of the worlds that verify $\at_i$ after $w_m$.
\end{itemize}
Following the same argument, we can obtain the number of assignments in the numerator.
These corresponds to all possible choices of assignments of atoms compatible with $\Pg$, when $w_1, \ldots , w_m$ get assigned to atoms as in $\vseg$ and $w_{m+1}$ satisfies $\at_x$.
Hence, the numerator 
contains the following number of assignments:

\begin{equation}
\label{eq:num}
\prod _{i=1}^{s} {{| W | - | {\Rexp} _{w_{m+1}} | - \sum ^{i}_{j=1}(\fr (p_j)\times | W | - | {\Rexp} _{w_{m+1}}^{\vseg '}(p_j) |) }\choose{\fr (p_i)\times | W | - | {\Rexp} _{w_{m+1}}^{\vseg '}(p_i) |}}
\end{equation}

\noindent
where $\vseg '$ behaves like $\vseg$ on $w_1, \ldots , w_m$ and assigns $w_{m+1}$ to $p_x$.
Developing the binomial coefficient, and dividing equation~\ref{eq:num} for equation~\ref{eq:denum}, we obtain the desired:

$$
\frac{\fr (p_x) \times | W| - | {\Rexp} _{w_m}^{\vseg}(p_x) |}{| W | - | {\Rexp} _{w_m} | }
$$
While proof is long and tedious, the
overall proof strategy becomes obvious if we check the case for only two atoms $p_1$ and $p_2$ (with $p_x = p_1$).
In this case, equation~\ref{eq:denum} becomes:

\begin{equation}
\label{eq:denumSimple}
{{| W | - | {\Rexp} _{w_m} | }\choose{\fr (p_1)\times | W | - | {\Rexp} _{w_m}^{\vseg}(p_1) |}}
\end{equation}

\noindent
Indeed, the second factor 

$$
{{| W | - | {\Rexp} _{w_m} | - ((\fr (p_1)\times | W | - | {\Rexp} _{w_{m+1}}^{\vseg 
}(p_1)) }\choose{\fr (p_2)\times | W | - | {\Rexp} _{w_m}^{\vseg}(p_2) |}}
$$

\noindent
gives only one combination possible, since all the worlds after $w_m$ that do not verify $p_1$ (i.e. $| W | - | {\Rexp} _{w_m} | - (\fr (p_1)\times | W | - | {\Rexp} _{w_{m+1}}^{\vseg 
}(p_1)|)$) verify $p_2$ (and so correspond to $\fr (p_2)\times | W | - | {\Rexp} _{w_m}^{\vseg}(p_2) |$).
Eq.~\ref{eq:num} becomes:

\begin{equation}
\label{eq:numSimple}
{{| W | - | {\Rexp} _{w_{m+1}} | }\choose{\fr (p_1)\times | W | - | {\Rexp} _{w_{m+1}}^{\vseg '}(p_1) |}}
\end{equation}

\begin{sloppypar}
\noindent
where it holds that $| W | - | {\Rexp} _{w_{m+1}} | = | W | - | {\Rexp} _{w_{m}} | - 1 $, and $\fr (p_1)\times | W | - | {\Rexp} _{w_{m+1}}^{\vseg '}(p_1) | = \fr (p_1)\times | W | - | {\Rexp} _{w_m}^{\vseg}(p_1) | - 1$.
Let us use $\overrightarrow{{\Rexp} _{w_m}^{\vseg}(p_1)}$ as an abbreviation for $\fr (p_1)\times | W | - | {\Rexp} _{w_m}^{\vseg}(p_1) |$.
Hence, if we develop the binomial coefficient and divide Eq.~\ref{eq:numSimple} for Eq.~\ref{eq:denumSimple}, we obtain:
$
\frac{(| W | - | {\Rexp} _{w_{m}} |-1)!}{(|\overrightarrow{{\Rexp} _{w_m}^{\vseg}(p_1)}| - 1)!(| W | - | {\Rexp} _{w_{m}} |-1 -(|\overrightarrow{{\Rexp} _{w_m}^{\vseg}(p_1)}| - 1))!}
\times
\frac{(|\overrightarrow{{\Rexp} _{w_m}^{\vseg}(p_1)}|)!(| W | - | {\Rexp} _{w_{m}} | -(|\overrightarrow{{\Rexp} _{w_m}^{\vseg}(p_1)}|))!}{(| W | - | {\Rexp} _{w_{m}} |)!}=$
$\frac{(| W | - | {\Rexp} _{w_{m}} |-1)!}{(|\overrightarrow{{\Rexp} _{w_m}^{\vseg}(p_1)}| - 1)!(| W | - | {\Rexp} _{w_{m}} | -|\overrightarrow{{\Rexp} _{w_m}^{\vseg}(p_1)}| )!}
\times
\frac{(|\overrightarrow{{\Rexp} _{w_m}^{\vseg}(p_1)}|) (|\overrightarrow{{\Rexp} _{w_m}^{\vseg}(p_1)}| -1)!(| W | - | {\Rexp} _{w_{m}} | -(|\overrightarrow{{\Rexp} _{w_m}^{\vseg}(p_1)}|))!}{(| W | - | {\Rexp} _{w_{m}} |) (| W | - | {\Rexp} _{w_{m}} | - 1)!}
$
which simplifies to
$
\frac{\fr (p_1)\times | W | - | {\Rexp} _{w_m}^{\vseg}(p_1) |}{| W | - | {\Rexp} _{w_{m}} |}.
$
\end{sloppypar}
\end{proof}

\noindent
Notice that this result holds also when the outcomes of $\Omega$ are \textit{not} equiprobable, but does not extend to non-atomic formulas.
This is not surprising, since a key ingredient of the proof is that atoms can be satisfied at any world, without restrictions imposed by $\Pg$ apart from their frequency on the entire series.
Let us see a counterexample:

\begin{ex}
Let $\model =(W, \Rexp, \Pg,\vseg)$, such that $|W|= 4$, $\At_\Omega= \{\at_{\head},\at_{\tail}\}$, $\fr(\at_{\head})=\fr(\at_{\tail})=\frac{1}{2}$ and $\vseg$ be such that $w_1,w_2 \in \vseg(\at_{\head})$ and $w_3,w_4 \in \vseg(\at_{\tail})$. 
Notice that, clearly, $\model, w_2 \vDash _{\vseg} \wnext _1 (\at_\tail \land \wnext _1 \at_\tail)$, since the only assignment in $\Pg$ assigning $w_1,w_2$ to $\at_{\head}$ assigns both $w_3$ and $w_4$ to $\at_{\tail}$.
Nonetheless, $W ^{\vseg}(\at_\tail \land \wnext _1 \at_\tail) = \{w_3\}$, ${\Rexp} _{w_2}^{\vseg}(\at_\tail \land \wnext _1 \at_\tail) = \varnothing$, and ${\Rexp} _{w_2}  = \{w_1,w_2\}$, so
$
\frac{| W ^{\vseg}(\at_x) | - | {\Rexp} _{w_m}^{\vseg}(\at_x) |}{| W | - | {\Rexp} _{w_m} | } = \frac{1 - 0}{4 - 2 }=\frac{1}{2}.
$
\end{ex}

Given a model $\model$, the following result links the satisfaction of the formula $\wnext_{=q}\at$ at world $w_m$ according to $\vseg$ for $\at$ s.t.~$w_{m+1}\in \vseg (\at)$, with the probability of $\vseg$ being compatible with $\Pg$ evaluated at $m$ (called $P^{\vseg}_{w_m}$) and at $m+1$ (called $P^{\vseg}_{w_{m+1}}$).

\begin{theorem}
\label{theorem:PmtoPm+1viaNext}
    Let us call $P^{\vseg}_{w_m}$ the probability that $\vseg$ is in $\Pg$ evaluated at $w_m$. If $q$ is s.t.~$\model,w_m \vDash_{\vseg} \wnext_{=q}\at$ and $\model,w_{m+1} \vDash_{\vseg} \at$, then the following equation holds, under the assumption that the values of $\Omega$ are equiprobable:
$$
{P}^{\vseg}_{w_{m+1}} = {P}^{\vseg}_{w_{m}} \times q~ \times | \At_\Omega |.
$$
\end{theorem}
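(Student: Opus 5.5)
The plan is to express both $P^{\vseg}_{w_m}$ and $q$ as ratios of counts of assignments and observe that the counts telescope. Let $k=|\At_\Omega|$. Let $C_m$ be the number of assignments $\vNseg\in\Pg$ that agree with $\vseg$ on $w_1,\dots,w_m$. As in the proof of Th.~\ref{conj:probabilityFrequence}, under equiprobability the probability that $\vseg$ ends in $\Pg$, evaluated at $w_m$, is
$$
P^{\vseg}_{w_m}=\frac{C_m}{k^{\,n-m}},
$$
since there are $k^{n-m}$ ways in total to complete the series after $w_m$. The numerator of Th.~\ref{conj:probabilityFrequence} is exactly this $C_m$: it is the multinomial count of the ways to distribute the residual occurrences $r_i-l_i$ over the $n-m$ remaining worlds. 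Likewise $P^{\vseg}_{w_{m+1}}=C_{m+1}/k^{\,n-m-1}$.

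The second step is to read off $q$ from Def.~\ref{def:semantics}. The denominator in the clause for $\wnext_q\at$ at $w_m$ under $\vseg$ is precisely $C_m$. The numerator counts the assignments in $\Pg$ that agree with $\vseg$ up to $w_m$ and put $w_{m+1}$ in $\vNseg(\at)$. By the hypothesis $\model,w_{m+1}\vDash_{\vseg}\at$, this is the same as agreeing with $\vseg$ up to $w_{m+1}$, since each world carries a unique atom (Def.~\ref{def:Assignments}). So the numerator is $C_{m+1}$. By Th.~\ref{Univocita=white}, the $q$ with $\wnext_{=q}\at$ is unique and equals $C_{m+1}/C_m$.

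The conclusion is then a one-line computation:
$$
P^{\vseg}_{w_m}\times q\times k=\frac{C_m}{k^{\,n-m}}\cdot\frac{C_{m+1}}{C_m}\cdot k=\frac{C_{m+1}}{k^{\,n-m-1}}=P^{\vseg}_{w_{m+1}}.
$$
As a cross-check, one could instead use Th.~\ref{theorem:ValueNext=} and the binomial formula: raising $l_x$ by one multiplies the product of binomials by $(r_x-l_x)/(n-m)$.

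The main obstacle is the degenerate case $C_m=0$, where $\vseg$ is already incompatible with $\Pg$ at $w_m$. There the fraction defining $\wnext$ is $0/0$. I would treat this separately: then $C_{m+1}=0$ as well, so both sides vanish for any conventional value of $q$. Alternatively, I would note that the hypothesis $\wnext_{=q}$ presupposes $C_m>0$. I also need to justify identifying $P^{\vseg}_{w_m}$ with $C_m/k^{n-m}$ directly from its definition as a conditional probability rather than via the formula-hypothesis of Th.~\ref{conj:probabilityFrequence}. I expect this to follow from equiprobability of the completions, exactly as argued in that proof.
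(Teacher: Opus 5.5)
Your proof is correct and is essentially the paper's argument in counting form. The paper factors $P(\vNseg\in\Pg\wedge\text{agreement with }\vseg\text{ up to }w_{m+1})$ as $q$ times the corresponding probability up to $w_m$, then divides by the prefix probabilities $|\At_\Omega|^{-(m+1)}$ and $|\At_\Omega|^{-m}$; under equiprobability this is exactly your telescoping of $C_m/|\At_\Omega|^{n-m}$ against $q=C_{m+1}/C_m$, and your version also covers the degenerate case $C_m=0$, which the paper passes over, and avoids its slip of writing the prefix probability as $\prod_{i=1}^{m+1}\frac{i}{|\At_\Omega|}$ instead of $|\At_\Omega|^{-(m+1)}$.
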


\begin{proof}
We want to explain the relation between  the following probabilities:
\footnotesize
\begin{equation}\label{eq:prwm}
 {P}^{\vseg}_{w_{m}} =   P(\vNseg \in \Pg | \forall i \leq m (\vNseg (w_i)=\vseg (w_i)) )
\end{equation}
\begin{equation}\label{eq:prwm+1}
 {P}^{\vseg}_{w_{m+1}} =   P(\vNseg \in \Pg | \forall i \leq m+1 (\vNseg (w_i)=\vseg (w_i)) )
\end{equation}
\begin{equation}\label{eq:prnext}
 q=   P(w_{m+1}\in\vNseg (\at) | \vNseg \in \Pg \land  \forall i \leq m (\vNseg (w_i)=\vseg (w_i)) )
\end{equation}
\normalsize
In particular, we want to express the probability of Eq.~\ref{eq:prwm+1} in term of the other two.
For the Def.of conditional probability, this equation is 
equivalent to:
\footnotesize
$$
\frac{P(\vNseg \in \Pg \land \forall i \leq m+1 (\vNseg (w_i)=\vseg (w_i)) )}{P(\forall i \leq m+1 (\vNseg (w_i)=\vseg (w_i)) )}
$$
\normalsize
\noindent
which, for the same property and under the assumption that $w_{m+1}\in\vseg (\at)$, is equivalent to\footnote{Note that we are just `breaking' the conditional probability in different ways.}
\footnotesize
$$
\frac{P(w_{m+1}\in\vNseg (\at) | \vNseg \in \Pg \land  \forall i \leq m (\vNseg (w_i)=\vseg (w_i)) )\times P( \vNseg \in \Pg \land  \forall i \leq m (\vNseg (w_i)=\vseg (w_i)) )}{P(\forall i \leq m+1 (\vNseg (w_i)=\vseg (w_i)) )}
$$
\normalsize
More concisely, because of equation \ref{eq:prnext}:
\footnotesize
$$
\frac{q\times P( \vNseg \in \Pg \land  \forall i \leq m (\vNseg (w_i)=\vseg (w_i)) )}{P(\forall i \leq m+1 (\vNseg (w_i)=\vseg (w_i)) )}
$$
\normalsize
Notice that, under the assumption that the values of $\Omega$ are equiprobable, this is equivalent to
\footnotesize
$$
\frac{q\times P( \vNseg \in \Pg \land  \forall i \leq m (\vNseg (w_i)=\vseg (w_i)) )}{\prod^{m+1}_{i=1} \frac{i}{|\At_\Omega|} }
$$
\normalsize
As for the second factor of the numerator 
it can be written also as:
\footnotesize
$$
P( \vNseg \in \Pg |  \forall i \leq m (\vNseg (w_i)=\vseg (w_i)) )\times P(  \forall i \leq m (\vNseg (w_i)=\vseg (w_i)) ).
$$
\normalsize
Moreover, notice that, under the assumption that the values of $\Omega$ are equiprobable, this is equivalent to
\footnotesize
$$
P( \vNseg \in \Pg |  \forall i \leq m (\vNseg (w_i)=\vseg (w_i)) )\times \prod^{m}_{i=1} \frac{i}{|\At_\Omega|} 
$$
\normalsize
More concisely, because of Eq.~\ref{eq:prwm}, 
\footnotesize
$$
{P}^{\vseg}_{w_{m}}\times \prod^{m}_{i=1} \frac{i}{|\At_\Omega|} 
$$
\normalsize
By expressing $\prod^{m+1}_{i=1} \frac{i}{|\At_\Omega|}$ as $\frac{1}{|\At_\Omega|} \times \prod^{m}_{i=1} \frac{i}{|\At_\Omega|}$ and simplifying, we obtain the desired result.
\end{proof}

Notice that Th.~\ref{theorem:PmtoPm+1viaNext} can be generalized to cases in which the values of $\Omega$ are \textit{not} equiprobable.
However, for the previous proof to work, we need that all outcomes in $\Omega$ are independent, i.e.,~that the satisfaction of $\at \in \At_\Omega$ in world $w_m$ is not affected from what happens in the rest of the series.
Under this assumption, we can extend the previous result to:

\begin{theorem}
\label{theorem:PmtoPm+1viaNextGeneral}
    Let us call $P^{\vseg}_{w_m}$ the probability that $\vseg$ is in $\Pg$ evaluated at $w_m$. If $q$ is such that $\model,w_m \vDash_{\vseg} \wnext_{=q}\at_\omega$ and $\model,w_{m+1} \vDash_{\vseg} \at_\omega$, then the following holds, under the assumption that the values of $\Omega$ are all independent:
$$
{P}^{\vseg}_{w_{m+1}} = \frac{{P}^{\vseg}_{w_{m}} \times q}{P(\at_\omega)} 
$$
\noindent
Where $P(\at_\omega)$ is the probability associated with output $\omega$.
\end{theorem}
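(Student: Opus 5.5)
The argument repeats the proof of Th.~\ref{theorem:PmtoPm+1viaNext}, keeping the probabilistic bookkeeping explicit. Instead of the equiprobability hypothesis we use only independence of the single outcomes. Write $C_k$ for the event $\forall i\leq k\,(\vNseg(w_i)=\vseg(w_i))$ and $G$ for the event $\vNseg\in\Pg$. By definition, ${P}^{\vseg}_{w_m}=P(G\mid C_m)$ and ${P}^{\vseg}_{w_{m+1}}=P(G\mid C_{m+1})$. By Th.~\ref{theorem:probabilityNext}, the hypothesis $\model,w_m\vDash_{\vseg}\wnext_{=q}\at_\omega$ gives $q=P(w_{m+1}\in\vNseg(\at_\omega)\mid G\wedge C_m)$. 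Since $\model,w_{m+1}\vDash_{\vseg}\at_\omega$, the event $C_{m+1}$ is exactly $C_m\wedge w_{m+1}\in\vNseg(\at_\omega)$.

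First, unfold the conditional probability:
\[
{P}^{\vseg}_{w_{m+1}}=\frac{P(G\wedge C_m\wedge w_{m+1}\in\vNseg(\at_\omega))}{P(C_m\wedge w_{m+1}\in\vNseg(\at_\omega))}.
\]
The numerator factors by the chain rule as $q\cdot P(G\wedge C_m)=q\cdot {P}^{\vseg}_{w_m}\cdot P(C_m)$. This step needs no assumption beyond the definitions, exactly as in the equiprobable case.

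Second, treat the denominator using independence. The outcome at $w_{m+1}$ is independent of the outcomes at $w_1,\dots,w_m$, so
\[
P(C_m\wedge w_{m+1}\in\vNseg(\at_\omega))=P(C_m)\cdot P(\at_\omega).
\]
In fact $P(C_m)=\prod_{i\leq m}P(\vseg(w_i))$, but we only need the factorisation. Cancelling $P(C_m)$ gives ${P}^{\vseg}_{w_{m+1}}={P}^{\vseg}_{w_m}\cdot q/P(\at_\omega)$. As a consistency check, setting $P(\at_\omega)=1/|\At_\Omega|$ recovers Th.~\ref{theorem:PmtoPm+1viaNext}. One can also check the result against Th.~\ref{conj:probabilityFrequenceGeneral} combined with the counting formula for $q$ from Th.~\ref{theorem:ValueNext=}: the number of completions shrinks by the factor $q$ and one power of $P(\at_\omega)$ disappears.

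The main obstacle is conceptual rather than computational. The value $q$ given by $\wnext$ is a ratio of assignment counts inside $\Pg$, that is, a uniform count. We must make sure this coincides with the conditional probability $P(w_{m+1}\in\vNseg(\at_\omega)\mid G\wedge C_m)$ under the non-uniform independent measure. I would justify this as the proof of Th.~\ref{theorem:probabilityNext} does: all assignments in $\Pg$ that agree with $\vseg$ on $w_1,\dots,w_m$ contain the same multiset of remaining outcomes. Under independence they therefore all receive the same probability $\prod_i P(\at_i)^{r_i-l_i}$, as in the proof of Th.~\ref{conj:probabilityFrequenceGeneral}. So conditioning on $G\wedge C_m$ gives the uniform distribution on those assignments, and the count ratio equals the probability. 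I would also note the degenerate case where $P(C_{m+1})$ or $P(\at_\omega)$ is zero, which must be excluded for the formula to make sense.
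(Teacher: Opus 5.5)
Your proof is correct and takes essentially the same route as the paper, whose proof is just declared an obvious variation of that of Th.~\ref{theorem:PmtoPm+1viaNext}. In both, $P(\vNseg\in\Pg \mid C_{m+1})$ is split into a chain-rule numerator $q\cdot P^{\vseg}_{w_m}\cdot P(C_m)$ and a denominator that factors as $P(C_m)\cdot P(\at_\omega)$, with independence taking the place of equiprobability (which gave the factor $1/|\At_\Omega|$). Your explicit check that the uniform count defining $q$ equals the conditional probability under the product measure is a spelled-out version of the remark in the proof of Th.~\ref{theorem:probabilityNext} that all assignments in $\Pg$ have the same outcome counts.
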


\begin{proof}
    The proof is an obvious variation of the one for Th.~\ref{theorem:PmtoPm+1viaNext}.
\end{proof}

\begin{ex}
Consider a model for a fair coin $\model = (W,\Rexp,\Pg,\vseg)$ s.t. $|W|=4$, $w_1,w_3\in\vseg(\at_\head)$, $w_2,w_4\in\vseg(\at_\tail)$.
From Th.~\ref{theorem:ValueNext=}, it follows that
\begin{equation}\label{eq:next}
\model , w_1 \vDash _\vseg \wnext_{=\frac{2}{3}}\at_\tail
\end{equation}
Moreover, notice that
\begin{equation}\label{eq:atw1}
    \model , w_1 \vDash _\vseg (\sat_{=\frac{1}{4}}\at_\head \land \bstar_{=\frac{2}{4}}\at_\head) \land (\sat_{=\frac{0}{4}}\at_\tail \land \bstar_{=\frac{2}{4}}\at_\tail) 
\end{equation}
\begin{equation}\label{eq:atw2}
    \model , w_2 \vDash _\vseg (\sat_{=\frac{1}{4}}\at_\head \land \bstar_{=\frac{2}{4}}\at_\head) \land (\sat_{=\frac{1}{4}}\at_\tail \land \bstar_{=\frac{2}{4}}\at_\tail) 
\end{equation}
From Th.~\ref{conj:probabilityFrequence}, we obtain the following results:
$$
P^{\vseg}_{w_1} =\frac{{{4-1}\choose{2 - 1}}\times {{4- 1-(2-1)}\choose{2-0}}}{2^{4-1}} = \frac{3}{8}
$$
$$
P^{\vseg}_{w_2} =\frac{{{4-2}\choose{2 - 1}}\times {{4- 2-(2-1)}\choose{2-1}}}{2^{4-2}} = \frac{1}{2}
$$
Notice that this confirms Th.~\ref{theorem:PmtoPm+1viaNext}.
Indeed, 
$$
P^{\vseg}_{w_2} = P^{\vseg}_{w_1} \times q \times |\{\at_\head,\at_\tail\}| = \frac{3}{8} \times \frac{2}{3} \times 2 = \frac{1}{2}
$$
\noindent
with $q$ the probability in Eq.~\ref{eq:next}.
\\
Let us now see what happens when the coin is not fair, but biased so that $P(\head)=\frac{2}{3}$ and $P(\head)=\frac{1}{3}$.
Equations~\ref{eq:next}, \ref{eq:atw1} and \ref{eq:atw2} are still valid.
However, this time we have to rely on Th.~\ref{conj:probabilityFrequenceGeneral} to obtain $P^{\vseg}_{w_1}$ and $P^{\vseg}_{w_2}$:
\footnotesize
$$
P^{\vseg}_{w_1} ={{4-1}\choose{2 - 1}}\times {{4- 1-(2-1)}\choose{2-0}}\times P(\at_\head)^1 \times P(\at_\tail)^2 = 3 \times \frac{2}{3} \times (\frac{1}{3})^{2}= \frac{2}{9}
$$
$$
P^{\vseg}_{w_2} ={{4-2}\choose{2 - 1}}\times {{4- 2-(2-1)}\choose{2-1}}\times P(\at_\head)^1 \times P(\at_\tail)^1 = 2 \times \frac{2}{3} \times \frac{1}{3} = \frac{4}{9}
$$
\normalsize
Notice that this confirms Th.~\ref{theorem:PmtoPm+1viaNextGeneral}.
Indeed, 
$$
{P}^{\vseg}_{w_{2}} = \frac{{P}^{\vseg}_{w_{1}} \times q}{P(\at_\tail)}= \frac{\frac{2}{9} \times \frac{2}{3}}{\frac{1}{3}}= \frac{4}{9}
$$
\noindent
with $q$ the probability in Eq.~\ref{eq:next}.

\end{ex}

\section{Related Works}\label{sec:related}

Our logic is grounded in the various frameworks where modal logics, in particular their temporal interpretations, are extended via weights, probabilities, or counting methods.
Some of our modalities are also inspired by the counting operators introduced in counting logics, which were mostly developed in connection with complexity~\cite{Wagner,ADLP} and finite model theory~\cite{Mostowski,Kontinen}.

An extension of a (non-temporal) modal logic with counting modalities has been presented in \cite{LEGASTELOIS2017}: here, necessity and possibility are defined by counting how many accessible worlds satisfy a formula. While the intuition of our logic is similar, we impose some restrictions on the accessibility relation, focusing only on a linear order relation (interpreted as a time relation, with each world seeing only the previous ones) and introduce new modalities to deal with a hypothetical ideal frequency alongside the actual observed frequency. For this reason, the interpretation of our semantics is closer to the family of temporal logics with weights or probabilities.

Several systems have been offered in this second family of logics. In \cite{Laroussinie2010} and \cite{Bolling2012}, LTL is extended so to count worlds satisfying a formula by adding a counting device to \textit{until} ($U$).
In \cite{Laroussinie2010}, formulas $\phi U \psi$ are generalized using a subscript $[C]$ that requires the satisfaction of a specific formula (say $\chi$) a specific number of times (say $n$) in the worlds before the one that satisfies $\psi$.
Hence, $\phi U_{[C]} \psi$ is satisfied at world $w_i$ iff: $\exists j \geq i$ s.t. $w_j$ satisfies $ \psi$; for all $i \leq k < j$, $w_k$ satisfies $ \phi$; and there are $n$ worlds (the number specified in $[C]$) between $w_i$ and $w_j$ that satisfy $\chi$ (the formula occurring in $[C]$). 
Generalizations requiring precisely $n$ worlds to satisfy $\chi$ are also investigated. 
While \cite{Laroussinie2010} counts worlds in (a segment of) the future, our counting modalities count worlds in the past. This difference becomes particularly relevant when we compare the way in which we address the next operator. 
Moreover, we express frequencies instead of absolute counting.
Additionally, 
\cite{Bolling2012} diverges from \cite{Laroussinie2010} in two regards: the condition on $\phi U \psi$ talks about $\phi$, and not about a formula possibly different from both $\phi$ and $\psi$; their counting $U$ requires that $\phi$ is satisfied with a specific frequency before $\psi$ is satisfied, and not an absolute number of times. While our logic is more similar to \cite{Bolling2012}, it differs since we investigate the relation between an observed frequency and a desired one, using counting modalities that look at the past and a probabilistic next operator that is radically different from the usual one.

The logic CTL has also been variously extended to include weighted modalities.  Its probabilistic version, PCTL, allows to express in the language the probability that a certain \textit{next} or \textit{until} formula is satisfied in a path, see \cite{hansson1994logic}. Among the several variants and extensions of PCTL (see, for example, \cite{Aziz1995, Aziz1996, Kwiatkowska2001, Tomita2011}), the epistemic PCTLK \cite{WAN2013279} introduces operators to express that agents have such uncertain knowledge of temporal properties. Corresponding semantics are based on the notion of Markov decision processes. A different approach uses Interpreted Systems. Computationally-grounded Weighted Doxastic (COGWED) logic introduced in~\cite{DBLP:journals/sLogica/ChenPRR16} is designed to specify degrees of belief in a system of agents. Intuitively, this semantics allows to express that a group of agents believes with a certain degree of certainty that a temporal formula holds. All these systems combine probabilities and modalities interpreted as temporal operators, somewhat inspiring our approach. Indeed, in $\PL$ uncertainty is directly attached to temporal operators similarly to PCTL, but we interpret temporal operators over finite series of events by considering a backward-looking approach, unusual in standard temporal logics.
Moreover, we make use of an external observer, rather than having agents explicitly in the language, so that our logic expresses properties of a series of events rather than epistemic (un)certainty of an agent. 

Finally, the explicit introduction of a target frequency to evaluate observations is a distinctive feature of $\PL$.
This latter aspect has been used to characterize other semantics. In particular in~\cite{DBLP:journals/ijar/KubyshkinaP24}, a variation of a standard possible worlds semantics is used to express reasoning on probabilities and frequencies of events. In this approach, sentences involving probabilities and frequencies are syntactic constructions describing events in models and terms express idealized processes (or random variables) and empirical processes, and corresponding outputs (or values). This approach allows to compare different models (in particular a model for a series of events or a trial with a theoretical probability) to assess how close they are. This intuition is formally defined by embedding information in a single structure for evaluation and using weighted operators with their natural temporal interpretation.

\section{Conclusion and future work}\label{sec:conclusion}
In this paper, we introduced $\PL$, a novel logical system designed to reason about properties of finite series of experiments.
To the best of our knowledge, while the literature offers various temporal and counting probabilistic logics, our approach is original in its ability to explicitly formalize the relationship between observed frequencies and target theoretical distributions within a single temporal structure.
Additionally, the framework maintains the advantage of an extremely simple and natural formalization, even while capturing the sophisticated properties inherent in finite series of events. By shifting the counting perspective toward a ``dynamic'' evaluation of frequencies (i.e. evaluating them while they evolve), this logic provides a straightforward tool for expressing and studying the properties of complex systems as they change over time.
%
%
Concretely, through the introduction of five new counting modalities ($\wbox$, $\bbox$, $\sat$, $\wnext$, and $\bstar$), our logic achieves the expressive power to check and compare different temporal objects directly inside its formal language. It can formalize the frequencies of outcomes; the compatibility of partial frequencies with a target global frequency; the lowest attainable frequency of the argument formula at the conclusion of a series; the probability of specific future outcomes based on realized data and theoretical distributions; and the probability that a formula is satisfied at a given moment, assuming a baseline distribution.
In the paper, we investigate the basic properties of these operators and use them to formalize and solve several problems regarding the comparison between actual and ideal frequencies. Beyond the properties formalized by the individual modalities, their interplay allows us to characterize both the possibility and the probability of ultimately attaining a target global frequency from an experimental partial one.

Since the operators introduced here are entirely new, several research directions remain open. 
On the formal side, several theoretical aspects remain unexplored and warrant further investigation; for instance, the study of model checking for $\PL$ and the development of a corresponding proof system are left for future work.
Clarifying these aspects is crucial to understanding the precise relationship between our framework and more standard temporal logics, such as $\mathsf{LTL}$, and could pave the way for the development of more expressive logical fragments. 
Furthermore, key foundational questions regarding these new modalities remain wide open. 
%
Developing and investigating a proof system for $\PL$ (or some of its fragments), as well as proof search algorithms and their underlying decision problems, represent an ongoing and promising line of research that challenges standard foundational paradigms. Specifically,  a novel question concerns what constitutes a meaningful notion of validity in a context where temporal operators interact with resource, counting, or probabilistic constraints. 
Furthermore, it is natural to ask whether a corresponding notion of local validity can be defined in such contexts.
This work would also naturally link to questions about how these operators interact with well-known probabilistic logics (e.g.,~\cite{Bacchus,FHM}) and counting logics (e.g.,~\cite{ADLP}), as well as the associated complexity of deciding formulas within these frameworks, which may prove complete for classes like $\mathtt{PSAT}$~\cite{Papadimitriou}, $\mathtt{\#SAT}$~\cite{Valiant}, or $\mathtt{MAJSAT}$~\cite{Wagner}.
On the engineering side, our temporal framework offers promising avenues for dynamic bias mitigation in AI. Unlike static fairness metrics or one-time checks during training, temporal logic can allow the design to formally specify and monitor time-dependent fairness requirements throughout the lifecycle of a system to prevent compounding feedback loops. We are currently developing a branching-time version of this semantics to dynamically intercept and correct biased decisions at inference time before they propagate \cite{Buda}.

\bibliographystyle{abbrv}
\bibliography{bibliography}

\newpage

\appendix 
\renewcommand{\thesection}{\Alph{section}} 
\makeatletter
\renewcommand\@seccntformat[1]{\appendixname\ \csname the#1\endcsname.\hspace{0.5em}}
\makeatother

\section{The Interdefinability of Operators}
\label{appendix}

\subsection{Definability of $\oplus_{<q}$, $\oplus_{=q}$, and $\oplus_{\overline{q}}$}

\begin{notation}
We generalize Notation~\ref{notation:=} to introduce $\boxplus_{\leq q} \fOne$ to be defined as predictable, for instance:
$$
\model, w_m \vDash_a \wbox_{\leq q} \fOne \quad \text{ iff } \quad \frac{|\{ w : w_m \Rexp w \wedge \model, w \vDash_a \fOne\}|}{m} \leq q.
$$
\end{notation}

\begin{theorem}
\label{lemma:Def<Boxes}
    For any operator $\boxplus \in \{\wbox,\bbox\}$,
    $$
    \boxplus_{\leq q} \fOne \equiv \boxplus_{1-q} \neg\fOne
    $$
\end{theorem}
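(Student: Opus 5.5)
The plan is to unfold both sides semantically at an arbitrary model $\model=(W,\Rexp,\Pg,\vseg)$, world $w_m$ and assignment $\vNseg\in\Pg\cup\{\vseg\}$. The key step is a complementation identity on the set of worlds preceding $w_m$, after which the inequalities $\le q$ and $\ge 1-q$ turn out to be equivalent. Here $\equiv$ is read as: both formulas are true at exactly the same world under exactly the same assignment, in every model.

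\emph{Counting identity.} Fix any assignment $\vNseg'$ and write $k_{\vNseg'}=|\{w_l : l\leq m \wedge \model, w_l\vDash_{\vNseg'}\fOne\}|$. By the negation clause of Def.~\ref{def:semantics}, every $w_l$ with $l\le m$ satisfies exactly one of $\fOne$ and $\neg\fOne$ under $\vNseg'$. Hence
$$|\{w_l : l\leq m \wedge \model, w_l\vDash_{\vNseg'}\neg\fOne\}| = m-k_{\vNseg'}.$$
It follows that
$$\frac{k_{\vNseg'}}{m}\le q \iff 1-\frac{k_{\vNseg'}}{m}\ge 1-q \iff \frac{m-k_{\vNseg'}}{m}\ge 1-q.$$
This is plain arithmetic on rationals, with $m\ge1$ since worlds are labelled from $w_1$. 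It also keeps the index in range: $q\in\mathbb{Q}_{[0,1]}$ implies $1-q\in\mathbb{Q}_{[0,1]}$, so $\boxplus_{1-q}\neg\fOne$ is a well-formed formula.

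\emph{Case $\wbox$.} Instantiate $\vNseg'=\vNseg$. The left side of the displayed equivalence is exactly the truth condition of $\wbox_{\leq q}\fOne$ at $w_m$ under $\vNseg$. The right side is exactly the clause for $\wbox_{1-q}\neg\fOne$ in Def.~\ref{def:semantics}. This settles the case.

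\emph{Case $\bbox$.} Here I take the ``predictable'' reading of $\bbox_{\le q}$ to be the existential one: $\model,w_m\vDash_{\vNseg}\bbox_{\le q}\fOne$ iff there is $\vNseg'\in\Pg$ with $k_{\vNseg'}/m\le q$. The counting identity holds pointwise for every $\vNseg'\in\Pg$. The same witness $\vNseg'$ therefore satisfies $k_{\vNseg'}/m\le q$ iff it satisfies $(m-k_{\vNseg'})/m\ge1-q$. Hence the existential statements on both sides are equivalent, and the right one is precisely the clause for $\bbox_{1-q}\neg\fOne$.

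The only real obstacle is conceptual rather than technical: fixing how $\bbox_{\le q}$ is meant. If it were read as $\neg\bbox_{>q}$, i.e.\ with the inequality holding for all assignments in $\Pg$, the equivalence with the existential $\bbox_{1-q}\neg\fOne$ would fail. Example~\ref{ex:BBboxCOmpatibi} shows that different assignments in $\Pg$ can give different frequencies. I will therefore state explicitly that the notation is the existential analogue, mirroring the ``there is an $\vNseg'\in\Pg$'' clause, so that the witness transfers unchanged between the two sides.
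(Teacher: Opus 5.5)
Your proposal is correct and takes essentially the same approach as the paper. The paper's proof says only that the equivalence follows from the truth conditions for $\wbox_q$, $\bbox_q$ and negation in Def.~\ref{def:semantics}, and your count $m-k_{\vNseg'}$ together with the shared existential witness for $\bbox$ is that unfolding made explicit; your existential reading of $\bbox_{\le q}$ also matches the paper's usage, which remarks later that $\bbox_q\fOne$ and $\bbox_{<q}\fOne$ can hold at the same world.
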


\begin{proof}
    The proof follows from the truth conditions for $\wbox_{q} \fOne$, $\bbox_{q} \fOne$ and $\neg \fOne$ (Def.~\ref{def:semantics}).
\end{proof}

\noindent
Not surprisingly, $\wbox_{=q}$ can be defined in terms of $\wbox_{q}$ and $\wbox_{\leq q}$:
 
\begin{theorem}
\label{lemma:Def=WBox}
    $
    \wbox_{=q} \fOne \equiv \wbox_{q} \fOne \land\wbox_{1-q} \neg\fOne
    $
\end{theorem}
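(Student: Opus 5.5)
The plan is to unfold both sides at an arbitrary model $\model$, world $w_m$ and assignment $\vNseg$, and reduce the equivalence to an arithmetic fact about one fraction. Let $k = |\{w_l : l \leq m \wedge \model, w_l \vDash_{\vNseg} \fOne\}|$. By Notation~\ref{notation:=}, the left-hand side $\wbox_{=q}\fOne$ holds at $w_m$ iff $\frac{k}{m} = q$. By Def.~\ref{def:semantics}, the conjunct $\wbox_{q}\fOne$ on the right holds iff $\frac{k}{m} \ge q$.

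The main step is to rewrite the second conjunct $\wbox_{1-q}\neg\fOne$. By the negation clause of Def.~\ref{def:semantics}, each $w_l$ with $l \leq m$ satisfies exactly one of $\fOne$ and $\neg\fOne$ under $\vNseg$. Hence the number of worlds up to $w_m$ satisfying $\neg\fOne$ is exactly $m-k$. So $\wbox_{1-q}\neg\fOne$ holds iff $\frac{m-k}{m} \ge 1-q$, that is, iff $\frac{k}{m} \le q$.

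Equivalently, this is the instance $\boxplus = \wbox$ of Th.~\ref{lemma:Def<Boxes}, which gives $\wbox_{1-q}\neg\fOne \equiv \wbox_{\leq q}\fOne$. I will invoke that theorem, but also write out the one-line counting argument, since the proof of that theorem is only sketched.

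Combining the two conjuncts, the right-hand side holds iff $q \le \frac{k}{m} \le q$, i.e.\ iff $\frac{k}{m} = q$, which is the left-hand side. Since $\model$, $w_m$ and $\vNseg$ were arbitrary, the two formulas are equivalent.

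I do not expect a real obstacle. The only point needing care is the complementation step: the count for $\neg\fOne$ must be exactly $m-k$. This holds because the worlds up to $w_m$ are partitioned by $\fOne$ and $\neg\fOne$ for a fixed assignment. That is why the argument works for $\wbox$ but would fail for $\bbox$, where the two conjuncts could be witnessed by different assignments. I will also note that $1-q \in \mathbb{Q}_{[0,1]}$ whenever $q$ is, so the right-hand formula is well formed.
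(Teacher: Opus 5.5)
Your proposal is correct and follows essentially the same route as the paper: both use the negation clause to turn $\wbox_{1-q}\neg\fOne$ into the statement that at most $q\times m$ of the worlds up to $w_m$ satisfy $\fOne$, and then combine this with the lower bound given by $\wbox_{q}\fOne$. Your version is slightly tidier, since the exact count $m-k$ gives both directions at once, whereas the paper spells out only the right-to-left direction.
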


\begin{proof}
    If $\wbox_{q} \fOne$ is satisfied at a world $w_m$, then $\fOne$ is satisfied in at least $q \times m$ worlds accessible from $w_m$.
    If $\wbox_{1-q} \neg\fOne$ is satisfied at a world $w_m$, then $\neg\fOne$ is satisfied in at least $(1-q)\times m$ worlds accessible from $w_m$ and so, given the truth condition for negations, $\fOne$ is not satisfied in at least $m - (1-q)\times m = q \times m$ worlds accessible from $w_m$.
    Hence, if $\fOne$ is satisfied in at least $q\times m$ and at most $q\times m$ words accessible from $w_m$, $\square_{=} \fOne$ holds.
\end{proof}

\begin{theorem}
\label{lemma:Def<WBoxe}
  \begin{align*}
    \wbox_{< q} \fOne &\equiv \neg \wbox_{q} \fOne 
  \\
    \wbox_{> q} \fOne &\equiv \neg\wbox_{\leq q} \fOne 
\end{align*}
\end{theorem}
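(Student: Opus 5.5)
The plan is to unfold the truth conditions of Def.~\ref{def:semantics} together with those of Notation~\ref{notation:=} and the $\leq$-notation introduced just above, and compare them pointwise. Throughout, fix a model $\model$, an assignment $\vNseg \in \Pg \cup \{\vseg\}$ and a world $w_m$. Abbreviate by $f_m(\fOne)$ the fraction $\frac{|\{w_l : l \leq m \wedge \model, w_l \vDash_{\vNseg} \fOne\}|}{m}$. This fraction is a single, well-determined rational number, which is exactly the fact already used in Th.~\ref{Univocita=white}. Every clause involved is then a comparison of this one number with $q$.

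For the first equivalence, Notation~\ref{notation:=} gives $\model, w_m \vDash_{\vNseg} \wbox_{<q}\fOne$ iff $f_m(\fOne) < q$. By the negation clause, $\model, w_m \vDash_{\vNseg} \neg\wbox_q\fOne$ iff $\model, w_m \nvDash_{\vNseg} \wbox_q\fOne$, i.e.\ iff it is not the case that $f_m(\fOne) \ge q$. Since $\mathbb{Q}$ is totally ordered, $\neg(f_m(\fOne)\ge q)$ is equivalent to $f_m(\fOne) < q$, and the first claim follows.

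The second equivalence is handled the same way. The notation $\wbox_{>q}$ is read, by analogy with Notation~\ref{notation:=}, as $f_m(\fOne) > q$. Likewise, $\neg\wbox_{\leq q}\fOne$ holds iff $\neg(f_m(\fOne) \leq q)$, which by trichotomy is $f_m(\fOne) > q$. If one wants to eliminate $\wbox_{\leq q}$ in favour of primitives, Th.~\ref{lemma:Def<Boxes} can be invoked to rewrite the right-hand side as $\neg\wbox_{1-q}\neg\fOne$, but this is not needed for the equivalence itself.

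The only point needing care, rather than a real obstacle, is that the operators are evaluated relative to the same assignment and world on both sides. This holds because $\wbox$ is the white, assignment-dependent modality: unlike the case of $\bbox$, negation commutes cleanly with the threshold comparison, since no existential quantification over $\Pg$ intervenes. I would remark briefly that this is precisely why the analogous statement is restricted to $\wbox$. Since the argument works for arbitrary $\model$, $\vNseg$ and $w_m$, both equivalences hold in every model, and hence are valid in the sense of Def.~\ref{df:validity}.
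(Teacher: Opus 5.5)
Your proof is correct and is essentially the paper's argument: the paper's one-line proof appeals to the bivalence of $\PL$, which amounts to your combination of the negation clause with the fact that the $\wbox$-fraction is a single rational compared against $q$, so you have simply spelled that appeal out. One small correction to your closing aside: the negation-based definition is not special to $\wbox$ (the appendix uses the same equivalence for $\sat$, and also for $\wnext$), and what blocks it for $\bbox$ is the existential quantification over $\Pg$.
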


\begin{proof}
    The proof follows from the fact that the logic $\PL$ is bivalent.
\end{proof}

On the contrary, $\bbox_{=}$ cannot be defined as $\bbox_{=q} \fOne \equiv \bbox_{q} \fOne \land\bbox_{1-q} \neg\fOne$, that is, in terms of $\bbox_{q}$ and $\bbox_{\leq q}$ as in Th.~\ref{lemma:Def=WBox}.
A counterexample s.t.~$\model, w_1 \vDash_{\vseg} \bbox_{\frac{1}{2}}\at_{\head} \land \bbox_{\frac{1}{2}}\at_{\tail}$, but $\model, w_1 \nvDash_{\vseg} \bbox_{=\frac{1}{2}}\at_{\head}$ can be found in Ex.~\ref{ex:ConjunctionOfBbox}.
However, this counterexample could leave the reader under the impression that the equivalence fails only because we are using a frequency $q$ that is not expressible as $\frac{i}{m}$, where $w_m$ is the world in which the formula is evaluated, and that in general $\bbox_{\frac{i}{m}} \fOne \land\bbox_{\frac{m-i}{m}} \neg\fOne$ is satisfied at $w_m$ iff $\bbox_{=\frac{i}{m}} \fOne$ is there satisfied.
A more interesting counterexample, dispelling this erroneous intuition, is presented below:

\begin{ex}
Consider a coin and assume that $\Pg$ requires $\fr (\at_{\head})=\fr (\at_{\tail}) = \frac{1}{2}$, and an underlying structure s.t.~$W=\{w_1 , w_2,w_3,w_4\}$.
Every assignment $\vNseg\in\Pg$ associates to $w_1$ either $\tail$ or $\head$.
Moreover, if $\vNseg$ assign $T$ at $w_1$, then $\wbox _{\frac{1}{2}}\wbox _{1}\at_{\tail}$ is satisfied both at $w_1$ and $w_2$.
Hence, $\str , w_2 \vDash_{\vseg} \bbox_{1}\wbox _{\frac{1}{2}}\wbox _{1}\at_{\tail} $ and \textit{a fortiori} $\str , w_2 \vDash_{\vseg} \bbox_{\frac{1}{2}}\wbox _{\frac{1}{2}}\wbox _{1}\at_{\tail}$.
On the other hand,  if $\vNseg$ assigns $H$ to $w_1$, then $\neg \wbox _{\frac{1}{2}}\wbox _{1}\at_{\tail}$ is satisfied both at $w_1$ and $w_2$.
Hence, $\str , w_2 \vDash_{\vseg} \bbox_{1}\neg\wbox _{\frac{1}{2}}\wbox _{1}\at_{\tail} $ and \textit{a fortiori} $\str , w_2 \vDash_{\vseg} \bbox_{\frac{1}{2}}\neg\wbox _{\frac{1}{2}}\wbox _{1}\at_{\tail} $.
However, $\str , w_2 \nvDash_{\vseg} \bbox_{=\frac{1}{2}}\wbox _{\frac{1}{2}}\wbox _{1}\at_{\tail} $, since $\wbox _{\frac{1}{2}}\wbox _{1}\at_{\tail}$ is satisfied either for both $w_1$ and $w_2$ or for none of them.
\end{ex}

\noindent
Moreover, $\bbox_{<q} \fOne$ cannot be defined as $  \neg \bbox_{q} \fOne$.
Indeed, while $  \bbox_{q} \fOne$ and $\bbox_{<q} \fOne$ can be satisfied in the same world, $ \bbox_{q} \fOne$ and $ \neg \bbox_{q} \fOne$ cannot.

\begin{ex}
    Consider a structure s.t.~$W=\{w_1,w_2\}$, $\At_\Omega = \{\at_{\head},\at_{\tail}\}$, and $\Pg$ requires $\fr(\at_{\head})= \fr(\at_{\tail})= \frac{1}{2}$.
    Regardless of $\vseg$, both $\bbox_{1} \at_{\head}$ and $\bbox_{1} \at_{\tail}$ are satisfied at $w_1$.
    Hence, since $\bbox_{1} \at_{\head}$, also $\bbox_{<1} \at_{\tail}$ is satisfied at $w_1$.
    Moreover, since $\bbox_{1} \at_{\tail}$, also $\bbox_{<1} \at_{\head}$ is satisfied at $w_1$.
    Hence, $  \bbox_{1} \fOne$ and $\bbox_{<1} \fOne$ can both be satisfied at the same world, while $  \bbox_{1} \fOne$ and $  \neg\bbox_{1} \fOne$ cannot.
\end{ex}

On the contrary, a Def.of $\bbox_{=q} \fOne$, $\bbox_{<q} \fOne$ and $\bbox_{>q} \fOne$ can be given along the lines of Th.~\ref{conj:lcd(mu(p))}.

\begin{theorem}
    \label{lemma:Def=BBox}
For $\asymp \in \{=,<,> \}$, 
    $$
    \bbox_{\asymp\frac{i}{m}} \fOne \equiv \sat_{=q'}(\fOne \lor \neg \fOne)\land \bbox_{\frac{1}{m}} (\sat_{=q'}(\fOne \lor \neg \fOne)\land \wbox_{\asymp \frac{i}{m}} \fOne )
    $$
\end{theorem}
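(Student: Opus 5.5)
The plan is to prove the equivalence pointwise: fix a model $\model=(W,\Rexp,\Pg,\vseg)$ with $|W|=n$, an assignment $\vNseg\in\Pg\cup\{\vseg\}$ and a world $w_m$ at which the right-hand side is evaluated. I read the statement the way Th.~\ref{theorem:frequencyMoreFormulas} is read. The index $q'$ is meant to be \emph{existentially} chosen, and in fact it is forced by the left conjunct. The denominator $m$ in $\frac{i}{m}$ and $\frac{1}{m}$ is the label of the evaluation world. The target reading of $\bbox_{\asymp\frac{i}{m}}\fOne$ is the existential one from Notation~\ref{notation:=}: there is $\vNseg'\in\Pg$ whose count of $\fOne$-worlds among $w_1,\dots,w_m$, divided by $m$, stands in relation $\asymp$ to $\frac{i}{m}$.

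The first step isolates the role of the ``clock'' formula $\sat_{=q'}(\fOne\lor\neg\fOne)$. The tautology $\fOne\lor\neg\fOne$ holds at every world under every assignment. So, by the clause for $\sat$ and Th.~\ref{Univocita=white}, $\model,w_k\vDash_{b}\sat_{=q'}(\fOne\lor\neg\fOne)$ holds iff $q'=\frac{k}{n}$, whatever the assignment $b$. Hence the outer conjunct holds at $w_m$ exactly when $q'=\frac{m}{n}$, and inside the $\bbox$ the same conjunct is true at $w_k$ iff $k=m$. This is the observation alluded to in the proof of Th.~\ref{theorem:frequencyMoreFormulas}. It also shows the clause is assignment-independent, so switching from $\vNseg$ to some $\vNseg'$ under $\bbox$ cannot move it.

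Next I unfold $\bbox_{\frac{1}{m}}(\chi)$ at $w_m$, where $\chi:=\sat_{=\frac{m}{n}}(\fOne\lor\neg\fOne)\land\wbox_{\asymp\frac{i}{m}}\fOne$. By Def.~\ref{def:semantics}, this holds iff there is $\vNseg'\in\Pg$ such that $|\{w_l : l\le m,\ \model,w_l\vDash_{\vNseg'}\chi\}|\ge 1$. By the first step, the only candidate world is $w_m$ itself. So the condition reduces to: there is $\vNseg'\in\Pg$ with $\model,w_m\vDash_{\vNseg'}\wbox_{\asymp\frac{i}{m}}\fOne$. Unfolding $\wbox_{\asymp}$ as in Notation~\ref{notation:=} gives that $\frac{|\{w_l:l\le m,\ \model,w_l\vDash_{\vNseg'}\fOne\}|}{m}\asymp\frac{i}{m}$. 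This is literally the truth condition of $\bbox_{\asymp\frac{i}{m}}\fOne$ at $w_m$.

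Both directions then follow. For $(\Rightarrow)$, take $q'=\frac{m}{n}$ and the witness $\vNseg'$ of the left side. For $(\Leftarrow)$, the outer conjunct fixes $q'$, and the inner $\bbox$ yields the witness. The main obstacle is not computational but interpretive. The strict and equality variants of $\wbox$ must be the primitive-definable ones, namely $\wbox_{=q}$ via Th.~\ref{lemma:Def=WBox} and $\wbox_{<q},\wbox_{>q}$ via Th.~\ref{lemma:Def<WBoxe}. Also, the equivalence must be understood with $q'$ and $m$ tied to the evaluation world, since for a fixed $q'$ the right side is false everywhere except $w_m$. I would state this explicitly, noting that the formula is a schema indexed by $m$ and $q'=\frac{m}{n}$. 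The essential point is that the clock conjunct prevents $\bbox$ from ``counting'' earlier worlds, which is exactly why the naive definitions fail (cf.\ Ex.~\ref{ex:ConjunctionOfBbox}).
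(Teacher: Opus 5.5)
Your proposal is correct and takes essentially the paper's approach: the clock conjunct $\sat_{=q'}(\fOne\lor\neg\fOne)$, which is assignment-independent, forces the witness world of $\bbox_{\frac{1}{m}}$ to be $w_m$ itself, so the right-hand side reduces to the existence of some $\vNseg'\in\Pg$ with $\wbox_{\asymp\frac{i}{m}}\fOne$ true at $w_m$, which is exactly the existential truth condition of $\bbox_{\asymp\frac{i}{m}}\fOne$. Your remarks that $q'$ is existentially chosen (and forced to be $\frac{m}{n}$) and that $m$ must be the label of the evaluation world make explicit what the paper's sketch leaves implicit.
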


\begin{proof}[Proof Sketch]
    We present the proof for $\asymp$ being =, the other cases being its obvious generalizations. 
    Clearly, $\bbox_{=\frac{i}{m}} \fOne$ is satisfied in a world $w_m$ iff there is an assignment $\vNseg \in \Pg$ s.t.~exactly $\frac{i}{m}$ of the preceding worlds satisfy $\fOne$.
    The formula $\bbox_{\frac{1}{m}} \wbox_{=q} \fOne$ is satisfied in the same world $w_m$ iff there is an assignment $\vNseg \in \Pg$ s.t.~for at least one of the preceding worlds $w_l$  with $l\leq m$, for which exactly $\frac{i}{m}$ of the worlds preceding it satisfies $\fOne$.
    Notice that the truth conditions of $\bbox_{\frac{1}{m}} \wbox_{=q} \fOne$ coincide with those for $\bbox_{=\frac{i}{m}} \fOne$, if we impose $l=m$.
    The formula $\sat_{=q'}(\fOne \lor \neg \fOne)$ has exactly this purpose.
    Since $\fOne \lor \neg \fOne$ is satisfied at every world of the series, and $\sat$ expresses the ratio of the previous worlds satisfying its argument with respect to all the worlds in the series, we can use $\sat_{=q'}(\fOne \lor \neg \fOne)$ to fix the position in the series of worlds.
    Indeed, given a structure, two assignments $\vseg'$ and $\vseg''$, and two worlds $w_j,w_k \in W$, there is $q' \in \mathbb{Q}_{[0,1]}$ s.t. $\model, w_j \vDash_{\vseg'} \sat_{=q'}(\fOne \lor \neg \fOne)$ and $\model , w_k\vDash_{\vseg''} \sat_{=q'}(\fOne \lor \neg \fOne)$ iff $j=k$.
    Hence, by requiring that there is a $q'$ s.t.~$\sat_{=q'}(\fOne \lor \neg \fOne)$ is satisfied both in $w_m$ and in the world satisfying $\wbox_{=q} \fOne$, we essentially impose that $l=m$.
\end{proof}

\begin{theorem}
\label{lemma:Def=Sat}
\begin{align*}
    \sat_{=q} \fOne &\equiv \sat_{q} \fOne \land\sat_{q'} (\fOne \lor \neg\fOne) \land \sat_{q'-q''} \neg\fOne
    \\
    \sat_{<q} \fOne &\equiv \neg\sat_{q} \fOne 
\\
    \sat_{\leq q} \fOne &\equiv \sat_{=q} \fOne \lor \sat_{<q} \fOne 
    \\
    \sat_{>q} \fOne &\equiv \sat_{q} \fOne \land\sat_{q'} (\fOne \lor \neg\fOne) \land \neg \sat_{q'-q''} \neg\fOne
\end{align*}
\end{theorem}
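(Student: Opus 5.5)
The plan is to reduce every clause to a statement about one natural number. Fix $\model$, an assignment $\vNseg$ and a world $w_m$ with $|W|=n$, and put $k=|\{w_l : l\leq m \wedge \model,w_l\vDash_{\vNseg}\fOne\}|$. By Def.~\ref{def:semantics}, $\model,w_m\vDash_{\vNseg}\sat_q\fOne$ iff $k/n\geq q$, and $\sat_{=q}\fOne$ holds iff $k/n=q$. By the clause for negation, exactly $m-k$ of $w_1,\dots,w_m$ satisfy $\neg\fOne$. Since $\fOne\lor\neg\fOne$ holds at every world, exactly $m$ of them satisfy it. Each equivalence then becomes an arithmetic fact about $k$, $m$ and $n$.

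Before that, I will fix how the parameters are read, as in the proof of Th.~\ref{lemma:Def=BBox}. The conjunct $\sat_{q'}(\fOne\lor\neg\fOne)$ is meant to pin down the position of $w_m$, so $q'$ is taken to be $\frac{m}{n}$. I will argue this is the only admissible reading, since it is the greatest $q'$ for which the conjunct holds, i.e.\ $\sat_{\overline{q'}}$ in the sense of Notation~\ref{notation:top}. I also take $q''=q$.

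\textbf{First equivalence.} With this reading, the third conjunct $\sat_{q'-q}\neg\fOne$ holds iff $\frac{m-k}{n}\geq\frac{m}{n}-q$, that is, iff $k/n\leq q$. The first conjunct gives $k/n\geq q$, and the second holds by the choice of $q'$. So the conjunction holds iff $k/n=q$, which is exactly the condition for $\sat_{=q}\fOne$. The left-to-right direction is the same computation read backwards: if $k=qn$, then $\sat_q\fOne$ holds, $\sat_{m/n}(\fOne\lor\neg\fOne)$ holds, and $(m-k)/n=m/n-q$.

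\textbf{Remaining clauses.} $\sat_{<q}\fOne$ holds iff $k/n<q$, which by bivalence is the negation of $\sat_q\fOne$; this is the argument of Th.~\ref{lemma:Def<WBoxe}. $\sat_{\leq q}$ splits into the cases $k/n=q$ and $k/n<q$, giving the disjunction of the first two clauses. For $\sat_{>q}$, the conjunction $\sat_q\fOne\land\sat_{q'}(\fOne\lor\neg\fOne)\land\neg\sat_{q'-q}\neg\fOne$ gives $k/n\geq q$ and $\frac{m-k}{n}<\frac{m}{n}-q$, that is, $k/n>q$. The first conjunct is then redundant but harmless.

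\textbf{Main obstacle.} The only delicate step is the role of $q'$ and $q''$, which are left free in the statement. If $q'$ were allowed to be any value with $\sat_{q'}(\fOne\lor\neg\fOne)$ true, say $q'=0$, the third conjunct would become trivial and the first equivalence would fail. I would therefore first state explicitly that $q'$ is the exact position ratio $\frac{m}{n}$ and that $q''=q$. This is justified by Prop.~\ref{prop:auxCons} together with Th.~\ref{theorem:Top=White}, which show that $\sat_{=q'}(\fOne\lor\neg\fOne)$ has a unique witness, namely $\frac{m}{n}$. Once this is settled, the rest is the routine arithmetic above.
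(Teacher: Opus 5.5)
Your argument is correct and follows the paper's own proof. Both reduce each clause to counting the $k$ worlds up to $w_m$ that satisfy $\fOne$, read $\sat_{q'}(\fOne\lor\neg\fOne)$ as fixing the position $q'=\frac{m}{n}$, and take $q''=q$, so that $\sat_{q'-q''}\neg\fOne$ expresses $k/n\leq q$; you merely make this parameter reading explicit, together with why a looser reading of $q'$ would break the first clause, whereas the paper leaves it implicit (its proof even says the tautology conjunct holds iff $w_m$ sees ``at least'' $q'$ of the worlds).
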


\begin{proof}
    For $\sat_{=q} \fOne$, notice that 
    $\sat_{q'} (\fOne \lor \neg\fOne)$ is satisfied at $w_m$ iff $w_m$ sees at least $q'$ of the worlds in $W$.
    Hence, $\sat_{q} \fOne$ and $\sat_{q'-q''} \neg\fOne$ are satisfied at $w_m$ iff, respectively, \emph{at least} $q$ of the worlds in $W$ precede $w_m$ and satisfy $\fOne$ and \textit{at most} $q$ of the worlds in $W$ precede $w_m$ and satisfy $\fOne$ (that is, at least $q'-q''$ of the worlds in $W$ precede $w_m$ and satisfy $\neg \fOne$).
    The definitions of $\sat_{<q} \fOne$ and $\sat_{\leq q} \fOne$ are obvious.
    The Def.of $\sat_{>q} \fOne$ follows from that of $\sat_{=q} \fOne$.
\end{proof}

\begin{theorem}
\label{lemma:DefTopBbox}
    $
    \bbox_{\overline{q}} \fOne \equiv \bbox_{q} \fOne \land \neg \bbox_{>q} \fOne
    $
\end{theorem}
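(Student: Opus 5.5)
The plan is to unfold the semantics on both sides and show they coincide at every world $w_m$ and for every assignment $\vNseg \in \Pg \cup \{\vseg\}$. The left-hand side, $\bbox_{\overline{q}}\fOne$, is given by Notation~\ref{notation:top}: $\bbox_q\fOne$ holds at $w_m$, and for every $q'>q$ we have $\model,w_m\nvDash_{\vNseg}\bbox_{q'}\fOne$. The right-hand side is $\bbox_q\fOne\land\neg\bbox_{>q}\fOne$. Here $\bbox_{>q}\fOne$ is read, in line with Notation~\ref{notation:=}, as: there is $\vNseg'\in\Pg$ such that $\frac{|\{w_l : l\le m \wedge \model,w_l\vDash_{\vNseg'}\fOne\}|}{m}>q$. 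Its definability in terms of primitive operators is Th.~\ref{lemma:Def=BBox}, but the proof only needs its truth condition. Since the conjunct $\bbox_q\fOne$ appears on both sides, it remains to show that, given $\bbox_q\fOne$ at $w_m$, the condition ``for all $q'>q$, $\neg\bbox_{q'}\fOne$'' is equivalent to $\neg\bbox_{>q}\fOne$.

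\emph{Right to left.} Assume $\neg\bbox_{>q}\fOne$, so every $\vNseg'\in\Pg$ has frequency $f_{\vNseg'}\le q$ of $\fOne$ among $w_1,\dots,w_m$. Take any $q'>q$. If $\bbox_{q'}\fOne$ held, some $\vNseg'$ would have $f_{\vNseg'}\ge q'>q$, a contradiction. So $\neg\bbox_{q'}\fOne$ holds for every $q'>q$.

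\emph{Left to right.} This is the direction that needs care. Suppose $\bbox_{>q}\fOne$ holds, witnessed by some $\vNseg'$ with $f_{\vNseg'}>q$. Set $q':=f_{\vNseg'}$. This is a rational number of the form $k/m$, so $q'\in\mathbb{Q}_{[0,1]}$ and $q'>q$. The same witness $\vNseg'$ gives $\bbox_{q'}\fOne$ at $w_m$, contradicting the maximality clause of Notation~\ref{notation:top}. Hence $\neg\bbox_{>q}\fOne$.

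The main obstacle is exactly this step: we must produce an admissible index strictly above $q$ that is actually realized by some assignment. Because the structure is finite and the index set is $\mathbb{Q}_{[0,1]}$, we can take the frequency $k/m$ of the witness itself. There is no density or supremum issue, since the realized frequencies form a finite set. Once these two directions are combined with the shared conjunct $\bbox_q\fOne$, the equivalence holds at every world and for every assignment, and hence as a validity in the sense of Def.~\ref{df:validity}.
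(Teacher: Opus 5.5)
Your argument is correct and is essentially the paper's proof, which says only that the equivalence comes from translating Notation~\ref{notation:top} into formulas. You also make explicit the one point the paper leaves tacit: any witnessing frequency $k/m>q$ is itself an admissible index in $\mathbb{Q}_{[0,1]}$, so maximality over rational indices coincides with there being no assignment in $\Pg$ whose frequency strictly exceeds $q$.
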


\begin{proof}
    By simply translating Notation~\ref{notation:top} into formulas.
\end{proof}


\begin{theorem}
\label{lemma:Def=Next}
    $
    \wnext_{=q} \fOne \equiv \wnext_{q} \fOne \land ((\wbox_{q'} \fOne \land \neg \bbox_{q'}\fOne)\lor \wnext_{1-q} \neg\fOne)
    $
\end{theorem}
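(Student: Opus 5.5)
The plan is to unfold both sides with the clause for $\wnext_q$ in Def.~\ref{def:semantics}, mirroring the proof of Th.~\ref{lemma:Def=WBox} for $\wbox$. Fix $\model=(W,\Rexp,\Pg,\vseg)$, a world $w_m$ and $\vNseg\in\Pg\cup\{\vseg\}$. Let $D$ be the set of $\vNseg'\in\Pg$ that agree with $\vNseg$ on $w_1,\dots,w_m$, and let $N_\fOne\subseteq D$ be the subset of those with $\model,w_{m+1}\vDash_{\vNseg'}\fOne$. The key observation is that, by the bivalent clause for negation, $N_\fOne$ and $N_{\neg\fOne}$ partition $D$, so $|N_{\neg\fOne}|=|D|-|N_\fOne|$. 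I will split the proof according to whether $D$ is nonempty.

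\emph{Main case, $D\neq\emptyset$.} Here $\wnext_{1-q}\neg\fOne$ holds iff $(|D|-|N_\fOne|)/|D|\ge 1-q$, i.e.\ iff $|N_\fOne|/|D|\le q$. Conjoined with $\wnext_q\fOne$, i.e.\ $|N_\fOne|/|D|\ge q$, this gives exactly $\wnext_{=q}\fOne$ (Notation~\ref{notation:=}), which settles the right-to-left direction through the second disjunct. Conversely, $\wnext_{=q}\fOne$ yields $\wnext_q\fOne$ and, by the same arithmetic, $\wnext_{1-q}\neg\fOne$, so the right-hand side holds. The role of the first disjunct $\wbox_{q'}\fOne\wedge\neg\bbox_{q'}\fOne$ is to be harmless here. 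For $\vNseg\in\Pg$ it cannot hold, by Th.~\ref{prop:incons}. When $D\neq\emptyset$ but $\vNseg=\vseg\notin\Pg$, I would argue that it adds nothing beyond what $\wnext_q\fOne$ already forces; I would try to show this by reading $q'$ as tied to the value of the fraction, as in the proof of Th.~\ref{lemma:Def=BBox}.

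\emph{Degenerate case, $D=\emptyset$.} This is the case where $\vseg$ is not compatible with $\Pg$ up to $w_m$ (Def.~\ref{def:compatiUpTo}). Both fractions are then $0/0$, so I first fix the convention that makes the statement meaningful: under it, $\wnext_q\fOne$ and $\wnext_{=q}\fOne$ hold vacuously while $\wnext_{1-q}\neg\fOne$ may fail. The first disjunct is meant to capture precisely this situation. By Th.~\ref{prop:incons}(1), a witness $q'$ with $\wbox_{q'}\fOne\wedge\neg\bbox_{q'}\fOne$ certifies incompatibility. In the other direction, I would take $q'$ to be the observed frequency of $\fOne$ up to $w_m$ and show that no $\vNseg'\in\Pg$ reaches it, appealing to Th.~\ref{prop:cons}.

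The main obstacle is this degenerate case, including the edge case $m=n$ where $w_{m+1}$ does not exist. The paper leaves $q'$ implicit, so I must commit to reading it as existentially quantified. I must also check that this disjunct neither fires spuriously nor fails to fire exactly when $D=\emptyset$. A sub-issue is that incompatibility of $\vseg$ need not be witnessed by a \emph{single} formula $\fOne$ through $\bbox$. If it is not, I would fall back on restricting the equivalence to atomic $\fOne$, or to the case $D\neq\emptyset$, where the main-case argument already gives the result.
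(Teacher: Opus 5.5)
Your decomposition is the paper's own. You split on whether $\vNseg$ is compatible with $\Pg$ up to $w_m$, handle the compatible case by the complement count $|N_{\neg\fOne}|=|D|-|N_\fOne|$, and let the first disjunct absorb the incompatible case via Th.~\ref{prop:incons}. The case $D\neq\emptyset$ (with $m<n$, so that $w_{m+1}$ exists and the partition is real) goes through. However, the harmlessness of the first disjunct when $\vseg\notin\Pg$ needs a different argument from the one you sketch. By induction on formulas, $\model,w_l\vDash_{\vNseg}\fTwo$ depends only on the restriction of $\vNseg$ to $w_1,\dots,w_l$: atoms, $\wbox$, $\sat$ and $\wnext$ read only that prefix, and $\bbox$ and $\bstar$ ignore the assignment. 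Hence any $\vNseg'\in D$ inherits $\wbox_{q'}\fOne$ at $w_m$ from $\vseg$ and witnesses $\bbox_{q'}\fOne$. So the first disjunct is false whenever $D\neq\emptyset$; tying $q'$ to a fraction as in Th.~\ref{lemma:Def=BBox} plays no role.

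The degenerate case has a genuine gap. First, your convention is incoherent. $\wnext_{1-q}\neg\fOne$ is a $\wnext$-formula with the same empty $D$, so any reading that makes $\wnext_q\fOne$ vacuously true makes it true as well. Under that reading both sides hold for every $q$, the case is trivial and the first disjunct is idle, but that is not the paper's semantics. The paper's convention, visible in its proofs of this theorem and of the clauses for $\wnext_{<q}$ and $\wnext_{>q}$, is that for $D=\emptyset$ the formula $\wnext_r\fTwo$ holds iff $r=0$. Then, at $q=0$, the left-to-right direction requires some $q'$ with $\model,w_m\vDash_{\vseg}\wbox_{q'}\fOne\land\neg\bbox_{q'}\fOne$. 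This is the \emph{converse} of Th.~\ref{prop:incons}, which that theorem does not provide (the paper's proof uses it in this direction anyway), and it is false.

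Take $W=\{w_1,\dots,w_4\}$, $\fr(\at_{\head})=\fr(\at_{\tail})=\frac12$, $w_1,w_2,w_3\in\vseg(\at_{\tail})$, $m=3$, $\fOne=\at_{\head}$ and $q=0$. Then $D=\emptyset$, so $\wnext_{=0}\at_{\head}$ and $\wnext_0\at_{\head}$ hold while $\wnext_1\neg\at_{\head}$ fails. Moreover, $\wbox_{q'}\at_{\head}$ holds at $w_3$ only for $q'=0$, where $\bbox_0\at_{\head}$ is trivially true. So the right-hand side fails for every $q'$, and the stated equivalence itself fails there. This defeats both your choice of $q'$ as the observed frequency and your fallback to atomic $\fOne$; for a tautological $\fOne$ the first disjunct can never fire at all.

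The sub-issue you flagged is thus fatal, not peripheral. For $m<n$, what your argument establishes is the equivalence restricted to $D\neq\emptyset$, or the equivalence under a vacuous-truth reading of empty denominators. At $m=n$ the partition claim also breaks. If satisfaction at the nonexistent $w_{n+1}$ counts as false, both numerators are empty, so $\wnext_{=0}\fOne$ holds while $\wnext_1\neg\fOne$ fails, even for $\vNseg\in\Pg$.
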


\begin{proof}
    From left to right.
    Given a model $\model$ and an assignment $\vNseg \in \Pg \cup \{\vseg\}$, if $\wnext_{=q} \fOne$ is satisfied at $w_m\in W$ according to $\vNseg$, then of course also $\wnext_{q} \fOne$ is satisfied at $w_m\in W$.
    Moreover, if $\vNseg$ is not incoherent with $\Pg$ at $w_m$, then for bivalence also $\wnext_{1-q} \neg\fOne$ is satisfied at $w_m$.
    On the other hand, if $\vNseg$ is incoherent with $\Pg$ at $w_m$ then there is a $q'\in \mathbb{Q}_{[0,1]}$ s.t. $\wbox_{q'} \fOne \land \neg \bbox_{q'}$ is satisfied at $w_m$, due to Th.~\ref{prop:incons}.
\\
    From right to left.
    Given a model $\model$ and an assignment $\vNseg \in \Pg \cup \{\vseg\}$, if both $\wnext_{q} \fOne$ and $\wbox_{q'} \fOne \land \neg \bbox_{q'}$ are satisfied at $w_m\in W$ according to $\vNseg$, then $\vNseg$ is incoherent with $\Pg$ at $w_m$ and so $q= 0$.
    Hence, $\wnext_{=q} \fOne$ is satisfied at $w_m$.
    On the other hand, if both $\wnext_{q} \fOne$ and $\wnext_{1-q} \neg\fOne$ are satisfied at $w_m\in W$ according to $\vNseg$, then $\vNseg$ is not incoherent with $\Pg$ at $w_m$, since $q= 0$ only if $1-q \neq 0$.
    Hence, $\wnext_{=q} \fOne$ is satisfied at $w_m$.
    
\end{proof}

\begin{theorem}
\label{lemma:Def<>Next}
\begin{align}
    \wnext_{<q} \fOne &\equiv \neg\wnext_{q} \fOne 
\\
    \wnext_{\leq q} \fOne &\equiv \wnext_{<q} \fOne \lor \wnext_{= q} \fOne 
\\
    \wnext_{>q} \fOne &\equiv \wnext_{1}(\fOne\lor \neg\fOne)\land \wnext_{\leq q} \neg\fOne 
\end{align}
\end{theorem}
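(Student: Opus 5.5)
The plan is to reduce all three equivalences to elementary facts about one rational number. Fix $\model$, $\vNseg \in \Pg \cup \{\vseg\}$ and $w_m$. Let $C_m$ be the set of $\vNseg' \in \Pg$ that agree with $\vNseg$ on $w_1,\dots,w_m$; this is the denominator in the clause for $\wnext$ in Def.~\ref{def:semantics}. For any formula $\fTwo$, let $\rho(\fTwo)=|\{\vNseg'\in C_m : \model,w_{m+1}\vDash_{\vNseg'}\fTwo\}|/|C_m|$. Then $\wnext_q\fTwo$ holds at $w_m$ iff $\rho(\fTwo)\ge q$, and by Notation~\ref{notation:=} the decorated operators $\wnext_{<q},\wnext_{=q},\wnext_{\le q},\wnext_{>q}$ compare the same $\rho(\fTwo)$ with $q$. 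I would first fix the degenerate case $C_m=\emptyset$ ($\vNseg$ incompatible with $\Pg$ up to $w_m$, Def.~\ref{def:compatiUpTo}). There I would follow the convention already used in the proof of Th.~\ref{lemma:Def=Next}, namely $\rho(\fTwo)=0$ for every $\fTwo$, and treat that case separately in each item.

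For the first equivalence, $\wnext_{<q}\fOne$ says $\rho(\fOne)<q$, while $\neg\wnext_q\fOne$ says $\rho(\fOne)\not\ge q$. By bivalence and the total order on $\mathbb{Q}$ these coincide, exactly as in Th.~\ref{lemma:Def<WBoxe}; the degenerate case needs nothing special. For the second equivalence, $\rho(\fOne)\le q$ iff $\rho(\fOne)<q$ or $\rho(\fOne)=q$. This is again pure order trichotomy, combined with the clause for disjunction.

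The third equivalence is the substantive one and I expect it to be the main obstacle. I would split it into two steps. (a) The conjunct $\wnext_1(\fOne\lor\neg\fOne)$ holds iff $C_m\neq\emptyset$. Indeed, $\fOne\lor\neg\fOne$ is satisfied at $w_{m+1}$ by every assignment, so $\rho(\fOne\lor\neg\fOne)=1$ when $C_m$ is nonempty and $=0$ by convention otherwise. So this conjunct exactly excludes the degenerate case. It is needed there, because $\wnext_{\le q}\neg\fOne$ is then trivially true, whereas $\wnext_{>q}\fOne$ is false for $q\ge 0$. (b) When $C_m\neq\emptyset$, the clause for negation splits $C_m$ into the assignments satisfying $\fOne$ at $w_{m+1}$ and those satisfying $\neg\fOne$. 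Hence $\rho(\neg\fOne)=1-\rho(\fOne)$, and $\rho(\fOne)>q$ iff $\rho(\neg\fOne)<1-q$.

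The delicate point is the index on the right-hand side. The complement computation turns a strict lower bound on $\fOne$ into a strict upper bound on $\neg\fOne$ with index $1-q$. So I would verify, when writing it out, that the stated $\wnext_{\le q}\neg\fOne$ is to be read with the complementary index and strictness, namely as $\wnext_{<1-q}\neg\fOne$, which by the first item is $\neg\wnext_{1-q}\neg\fOne$. If the literal index must stay, I would check the boundary cases $\rho(\fOne)\in\{q,1-q\}$ explicitly, since that is exactly where the biconditional could break. Once the index is settled, the argument closes by combining (a) and (b) with the first two equivalences.
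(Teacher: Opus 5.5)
Your treatment of the first two equivalences, and your observation (a) that $\wnext_1(\fOne\lor\neg\fOne)$ holds exactly when $C_m\neq\emptyset$, match the paper. It uses the same split on whether $\vNseg$ is compatible with $\Pg$ up to $w_m$ and the same convention that $\wnext_q$ holds only for $q=0$ in the incompatible case. Its entire proof of the third item is your observation (a).

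The gap is in how you close the third item. Your complement computation in (b) is correct, but what it shows is that the printed equivalence is \emph{false}. When $C_m\neq\emptyset$, the left side says $\rho(\fOne)>q$, while $\wnext_{\le q}\neg\fOne$ says $1-\rho(\fOne)\le q$, i.e.\ $\rho(\fOne)\ge 1-q$. No reading of the notation makes $\wnext_{\le q}\neg\fOne$ mean $\wnext_{<1-q}\neg\fOne$: the second item of the same theorem fixes $\wnext_{\le q}$ as ``ratio $\le q$''. Your fallback of checking the boundary values $\rho(\fOne)\in\{q,1-q\}$ cannot help either, because the disagreement is not confined to boundary points. For $q<\tfrac{1}{2}$ the two sides disagree at every value with $q<\rho(\fOne)<1-q$. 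For $q>\tfrac{1}{2}$ they disagree at every value with $1-q\le\rho(\fOne)\le q$. Only for $q=\tfrac{1}{2}$ is the disagreement a single point, $\rho(\fOne)=\tfrac{1}{2}$.

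Here is a concrete counterexample. Take $|W|=4$, $\fr(\at_\head)=\fr(\at_\tail)=\tfrac{1}{2}$ and $w_1\in\vseg(\at_\tail)$. Three assignments in $\Pg$ agree with $\vseg$ on $w_1$, and two of them put $w_2$ in $\at_\head$. So at $w_1$ you get $\rho(\at_\head)=\tfrac{2}{3}$ and $\rho(\neg\at_\head)=\tfrac{1}{3}$.
For $q=0$, $\wnext_{>0}\at_\head$ holds but $\wnext_{\le 0}\neg\at_\head$ fails.
For $q=1$, $\wnext_{>1}\at_\head$ fails while $\wnext_1(\at_\head\lor\neg\at_\head)\land\wnext_{\le 1}\neg\at_\head$ holds.
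So neither direction is valid.

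What your steps (a) and (b) actually establish, completely and correctly, is the corrected equivalence
\[
\wnext_{>q}\fOne \;\equiv\; \wnext_1(\fOne\lor\neg\fOne)\land\wnext_{<1-q}\neg\fOne ,
\]
equivalently $\wnext_1(\fOne\lor\neg\fOne)\land\neg\wnext_{1-q}\neg\fOne$ by the first item. In this form the first conjunct is genuinely needed, since in the incompatible case $\wnext_{<1-q}\neg\fOne$ holds for every $q<1$. A simpler correct alternative is $\neg\wnext_{\le q}\fOne$, mirroring the $\wbox$ case. So commit to this: give the counterexample, prove the corrected version, and drop the claim that the argument closes for the literal index. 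The paper's one-line proof never performs your computation (b), which is why it does not catch the error.
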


\begin{proof}
(12) If $\vNseg$ is not incompatible with $\Pg$ until $w_m$, the Def.of $\wnext_{<q} \fOne$ is obvious.
On the other hand, if at $w_m$, $\vNseg$ is incompatible with $\Pg$, then $\wnext_{q} \fOne$ is satisfied at $w_m$ iff $q=0$.
Hence, $\neg\wnext_{q} \fOne$ is satisfied at $w_m$ iff $q\neq 0$, which is exactly the only case in which $\wnext_{<q} \fOne$ is not satisfied there.
\\
(13) The second equivalence is self evident.
\\
(14) Observe that $\wnext_{1}(\fOne\lor \neg\fOne)$ is satisfied by $\vNseg$ in a world $w_m$ iff $\vNseg$ is not incompatible with $\Pg$ until $w_m$.
\end{proof}

Given Theorems~\ref{theorem:Top=White}, when $\oplus \in \{\wbox, \sat, \wnext\}$, the Def.of $\oplus_{\overline{q}}$ corresponds to that of $\oplus_{=q}$ (which is already given in Theorems~\ref{lemma:Def=WBox}, \ref{lemma:Def=Sat}, and \ref{lemma:Def=Next}).

The Def.of $\bstar_{\leq q} \fOne$ is as usual:

\begin{theorem}
\label{lemma:DefLeqBStar}
    $
    \bstar_{\leq q} \fOne \equiv \bstar_{1-q} \neg\fOne
    $
\end{theorem}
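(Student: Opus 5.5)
The plan is to work directly from the truth clause for $\bstar$ in Def.~\ref{def:semantics}, in the same style as Th.~\ref{lemma:Def<Boxes}. Following the generalized Notation at the start of this appendix, I read $\bstar_{\leq q}$ as the clause for $\bstar_q$ with $\ge$ replaced by $\le$:
$$
\model, w_m \vDash_{\vNseg} \bstar_{\leq q}\fOne \quad\text{iff}\quad \text{there is } w'\in W \text{ s.t. } \frac{|\{\vNseg'\in\Pg : \model, w'\vDash_{\vNseg'}\fOne\}|}{|\Pg|}\le q .
$$
Both sides of the equivalence are independent of the evaluation world $w_m$ and of the assignment $\vNseg$. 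It therefore suffices to show that, for each fixed $w'\in W$, the inner conditions on $\fOne$ and $\neg\fOne$ coincide. The existential quantifier over $w'$ then transfers directly: the same witness world serves both sides.

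First, fix $w'$. By the negation clause of Def.~\ref{def:semantics} (bivalence), every $\vNseg'\in\Pg$ satisfies exactly one of $\fOne$ and $\neg\fOne$ at $w'$. Hence $\Pg$ is partitioned into the assignments satisfying $\fOne$ at $w'$ and those satisfying $\neg\fOne$ at $w'$. Dividing by $|\Pg|$, the ratio for $\neg\fOne$ at $w'$ equals $1$ minus the ratio for $\fOne$ at $w'$. This is the same computation used in the negation case of Lemma~\ref{lemma:mu}, but it needs no inductive hypothesis and works for arbitrary (also modal) $\fOne$, since only bivalence is used.

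Second, the arithmetic step: the ratio for $\neg\fOne$ is $\ge 1-q$ iff the ratio for $\fOne$ is $\le q$. Quantifying existentially over $w'$ on both sides gives $\model, w_m\vDash_{\vNseg}\bstar_{\leq q}\fOne$ iff $\model, w_m\vDash_{\vNseg}\bstar_{1-q}\neg\fOne$, for every model, world and assignment, which is the claimed equivalence.

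The only real obstacle is well-definedness: the ratios require $|\Pg|\neq 0$, i.e.\ that some assignment realizes the well-defined measure $\fr$ on $W$. I will take this as part of the assumption that $\Pg$ is well-defined in Def.~\ref{def:PrL}; it guarantees that each $\fr(\at)\times|W|$ is an integer, so that $\Pg$ is nonempty. Unlike for $\bbox_{=q}$, no counterexample of the kind in Ex.~\ref{ex:ConjunctionOfBbox} arises here. This is because the witness quantified over is a world and not an assignment, and the complementation happens inside the count over assignments at that single world.
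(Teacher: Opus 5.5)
Your proof is correct and is essentially the paper's argument. The paper gives no separate proof and presents the equivalence ``as usual'', i.e.\ by the same unfolding of the truth clauses for the operator and for $\neg$ that it uses for $\wbox_{\leq q}$ and $\bbox_{\leq q}$; that is exactly your complementation of the count over $\Pg$ at a fixed witness world, followed by $1-r\geq 1-q \iff r\leq q$. Your explicit note that $\Pg\neq\emptyset$ is needed for the ratios to be defined is a harmless addition that the paper leaves implicit.
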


\noindent
Consequently, we can define $\bstar_{= q} \fOne$ via $\bstar_{q} \fOne$ and $\bstar_{\leq q} \fOne$, by adopting the same strategy of Th.~\ref{lemma:Def=BBox}.

\begin{theorem}
\label{lemma:Def=BStar}
    $
    \bstar_{= q} \fOne \equiv \bstar_{q}(\fOne \land \sat_{q'}(\fOne \lor \neg \fOne))\land \bstar_{1-q} (\neg\fOne \land \sat_{q'}(\fOne \lor \neg \fOne))
    $
\end{theorem}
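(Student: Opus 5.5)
The plan is to unfold both sides with the star clause of Def.~\ref{def:semantics}, using the auxiliary conjunct $\sat_{q'}(\fOne \lor \neg \fOne)$ to pin down which world the star quantifier looks at. This mirrors the strategy of Th.~\ref{lemma:Def=BBox}.

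The first observation is that $\fOne \lor \neg \fOne$ holds at every world under every assignment. Hence $\model, w_j \vDash_{\vNseg'} \sat_{q'}(\fOne \lor \neg \fOne)$ iff $\frac{j}{n} \ge q'$, independently of $\vNseg'$. I will read the schematic $q'$ as in Th.~\ref{lemma:Def=BBox}: the right-hand side is asserted for a suitable $q'$, and I take $q' = \frac{n}{n} = 1$, or more generally equality $\sat_{=q'}$ as in the earlier theorem. With $q'=1$ the conjunct holds exactly at $w_n$ and holds there for all assignments. So, for $\fThree \in \{\fOne, \neg\fOne\}$, the condition
\[
\frac{|\{\vNseg' \in \Pg : \model, w \vDash_{\vNseg'} \fThree \land \sat_{q'}(\fOne\lor\neg\fOne)\}|}{|\Pg|}
\]
is $0$ at every $w \neq w_n$. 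At $w = w_n$ it equals the ratio for $\fThree$ alone. Consequently, when $q>0$ and $1-q>0$, the existential over worlds in each star conjunct is forced to pick the same world $w_n$.

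Next I combine the two conjuncts. Write $k = |\{\vNseg' \in \Pg : \model, w_n \vDash_{\vNseg'} \fOne\}|$. The first conjunct says $k/|\Pg| \ge q$. Since negation is classical, the second says $(|\Pg|-k)/|\Pg| \ge 1-q$, i.e.\ $k/|\Pg| \le q$. Together these give $k/|\Pg| = q$ at $w_n$. For the converse, assume $\bstar_{=q}\fOne$, i.e.\ some world $w'$ has ratio exactly $q$. To conclude I need this ratio to be attained at $w_n$ in particular. For propositional $\fOne$ this is immediate from Lemma~\ref{lemma:mu}, since the ratio equals $\fr(\fOne)$ at every world.

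The edge cases $q=0$ and $q=1$, where one conjunct becomes trivially true and no longer anchors the world, are handled by the same computation. One conjunct still fixes $w_n$ and the trivial one imposes nothing. I must check that the anchored conjunct alone yields equality; for example, when $q=1$ the first conjunct forces ratio $1$ at $w_n$.

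I expect the main obstacle to be the converse direction for modal $\fOne$. There the ratio may vary across worlds, as Ex.~\ref{ex:StarFreq} shows. A witness $w'$ for $\bstar_{=q}$ need not be $w_n$, and anchoring at a single fixed world then loses information. My first attempt is to let $q'$ range existentially, with $q' = j/n$ chosen to match the witness world and using $\sat_{=q'}$, exactly as in Th.~\ref{lemma:Def=BBox}. Under that reading both directions go through uniformly: the witness $w_j$ for $\bstar_{=q}$ supplies $q'=j/n$, and conversely the shared $q'$ forces both star conjuncts onto the same $w_j$. If the statement is meant literally with a fixed $q'$ and non-strict $\sat$, I would restrict the claim to propositional $\fOne$ and rely on Lemma~\ref{lemma:mu}.
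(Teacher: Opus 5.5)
Your proposal is correct and follows the paper's own argument. The $\sat$ conjunct pins both star quantifiers to one world, where the first conjunct gives a ratio $\ge q$ and the second, by classical negation, a ratio $\le q$; when $q\in\{0,1\}$ the single non-trivial conjunct does the work, as you note. Your extra care is warranted, because the paper's remark that ``the world is fixed by the value of $q'$'' is only accurate for $\sat_{=q'}$ with an existentially chosen $q'=\frac{j}{n}$, as in Th.~\ref{lemma:Def=BBox}. With a non-strict $\sat_{q'}$ the equivalence still holds for propositional $\fOne$ (via Lemma~\ref{lemma:mu}) but fails for modal $\fOne$. For example, $\fOne=\bbox_{=\frac{1}{3}}\at_{\tail}$ in the model of Ex.~\ref{ex:StarFreq} has ratios $0,0,1$ at $w_1,w_2,w_3$. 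So $\bstar_{=0}\fOne$ holds while the right-hand side with $q'=1$ fails, and the right-hand side with $q'=0$ and $q=\frac{1}{2}$ holds while $\bstar_{=\frac{1}{2}}\fOne$ fails.
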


\begin{proof}
Indeed,
    $\bstar_{q}(\fOne \land \sat_{q'}(\fOne \lor \neg \fOne))$ is satisfied in $w_m$ if there is a world $w_i$ s.t.~at least $q$ of the assignments in $\Pg$ satisfy $\fOne$ in $w_i$, while 
    $\bstar_{1-q}(\neg\fOne \land \sat_{q'}(\fOne \lor \neg \fOne))$ is satisfied in $w_m$ if at most $q$ of the assignments in $\Pg$ satisfy $\fOne$ in the same world $w_i$ (the world is fixed by the value of $q'$).
\end{proof}

The definitions of $\bstar_{\overline{q}}\fOne$ (see Notation~\ref{notation:top}) and $\bstar_{>q}\fOne$ are particularly problematic, because $\bstar$ requires to check the existence of a world and to take into account all assignments in $\Pg$.
As opposed to what done in the other cases, we will define $\bstar_{\overline{q}}\fOne$ first and then obtain $\bstar_{>q}\fOne$.

\begin{theorem}
\label{lemma:DefTopBStar}
    $$
    \bstar_{\overline{q}} \fOne \equiv \bstar_{q} \fOne \land \bstar_{=q'} \bigwedge_{\at_i \in \At_\Omega} \wbox _{=\frac{1}{l}} \wbox _{=\frac{\fr (p_{i})}{\sum ^i_1 \fr (p_{i})}} (\at_i \land \wnext_0  \at_i)  \land \neg \wbox_{=q'}(\fOne \land \neg \fOne) \land \neg \bstar _{q + q'} \fOne
    $$

\end{theorem}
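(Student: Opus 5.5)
We unfold Notation~\ref{notation:top}: $\bstar_{\overline{q}}\fOne$ holds iff $\bstar_q\fOne$ holds and $\bstar_{q''}\fOne$ fails for every $q''>q$. By the clause for $\bstar$, this says that the maximum over $w'\in W$ of $\frac{|\{\vNseg'\in\Pg : \model,w'\vDash_{\vNseg'}\fOne\}|}{|\Pg|}$ equals $q$. Both sides of the equivalence are independent of the evaluation world and of $\vNseg$, so we may work with this maximum throughout.

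The first conjunct $\bstar_q\fOne$ is literally the first half of the definition. The whole difficulty is the second half, which quantifies over all rationals $q''>q$. The plan is to reduce it to a single finite gap. All fractions in the clause for $\bstar$ have denominator $|\Pg|$. Hence it suffices to exclude the next attainable value $q+\frac{1}{|\Pg|}$, and any $q''>q$ fails as soon as this value fails, by monotonicity of $\bstar_{q''}$ in $q''$. So the target is to show that the middle conjuncts force a parameter $q'$ to equal $\frac{1}{|\Pg|}$. Then the last conjunct $\neg\bstar_{q+q'}\fOne$ is exactly the missing half.

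The main step, and the main obstacle, is extracting $|\Pg|$ inside the language. I would argue as follows.
\begin{itemize}
\item First, compute the value of $\bstar_{=q'}\bigwedge_{\at_i}\wbox_{=\frac1l}\wbox_{=\cdots}(\at_i\wedge\wnext_0\at_i)$ via the $\bstar_{=}$ reading (Th.~\ref{lemma:Def=BStar}, Th.~\ref{theorem:=TopStarProp}).
\item The inner formula $\at_i\wedge\wnext_0\at_i$ is satisfied by every $\vNseg'\in\Pg$ wherever $\at_i$ holds, since $\wnext_0$ is trivially true. The nested $\wbox_{=}$ operators then pin down, world by world, a specific arrangement of the atoms in blocks. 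The ratios $\fr(\at_i)/\sum_{j\le i}\fr(\at_j)$ mirror the factors in the product formula for $|\Pg|$ from the proof of Lemma~\ref{lemma:mu}.
\item The intended outcome is that exactly one assignment in $\Pg$ satisfies the conjunction at the relevant world, namely the canonical block assignment. Then $q'=\frac{1}{|\Pg|}$.
\end{itemize}
I would verify this by induction on the number of atoms, using the binomial-product count of Lemma~\ref{lemma:mu}. This is where I expect real trouble: the index $l$ is unspecified, and uniqueness of the witnessing assignment must be checked carefully.

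The conjunct $\neg\wbox_{=q'}(\fOne\wedge\neg\fOne)$ serves as a non-degeneracy guard. Since the contradiction holds nowhere, $\wbox_{=q'}$ of it holds iff $q'=0$. So this conjunct rules out $q'=0$ and guarantees that $q+q'>q$ is a genuine strict increment.

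With $q'=\frac{1}{|\Pg|}$ established, we assemble both directions.
\begin{itemize}
\item ($\Rightarrow$) The maximum equals $q$, so $\bstar_q\fOne$ holds. The maximum is below $q+q'$, so $\neg\bstar_{q+q'}\fOne$ holds. The middle conjuncts hold for this $q'$ independently of $\fOne$.
\item ($\Leftarrow$) The maximum is at least $q$ and strictly below $q+\frac{1}{|\Pg|}$. By the denominator argument it is exactly $q$, which gives $\bstar_{\overline{q}}\fOne$.
\end{itemize}
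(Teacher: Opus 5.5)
You follow the same route as the paper's sketch. You reduce maximality to excluding $q+\frac{1}{|\Pg|}$, let the middle conjunct pin $q'=\frac{1}{|\Pg|}$ through a unique block assignment, and use $\neg\wbox_{=q'}(\fOne\land\neg\fOne)$ to exclude $q'=0$. The gap is the step you yourself flag as the main obstacle: you state it only as an ``intended outcome'', and for the formula as written it is false. Since $\bstar_{=q'}$ is existential over worlds (and Th.~\ref{theorem:=TopStarProp} covers only propositional arguments), you would need two facts: at every world at most one $\vNseg'\in\Pg$ satisfies the conjunction, and at some world exactly one does.

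Take $W=\{w_1,\dots,w_4\}$, $p_1=\at_{\head}$, $p_2=\at_{\tail}$ and $\fr(\at_{\head})=\fr(\at_{\tail})=\frac{1}{2}$, so $|\Pg|=6$. As you observe, $\wnext_0$ is idle for assignments in $\Pg$. The conjuncts therefore reduce to $\wbox_{=\frac{1}{l}}\wbox_{=1}\at_{\head}$ and $\wbox_{=\frac{1}{l}}\wbox_{=\frac{1}{2}}\at_{\tail}$. For the block assignment $HHTT$, $\wbox_{=1}\at_{\head}$ holds at $w_1,w_2$ but $\wbox_{=\frac{1}{2}}\at_{\tail}$ holds only at $w_4$. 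The two outer ratios never coincide, so the canonical assignment satisfies the conjunction nowhere, for any $l$. The only witnesses are $HTHT$ and $HTTH$, at $w_2$ (with $l=2$) or at $w_3$ (with $l=3$). So the only admissible $q'\neq 0$ is $\frac{2}{6}=\frac{1}{3}$, not $\frac{1}{6}$.

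Now take $\fOne=\at_{\head}$ and $q=\frac{1}{3}$. The right-hand side holds, since every world has ratio $\frac{1}{2}<\frac{2}{3}$. But $\bstar_{\overline{1/3}}\at_{\head}$ fails, because $\bstar_{\frac{1}{2}}\at_{\head}$ holds. An induction via the product formula of Lemma~\ref{lemma:mu} cannot repair this: that formula counts $\Pg$, not the assignments satisfying the conjunction at a given world.

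Independently, your ($\Leftarrow$) ``denominator argument'' is invalid. Write $M$ for the maximal ratio. From $q\le M<q+\frac{1}{|\Pg|}$ and $M|\Pg|\in\Nat$ you only get $M=\lceil q|\Pg|\rceil/|\Pg|$, which equals $q$ only if $q|\Pg|\in\Nat$. Consider $|W|=2$ with a fair coin. Here the middle conjunct does force $q'=\frac{1}{2}=\frac{1}{|\Pg|}$, since only $HT$ satisfies it (at $w_2$, with $l=2$). Yet for $\fOne=\at_{\head}$ and $q=\frac{1}{3}$ the right-hand side holds, while the maximum is $\frac{1}{2}$, so $\bstar_{\overline{1/3}}\at_{\head}$ is false. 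You would need an extra conjunct forcing $q$ to be an attained value, such as $\bstar_{=q}\fOne$ in place of $\bstar_q\fOne$.

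The paper's own sketch asserts exactly these two claims without argument: that it suffices to exclude $\bstar_{q+\frac{1}{|\Pg|}}\fOne$, and that a single assignment in $\Pg$ satisfies the middle formula. So following it does not close either gap. As written, the statement must be corrected before a proof along these lines can succeed.
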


\begin{proof}
The formula
    $\bstar_{\overline{q}} \fOne$ is satisfied in $w_m$ iff $\bstar_{q} \fOne$ is satisfied at $w_m$ but for no $q''\geq q$, $\bstar_{q''} \fOne$ is satisfied in this same $w_m$.
    Notice that it is enough to check that $\bstar_{q+\frac{1}{| \Pg |}} \fOne$ is not satisfied at $w_m$, which is exactly what the formula in the \textit{definiens} does.
    To see this, consider that $\bigwedge_{\at_i \in \At_\Omega} \wbox _{=\frac{1}{l}} \wbox _{=\frac{\fr (p_{i})}{\sum ^i_1 \fr (p_{i})}} (\at_i \land \wnext_0  \at_i)$ is true in a world $w_z$ iff 
    \begin{itemize}
        \item there is at least a world $w_a$ preceding $w_z$ s.t.~$p_1$ is satisfied in all the worlds preceding $w_a$ and there cannot be more worlds satisfying $p_1$ in the series;
        \item there is at least a world $w_b$ between $w_a$ and $w_z$ s.t.~$p_2$ is satisfied in all the worlds between $w_b$ and $w_a$ and there cannot be more worlds satisfying $p_2$ in the series; and so on.
        \item in the worlds preceding $w_z$, each atom $p_i$ is satisfied $\fr (p_i)$ times.
    \end{itemize}
    In other words, $\bigwedge_{\at_i \in \At_\Omega} \wbox _{=\frac{1}{l}} \wbox _{=\frac{\fr (p_{i})}{\sum ^i_1 \fr (p_{i})}} \at_i$ requires that the atoms are satisfied in lexical order and respect the distribution of $\Pg$.
    Moreover, the sub-formulas including $\wnext$ ensures that all the atoms are used (otherwise the formula is not satisfied).
    Hence, the formula can be satisfied only in the last world of the series, and there is only one assignment in $\Pg$ that satisfies it.
    The only precaution needed is to impose that $q'\neq 0$, which is what $\neg \wbox_{=q'}(\fOne \land \neg \fOne)$ ensures.
    We conclude by remarking that $\bstar _{q} \fOne$ and $\neg \bstar _{q + q'} \fOne$ state precisely that $q$ is the highest value for which $\bstar _{q} \fOne$ is satisfied.
    
    
\end{proof}

\begin{theorem}
\label{lemma:Def>BStar}
    $
    \bstar_{>q} \fOne \equiv \bstar_{q}\fOne \land \neg \bstar_{\overline{q}} \fOne
    $
\end{theorem}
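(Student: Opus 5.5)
We need to prove $\bstar_{>q}\fOne \equiv \bstar_q\fOne \land \neg\bstar_{\overline{q}}\fOne$. The plan is to unfold the definitions. By Notation~\ref{notation:=}, extended in the obvious way to strict ``greater than'', $\model,w_m\vDash_\vNseg \bstar_{>q}\fOne$ holds iff there is a world $w'$ with $\frac{|\{\vNseg'\in\Pg:\model,w'\vDash_{\vNseg'}\fOne\}|}{|\Pg|}>q$. Let $f(w')$ denote this fraction. Since $\Pg$ and $W$ are finite, the set $\{f(w'):w'\in W\}$ is finite, so $M=\max_{w'}f(w')$ exists. Then $\bstar_{q}\fOne$ holds (at any world, under any assignment, since $\bstar$ is world- and assignment-independent) iff $M\ge q$, and $\bstar_{>q}\fOne$ holds iff $M>q$.

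The next step is to characterise $\bstar_{\overline q}\fOne$ via Notation~\ref{notation:top}. It holds iff $M\ge q$ and, for every $q'>q$, we have $M<q'$. I expect this to be equivalent to $M=q$. Indeed, if $M=q$, then $M\ge q$ and $M<q'$ for every $q'>q$. Conversely, if $M>q$, pick $q'=M$; this is rational because $M$ is a ratio of naturals, so $\bstar_{q'}\fOne$ holds and $\bstar_{\overline q}\fOne$ fails.

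Combining the two steps, $\bstar_q\fOne\land\neg\bstar_{\overline q}\fOne$ holds iff $M\ge q$ and $M\neq q$, i.e.\ iff $M>q$, which is exactly $\bstar_{>q}\fOne$. Both sides are world- and assignment-independent, so the equivalence holds at every $w_m$ and for every $\vNseg\in\Pg\cup\{\vseg\}$. It therefore holds in every model in the sense of Def.~\ref{df:validity}.

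The only delicate point is the step from ``no larger index succeeds'' to $M=q$. This needs the supremum to be attained and rational, which finiteness of $\Pg$ and $W$ gives. The definability of $\bstar_{\overline q}$ from Th.~\ref{lemma:DefTopBStar} is not needed here: we reason directly with the semantic notation. We cite that theorem only to note that the right-hand side is expressible with primitive operators.
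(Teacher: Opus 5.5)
Your proof is correct and follows the same route as the paper: a direct unfolding of the truth conditions for $\bstar_{q}$, $\bstar_{>q}$ and $\bstar_{\overline{q}}$. You organize it around the maximal fraction $M$, whereas the paper argues directly that $\bstar_{q}\fOne$ together with the failure of $\bstar_{\overline{q}}\fOne$ yields some index $q'>q$ that succeeds, and it calls the converse obvious. Your remark that the witnessing index $q'=M$ is rational makes explicit the step the paper leaves unstated.
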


\begin{proof}
    From left to right, the proof is obvious.
    From right to left.
    If $\bstar_{q}\fOne$, then there is a world s.t.~$\fOne$ is satisfied in it for at least $q$ of the assignments in $\Pg$.
    moreover, if $\neg \bstar_{\overline{q}} \fOne$, then $q$ is not the greatest value of $q$ for which this is true.
    Hence, there is a world s.t.~$\fOne$ is satisfied in it for strictly more than $q$ of the assignments in $\Pg$.
\end{proof}

\section{
Basic properties of the operators}





In this Appendix, we prove some basic results regarding our modalities: monotonicity, (metalinguistic) interdefinability and relations between modalities, independence from world and assignments, and nesting.
These properties are used (sometimes implicitly) in Section~\ref{sec:properties} to address the guiding example displayed in Section~\ref{sec:guide}.

\subsection{Monotonicity}

The operator $\sat$ enjoys some form of monotonicity with respect to subsequent worlds; in other terms, the probability for a 
formula to satisfy $\sat_{q}$ can only increase while moving to subsequent worlds.

\begin{theorem}
\label{conj:MonotonicityCircle}
    For every model $\model=(W,\Rexp, \Pg,\vseg)$, assignment $\vNseg \in \Pg \cup \{\vseg \}$, 
    and $q\in \mathbb{Q}_{[0,1]}$:
    $$
    \model, w_i \vDash_{\vNseg} \sat_{q} \fOne \quad \Longrightarrow \quad \model, w_j \vDash_{\vNseg} \sat_{q} \fOne
    $$
    for any $j\ge i$ such that $w_i, w_j \in W$.
\end{theorem}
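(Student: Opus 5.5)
The plan is to unfold the truth clause for $\sat_q$ from Def.~\ref{def:semantics} and show that the quantity it compares with $q$ is non-decreasing in the index of the world. For a fixed model $\model$ with $|W|=n$ and a fixed assignment $\vNseg\in\Pg\cup\{\vseg\}$, set
$$
S_k=\{w_l : l\leq k \wedge \model, w_l \vDash_{\vNseg} \fOne\}.
$$
Then $\model, w_k \vDash_{\vNseg} \sat_q\fOne$ holds iff $|S_k|/n \geq q$. The whole argument reduces to showing $S_i\subseteq S_j$ whenever $i\leq j$, because the denominator $n=|W|$ does not depend on the world of evaluation. This is exactly what separates $\sat$ from $\wbox$, whose denominator $m$ grows with the world, so $\wbox$ is not monotone.

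The key point to check is that membership of $w_l$ in $S_k$ does not depend on $k$. The condition $\model, w_l \vDash_{\vNseg}\fOne$ is evaluated at $w_l$ under the same fixed $\vNseg$. The clause for $\sat_q$ neither changes the assignment nor refers to the current world except through the bound $l\leq k$. So whether $\fOne$ holds at $w_l$ is a fixed fact about $(\model,w_l,\vNseg)$, whatever $\fOne$ is, including formulas with nested modalities such as $\bbox$, $\wnext$ or $\bstar$. No induction on $\fOne$ is needed.

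With this in hand, $i\leq j$ gives $\{w_l: l\leq i\}\subseteq\{w_l:l\leq j\}$. Intersecting both sides with the fixed set $\{w_l : \model,w_l\vDash_{\vNseg}\fOne\}$ yields $S_i\subseteq S_j$, hence $|S_i|\leq|S_j|$. Therefore $q\leq |S_i|/n\leq |S_j|/n$, which gives $\model,w_j\vDash_{\vNseg}\sat_q\fOne$.

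The only step that needs any care is the independence claim in the second paragraph: one must make sure that no operator's clause silently depends on the evaluation world when it is nested inside $\sat$. Since Def.~\ref{def:semantics} defines $\model,w\vDash_{\vNseg}\fOne$ as a relation on triples, this is immediate. I would remark on it explicitly rather than prove it.
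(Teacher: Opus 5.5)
Your proof is correct and rests on the same idea as the paper's: the set of worlds up to the evaluation point that satisfy $\fOne$ can only grow, while the denominator $|W|$ in the clause for $\sat_q$ stays fixed. The paper argues by induction on $j-i$, adding one world at a time so that the count either stays the same or increases by one. You get the same conclusion in one step from the inclusion $S_i\subseteq S_j$, which is a slightly cleaner rendering of the same argument.
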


\begin{proof}
    By induction on the value of $j- i$.
    If $j- i=0$, the claim is obvious.
    Assume that it holds for $j- i=m$ and $\model, w_i \vDash_a \sat_q \fOne$.
    If $\model, w_{i+m+1} \nvDash_{\vNseg} \fOne$, the fraction defining the truth condition for $\sat_{q}\fOne$ is the same at $w_{i+m+1}$ and at $w_{i+m}$, while if $\model, w_{i+m+1} \vDash_{\vNseg} \fOne$, then, by Def.~\ref{def:semantics}
    , $\model, w_{j} \vDash_a \sat_{q'} \fOne$ for $q'= q + \frac{1}{|W|}> q$ and, again  by Def.~\ref{def:semantics}, in particular $\model, w_{j} \vDash_a \sat_{q} \fOne$. 
    In both cases, if $\model, w_{i+m} \vDash_{\vNseg} \sat_{q}\fOne$, then $\model, w_{i+m+1} \vDash_{\vNseg} \sat_{q}\fOne$.
\end{proof}

Moreover, for atomic formulas, the operator $\bbox$ enjoys monotonicity in the opposite direction w.r.t.~$\sat$.

\begin{theorem}
\label{conjecture:bboxDecrease}
    For every model $\model=(W, \Rexp, \Pg,\vseg)$, assignment $\vNseg \in \Pg \cup \{\vseg \}$, 
    and $q\in \mathbb{Q}_{[0,1]}:$
    $$
    \model, w_i \vDash_{\vNseg} \bbox_{q} \at \quad \Longrightarrow \quad \model, w_j \vDash_{\vNseg} \bbox_{q} \at
    $$
    for any $j\le i$ such that $w_i, w_j \in W$.
\end{theorem}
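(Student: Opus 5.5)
The plan is to show that the witness assignment at $w_i$ also serves, after a rearrangement, as a witness at $w_j$. By Def.~\ref{def:semantics}, $\model, w_i \vDash_{\vNseg} \bbox_{q}\at$ gives some $\vNseg' \in \Pg$ with
$$
k := |\{w_l : l \leq i \wedge w_l \in \vNseg'(\at)\}| \ge q\cdot i .
$$
We then need some $\vNseg'' \in \Pg$ such that $|\{w_l : l\le j \wedge w_l\in\vNseg''(\at)\}| \ge q\cdot j$. Since $\at$ is atomic, its truth at a world under an assignment depends only on whether that world lies in the assignment's extension of $\at$. So the count only involves the finite sets $\vNseg''(\at)\cap\{w_1,\dots,w_j\}$, and no induction on formulas is needed.

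The key step is to permute worlds inside the prefix. Let $\pi$ be a permutation of $\{w_1,\dots,w_i\}$ that moves the $k$ worlds of $\vNseg'(\at)\cap\{w_1,\dots,w_i\}$ onto $w_1,\dots,w_k$, and fixes every world after $w_i$. Define $\vNseg''(\at') = \pi[\vNseg'(\at')]$ for every $\at'\in\At_\Omega$. This is again an assignment: each world carries exactly one atom, because $\pi$ is a bijection. It also lies in $\Pg$, because $|\vNseg''(\at')| = |\vNseg'(\at')|$ for every atom, so the frequency condition of Def.~\ref{def:PrL} is preserved. In other words, $\Pg$ is closed under permutations of $W$.

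Next we count. Under $\vNseg''$, the $\at$-worlds among $w_1,\dots,w_i$ are exactly $w_1,\dots,w_k$. Hence for $j\le i$ the count up to $w_j$ is $\min(j,k)$. If $k\ge j$ the ratio is $1\ge q$, provided $q\le 1$, which holds since $q\in\mathbb{Q}_{[0,1]}$. If $k<j$ the ratio is $k/j \ge k/i \ge q$, since $j\le i$. In either case $\model, w_j\vDash_{\vNseg}\bbox_q\at$. Nothing here depends on $\vNseg$, because $\bbox$ ignores the current assignment.

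The main obstacle is making explicit the closure of $\Pg$ under permutations. This relies on reading $\Pg$ as \emph{all} assignments with the prescribed frequencies, as stated in Def.~\ref{def:PrL}. A second point to note is that the argument uses atomicity essentially. For a modal $\fOne$, permuting worlds changes the truth values of $\fOne$, which is consistent with the restriction of the statement to atoms $\at$.
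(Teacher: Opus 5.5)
Your proof is correct and follows essentially the same route as the paper's: take a witness in $\Pg$ whose $\at$-worlds within $w_1,\dots,w_i$ are moved to the front, then observe that the ratio at any earlier $w_j$ can only be at least as large. You also make explicit two points the paper leaves implicit. First, such a front-loaded witness exists because $\Pg$ is closed under permutations of $W$. Second, the count at $w_j$ is exactly $\min(j,k)$, which gives a ratio of at least $q$ and avoids the paper's rounding-free ``first $q\times i$ worlds'' phrasing.
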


\begin{proof}
    For $j=i$ it is obvious.
    If $j< i$ and $\model, w_i \vDash_{\vNseg} \bbox_{q} \at$, then there is an assignment in $\Pg$ s.t. at least $q$ of the worlds before $w_i$ satisfy $\at$.
    Consider in particular the assignment $\vNseg'\in \Pg$ s.t. it satisfies $\at$ in at least the first $q\times i$ worlds.
    Notice that $\vNseg '$ satisfies $\at$ at least in the first $q\times j$.
    Hence, $\model, w_j \vDash_{\vNseg} \bbox_{q} \at$.
\end{proof}

\noindent
The following example shows that this property does not extend to non-atomic formulas:

\begin{ex}
Consider a model defined as in Ex.~\ref{ex:NonUnicityBlack=}.
Notice that for every assignment $\vNseg \in \Pg \cup \{\vseg \}$, 
$\model, w_2 \vDash_{\vNseg} \bbox_{\frac{1}{2}} \bbox_{\overline{\frac{1}{2}}} \at_{\tail}$, but $\model, w_1 \nvDash_{\vNseg} \bbox_{\frac{1}{2}} \bbox_{\overline{\frac{1}{2}}} \at_{\tail}$.
\end{ex}

\subsection{
Relations between modalities}

It is now possible to consider the relationship between some of our operators.
In particular, the operator $\bbox$ can be defined in terms of $\wbox$ as follows:

\bigskip

\begin{theorem}
\label{theorem:whiteBlackBoxDef}
     For every assignment $\vNseg \in \Pg \cup \{\vseg \}$,
    $$
    \model, w\vDash_{\vNseg} \bbox_{q} \fOne  \quad \text{iff} \quad \text{there is a } \vNseg' \in \Pg \text{ s.t. } \model, w \vDash_{\vNseg'} \wbox_{q} \fOne.
    $$
\end{theorem}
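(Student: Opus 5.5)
The plan is to unfold both sides with the clauses of Def.~\ref{def:semantics} and observe that they reduce to the same condition on assignments in $\Pg$. Write $w=w_m$, which the Notation following Def.~\ref{def:PrL} allows. Then $\model, w_m \vDash_{\vNseg} \bbox_q\fOne$ holds iff there is $\vNseg'\in\Pg$ with
$$
\frac{|\{w_l : l\le m \wedge \model, w_l\vDash_{\vNseg'}\fOne\}|}{m}\ge q.
$$
Taking that same $\vNseg'$ as the assignment parameter in the $\wbox$-clause, the inequality is exactly the truth condition of $\model, w_m \vDash_{\vNseg'} \wbox_q\fOne$. So I would prove the biconditional by one chain of equivalences: the same witness $\vNseg'$ serves in both directions.

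The one point that needs an explicit remark is that the inner satisfaction $\model, w_l \vDash_{\vNseg'}\fOne$ is the same relation on both sides. It is defined by the same inductive clauses, the same model $\model$ (so the same $W$, $\Rexp$, $\Pg$) and the same assignment $\vNseg'$. I would also note that the outer assignment $\vNseg$ plays no role on the left-hand side: the $\bbox$-clause never mentions $\vNseg$, so the statement holds uniformly for every $\vNseg\in\Pg\cup\{\vseg\}$. The $\wbox$-clause is evaluated at $\vNseg'\in\Pg\subseteq\Pg\cup\{\vseg\}$, which is an admissible assignment for Def.~\ref{def:semantics}.

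I do not expect a real obstacle; the proof is a direct unfolding of definitions. The only care needed is to avoid any induction on $\fOne$: the argument formula may contain nested modalities, including $\bbox$, $\bstar$ and $\wnext$, and it is treated as a black box, since only the set of worlds $w_l$ satisfying it under $\vNseg'$ matters. If one prefers the rephrased $\wbox$-clause using $\{w : w_m\Rexp w\}$, I would first note that under the labelling convention this set is $\{w_1,\dots,w_m\}$, of cardinality $m$, so the two formulations coincide.
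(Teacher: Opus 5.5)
Your proposal is correct and follows the paper's approach. The paper gives no separate proof: it treats the result, as in its earlier remark that $\bbox$ is definable via $\wbox$ over all assignments in $\Pg$, as an immediate unfolding of the clauses of Def.~\ref{def:semantics}, with the same witness $\vNseg'$ on both sides, and your added points (that $\vNseg'\in\Pg$ is an admissible evaluation assignment, and that no induction on $\fOne$ is needed) are accurate.
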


\noindent
The relationship between $\bstar$ and $\bbox$ with atomic argument formulas is clarified by the following theorem:

\begin{theorem}
\label{prop:StarBBoxVerification}
    For any model $\model= (W,\Rexp,\Pg,\vseg)$ and corresponding structure $\str$, with $| W | = n$, assignment $\vNseg \in \Pg \cup \{\vseg \}$, rational 
    $q\in \mathbb{Q}_{[0,1]}$, 
    words $w\in W$ and $w_n\in W$ (i.e.,~the last world of $W$ according to $\Rexp$), and atomic formula $\at \in \At_\Omega:$
    $$
        \str\vDash \bstar_{q} \at \quad \Longrightarrow \quad \model, w \vDash_{\vNseg} \bbox_{q} \at
    $$
\end{theorem}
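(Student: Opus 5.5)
We reduce the hypothesis to a statement about $\fr(\at)$ and then build a suitable assignment by hand. Since $\bstar_q \at$ holds in $\str$ at every world and for every assignment, Th.~\ref{theorem:AlternativeStar} gives $\fr(\at)\ge q$ (atoms are propositional formulas, so the theorem applies). Write $r=\fr(\at)\times n$. This is the number of worlds that every assignment in $\Pg$ assigns to $\at$.

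Fix an arbitrary world $w=w_m$. We must exhibit $\vNseg'\in\Pg$ such that $|\{w_l : l\le m \wedge w_l\in\vNseg'(\at)\}|/m \ge q$. Our choice is the ``front-loaded'' assignment $\vNseg'$. It puts $\at$ on $w_1,\dots,w_r$ and distributes the remaining atoms of $\At_\Omega$ over $w_{r+1},\dots,w_n$ according to their prescribed frequencies $\fr(\at')\times n$, in any order. This is possible because these frequencies sum to $n-r$. The resulting map satisfies Def.~\ref{def:Assignments}, since each world gets exactly one atom, and it lies in $\Pg$ because every atom receives exactly $\fr(\cdot)\times n$ worlds. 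This is essentially the assignment already used informally in the proof of Th.~\ref{conjecture:bboxDecrease}.

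Next we check the fraction at $w_m$, splitting into two cases.
\begin{itemize}
\item If $m\le r$, all of $w_1,\dots,w_m$ satisfy $\at$, so the fraction is $1\ge q$.
\item If $m>r$, exactly $r$ of the first $m$ worlds satisfy $\at$, so the fraction is $r/m$. Since $m\le n$, we get $r/m \ge r/n=\fr(\at)\ge q$.
\end{itemize}
In both cases $\model,w_m\vDash_{\vNseg}\bbox_q\at$. This holds for every $\vNseg\in\Pg\cup\{\vseg\}$, because the clause for $\bbox$ does not depend on the evaluating assignment.

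The main obstacle is the existence of $\vNseg'$ in $\Pg$. We need $\fr(\at')\times n$ to be an integer for every $\at'$, which the well-definedness of $\Pg$ (nonempty, with frequencies realizable over $n$ worlds) should guarantee. We also need to handle the degenerate case $\Pg=\emptyset$, where $\bstar_q\at$ is either vacuous or ill-defined because of division by $|\Pg|$. We will assume $\Pg\neq\emptyset$ as part of well-definedness. The mention of $w_n$ in the statement suggests the intended sharp case: at $w_n$ every assignment in $\Pg$ yields exactly $\fr(\at)$. That case is covered by the computation above with $m=n$.
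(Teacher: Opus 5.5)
Your proof is correct and follows essentially the paper's route. The paper also first reduces $\str\vDash\bstar_q\at$ to $\fr(\at)\ge q$ via Th.~\ref{theorem:AlternativeStar}, but it then checks $\bbox_q\at$ only at the last world $w_n$, where every assignment in $\Pg$ gives exactly $\fr(\at)$, and transfers it to all earlier worlds by the downward monotonicity of Th.~\ref{conjecture:bboxDecrease}. Your direct case analysis on the front-loaded witness simply inlines the argument behind that monotonicity result, and it has the small bonus of making explicit why such a witness lies in $\Pg$ (under the assumption $\Pg\neq\emptyset$, which the paper also needs implicitly).
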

\begin{proof}
    Due to Th.\ref{theorem:AlternativeStar}, if $\str\vDash \bstar_{q} \at$, then $\fr(\at) \ge q$.
    Hence, by Def.~\ref{def:semantics}, $\model, w_n \vDash_{\vNseg} \bbox_{q} \at$.
    Moreover, by Th.\ref{conjecture:bboxDecrease}, $\model, w \vDash_{\vNseg} \bbox_{q} \at$ for every $w\in W$.
\end{proof}

\noindent
Observe that the inverse 
does not hold.
Indeed, if $w$ is not the last world in the series, it is possible that 
$\model , w\vDash_{\vNseg} \blacksquare_{q} \at$, but $\str \not\vDash \star_{q} \at$.

\begin{ex}
    Let W=$\{w_1, w_2\}$ and $\Pg$ be s.t.~$\fr(p_{\head})=\frac{1}{2}$. Then, for every $\model$ constructed using $W$ and $\Pg$, and for every $\vNseg \in \Pg \cup \{\vseg \}$, it holds that $\model, w_1\vDash _{\vNseg} \blacksquare_1 p_{\head}$, but clearly $\str \not\vDash \star_1 p_{\head}$.
\end{ex}

\noindent
Notice also that Th.\ref{prop:StarBBoxVerification} cannot be generalized to non-atomic formulas, as shown by the following example.

\begin{ex}
Consider a model like the one in Ex.~\ref{ex:StarFreq}.
We have seen that for every assignment $\vNseg \in \Pg \cup \{\vseg \}$, it holds $\model, w_3 \vDash_{\vNseg} \bstar _{1} \bbox_{=\frac{1}{3}}\at_{\tail}$.
However, there is no assignment $\vNseg' \in \Pg$ s.t.~all the worlds accessible from $w_3$ satisfy $\bbox_{=\frac{1}{3}}\at_{\tail}$.
On the contrary, for every assignment $\vNseg' \in \Pg$ there is only one such world, that is $w_3$ itself.
Hence, $\model, w_3 \nvDash_{\vNseg} \bbox _{1} \bbox_{=\frac{1}{3}}\at_{\tail}$ and more precisely $\model, w_3 \vDash_{\vNseg} \bbox _{\overline{\frac{1}{3}}} \bbox_{=\frac{1}{3}}\at_{\tail}$.
Notice that this example invalidates 
Th.~\ref{prop:StarBBoxVerification} for a non-atomic formula.
\end{ex}

\subsection{Independence from world and assignment}

The evaluation of a star-formula $\bstar_q \fOne$ is independent of both the  ``current world'' and the signed assignment, so that, w.r.t.~the operator $\bstar$, the notions of truth in a world, validity in a model, and validity in a structure collapse. 

\bigskip

\begin{theorem}
\label{conj:bstarIndepWorld}
    For every model $\model=(W, \Rexp, \Pg,\vseg)$ (and corresponding structure $\str$), $w\in W$, assignment $\vNseg \in \Pg \cup \{\vseg \}$, 
    and $q\in \mathbb{Q}_{[0,1]}:$
    $$
    \model, w \vDash_{\vNseg} \bstar_{q} \fOne \quad \Longleftrightarrow \quad \model \vDash_{\vNseg} \bstar_{q} \fOne \quad \Longleftrightarrow 
    \quad \str \vDash \bstar_{q} \fOne.
    $$
\end{theorem}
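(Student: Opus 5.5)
The plan is to read the clause for $\bstar$ in Def.~\ref{def:semantics} and observe that its right-hand side mentions neither the evaluation world $w_m$ nor the evaluating assignment $\vNseg$. The condition reads: there is a $w'\in W$ with $|\{\vNseg'\in\Pg : \model, w'\vDash_{\vNseg'}\fOne\}|/|\Pg|\ge q$. The existential ranges over all of $W$, and the counted set ranges over all of $\Pg$. Both depend only on the structure $\str=(W,\Rexp,\Pg)$ and on the relation $\model, w'\vDash_{\vNseg'}\fOne$ for $\vNseg'\in\Pg$.

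The one point needing care is that the inner satisfaction relation $\model, w'\vDash_{\vNseg'}\fOne$ is formally relative to $\model$, which contains $\vseg$. I would therefore first prove an auxiliary claim by induction on $\fOne$: for every $\vNseg'\in\Pg\cup\{\vseg\}$ and every $w'$, whether $\model, w'\vDash_{\vNseg'}\fOne$ holds does not depend on which real observation $\vseg$ is added to $\str$. This is where I expect the only real work, though it is routine.

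The induction would go case by case. Atoms are evaluated via $\vNseg'$ alone, and the Boolean cases are immediate. For $\wbox$, $\sat$ and $\wnext$, the clauses quantify only over $\vNseg'$ and over members of $\Pg$, so the claim follows from the induction hypothesis. The same holds for $\bbox$ and $\bstar$, which quantify over $\Pg$ only; nowhere does $\vseg$ appear explicitly except as a possible value of the evaluating assignment.

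Given the claim, the argument closes as follows. Let $C$ denote the condition ``there is $w'\in W$ with the ratio $\ge q$''; $C$ is a property of $\str$ alone. Then $\model, w\vDash_{\vNseg}\bstar_q\fOne$ iff $C$, for every $w$, $\vNseg$ and $\vseg$. This gives the first equivalence, since by Def.~\ref{df:validity} validity in $\model$ means truth at all $w$ and $W\neq\emptyset$. Varying $\vseg$ then gives the second equivalence with $\str\vDash\bstar_q\fOne$. The reading of $\model\vDash_{\vNseg}$ with a general $\vNseg$ in place of $\vseg$ is handled identically, because truth at every world is again just $C$.
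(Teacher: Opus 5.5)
Your proof is correct and follows the same approach as the paper: the $\bstar$ clause quantifies existentially over all of $W$ and counts over all of $\Pg$, so neither the evaluation world nor the evaluating assignment affects its truth. Your auxiliary induction, showing that $\model, w'\vDash_{\vNseg'}\fOne$ for $\vNseg'\in\Pg$ does not depend on which $\vseg$ is added to $\str$, makes explicit a step the paper leaves tacit when it passes from validity in $\model$ to validity in $\str$.
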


\begin{proof}
    Since the truth-condition for $\bstar_{q} \fOne$ requires the existence of a world $w\in W$ satisfying a given property (Def.~\ref{def:semantics}), the specific world at which the formula is evaluated is irrelevant.
    Hence, if $\bstar_{q} \fOne$ is true in a world, it is true everywhere: $\model, w \vDash_{\vNseg} \bstar_{q} \fOne $ iff $ \model \vDash_{\vNseg} \bstar_{q} \fOne$.
    Moreover, since the truth condition for $\bstar_{q} \fOne$ is defined in terms of the number of assignments $\vNseg' \in \Pg$ satisfying a specific property (Def.~\ref{df:validity}), the assignment $\vNseg$ at which we evaluate the formula $\bstar_{q} \fOne$ is also irrelevant.
    Hence, if $\bstar_{q} \fOne$ is true at a world according to an assignment, it is true everywhere and according to any assignment, so that $\model, w \vDash_{\vNseg} \bstar_{q} \fOne $ iff $ \model \vDash_{\vNseg} \bstar_{q} \fOne$ 
    iff $\str \vDash \bstar_{q} \fOne$.
\end{proof}


\noindent
Additionally, the operator $\bbox$ is $\vseg$-independent:

\begin{prop}
\label{conj:bboxIndepAssignment}
    For every model $\model=(W, \Rexp, \Pg,\vseg)$, $w\in W$, pair of assignments $\vNseg ,\vNseg'\in \Pg \cup \{\vseg \}$, 
    and $q\in \mathbb{Q}_{[0,1]}:$
    $$
    \model, w \vDash_{\vNseg} \bbox_{q} \fOne \quad \Longleftrightarrow \quad \model ,w \vDash_{\vNseg'} \bbox_{q} \fOne 
    $$
\end{prop}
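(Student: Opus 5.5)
The proof will be a direct unfolding of the truth clause for $\bbox_q$ in Def.~\ref{def:semantics}, showing that the assignment index of the satisfaction relation never enters the condition. Fix $\model=(W,\Rexp,\Pg,\vseg)$, a world $w_m\in W$, $q\in\mathbb{Q}_{[0,1]}$, and two assignments $\vNseg,\vNseg'\in\Pg\cup\{\vseg\}$. By the clause for $\bbox$,
\[
\model, w_m \vDash_{\vNseg} \bbox_{q} \fOne \quad\text{iff}\quad \text{there is } \vNseg''\in\Pg \text{ with } \frac{|\{w_l : l \leq m \wedge \model, w_l \vDash_{\vNseg''} \fOne\}|}{m} \ge q .
\]

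The crucial observation is that the right-hand side depends only on $W$, $\Rexp$ (through the indexing $w_1,\dots,w_m$), $\Pg$, $m$, $q$ and $\fOne$. The evaluating assignment $\vNseg$ does not occur in it, because the existential quantifier ranges over $\Pg$, which is fixed by the structure. The inner satisfaction $\model, w_l\vDash_{\vNseg''}\fOne$ is computed relative to the bound $\vNseg''$, not $\vNseg$. The same expression therefore describes $\model, w_m\vDash_{\vNseg'}\bbox_q\fOne$, and the two sides of the biconditional have literally identical truth conditions. Both directions follow at once; no induction on $\fOne$ is needed.

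The only point requiring care is whether the inner evaluation could smuggle in the outer assignment when $\fOne$ contains nested operators such as $\wnext$, which compares against the assignment it is evaluated under. It cannot. Each inner occurrence is evaluated at $\vNseg''$ (and whatever that in turn binds), so the outer $\vNseg$ is simply discarded at the $\bbox$ step. I will state this explicitly, remarking that the semantics is compositional and that the $\bbox$ clause, like the $\bstar$ clause used in Th.~\ref{conj:bstarIndepWorld}, quantifies over $\Pg$ afresh.

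Equivalently, the statement follows from Th.~\ref{theorem:whiteBlackBoxDef}: both sides are equivalent to "there is $\vNseg''\in\Pg$ with $\model,w\vDash_{\vNseg''}\wbox_q\fOne$", a condition that is free of the evaluating assignment.
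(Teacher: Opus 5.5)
Your proof is correct. The paper states this proposition without proof, and your argument is exactly the reasoning it gives for the analogous independence result for $\bstar$ (Th.~\ref{conj:bstarIndepWorld}): the $\bbox_q$ clause quantifies existentially over $\Pg$ and evaluates $\fOne$ only under the bound assignment, so the evaluating assignment never occurs in the truth condition, even when $\fOne$ contains nested operators.
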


\subsection{Nesting}

Finally, we investigate the properties of nesting for our non-standard operators.
First, regarding star-operator $\star$, we have that iterations (with $q\neq 0$) always collapse to the most internal one.

\bigskip

\begin{theorem}
\label{conj:DoubleStarIE}
    For any model $\model$, for every assignment $\vNseg \in \Pg \cup \{\vseg \}$, 
    world $w_m \in W$, 
    $q,q'\in \mathbb{Q}_{[0,1]}$ and s.t.~$pr\neq0$, $\fOne$,
    
    $$
    \model, w_m \vDash_\vNseg \star_{q} \star_{q'} \fOne \quad \quad 
    \Longleftrightarrow \quad \quad
    \model, w_m \vDash_\vNseg 
    \star_{q'} \fOne.
    $$
\end{theorem}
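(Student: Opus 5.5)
The plan is to reduce everything to Th.~\ref{conj:bstarIndepWorld}, which says that the truth value of a star formula $\bstar_{q'}\fOne$ does not depend on the world or the assignment at which it is evaluated. I read the side condition ``$pr\neq 0$'' in the statement as $q\neq 0$, i.e.\ a condition on the \emph{outer} index.

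First I unfold the outer operator with Def.~\ref{def:semantics}. We have $\model, w_m \vDash_{\vNseg} \bstar_{q}\bstar_{q'}\fOne$ iff there is a $w'\in W$ such that
$$\frac{|\{\vNseg'\in\Pg : \model, w'\vDash_{\vNseg'}\bstar_{q'}\fOne\}|}{|\Pg|}\ge q.$$
By Th.~\ref{conj:bstarIndepWorld}, the condition $\model, w'\vDash_{\vNseg'}\bstar_{q'}\fOne$ is equivalent to $\str\vDash\bstar_{q'}\fOne$. That condition mentions neither $w'$ nor $\vNseg'$. So the set in the numerator is either all of $\Pg$ or empty, for every choice of $w'$. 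Hence the fraction is the same for all $w'$, and it equals $1$ if $\str\vDash\bstar_{q'}\fOne$ and $0$ otherwise.

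I then split into two cases.
\begin{itemize}
\item[(a)] Suppose $\model, w_m\vDash_{\vNseg}\bstar_{q'}\fOne$. By Th.~\ref{conj:bstarIndepWorld} this gives $\str\vDash\bstar_{q'}\fOne$. The fraction is then $1\ge q$ for any $w'$, e.g.\ $w'=w_m$, so the double star formula holds.
\item[(b)] Suppose $\model, w_m\nvDash_{\vNseg}\bstar_{q'}\fOne$. Again by Th.~\ref{conj:bstarIndepWorld}, no world and no assignment in $\Pg$ satisfies $\bstar_{q'}\fOne$. The fraction is then $0$ for every $w'$. Since $q>0$, the inequality $0\ge q$ fails, so $\bstar_q\bstar_{q'}\fOne$ is false at $w_m$ under $\vNseg$.
\end{itemize}
Together, (a) and (b) give both directions of the biconditional.

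The only delicate points are the edge conditions, and I expect them to be the real obstacle. First, the fraction must be well defined, so $\Pg\neq\emptyset$. I would justify this from the requirement in Def.~\ref{def:PrL} that $\Pg$ comes from a well-defined frequency distribution on $|W|$ worlds, so at least one compatible assignment exists. Second, the hypothesis $q\neq0$ is exactly what case (b) needs. For $q=0$ the left-hand side is trivially true while $\bstar_{q'}\fOne$ may be false, so I would remark that the restriction is necessary. No restriction on the inner index $q'$ is needed.
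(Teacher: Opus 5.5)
Your proof is correct and follows essentially the paper's route: both rest on the fact that the truth of $\bstar_{q'}\fOne$ depends neither on the world nor on the assignment (Th.~\ref{conj:bstarIndepWorld}), which the paper invokes in its proof and the discussion after it to collapse the outer star. Your observation that the outer fraction is therefore always $0$ or $1$ makes it more explicit than the paper does where the hypothesis $q\neq 0$ is used; your reading of ``$pr\neq 0$'' as $q\neq 0$ is the intended one.
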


\begin{proof}
    Recall that $\model, w_m \vDash_\vNseg \star_{q'} \fOne$ means that there is at least a world $w_i \in W$ s.t.~$i\leq m$ and at least 
    $q' \times |\Pg|$ assignments in $\Pg$ satisfy $\fOne$ at $w_i$ (let us call these assignments, $\vNseg_1,\ldots , \vNseg_q$, with $q \geq q'  \times |\Pg|$).
    Similarly,     $\model, w \vDash_\vNseg \star_{q} \star_{q'} \fOne$ means that there is at least a world $w_j$ in $W$ s.t.~$j\leq m$ and at least 
    $q \times |\Pg|$ assignments in $\Pg$ satisfy $\star_{q'} \fOne$ at $w_j$ (let us call these assignments, $\vNseg'_{1},\ldots , \vNseg'_{p}$,  with $p\geq q \times |\Pg|$).
    Combining these definitions, $\model, w \vDash_\vNseg \star_{q} \star_{q'} \fOne$ 
    means that there is at least a world $w_j$ in $W$ s.t. $j\leq m$ and at least 
    $q \times |\Pg|$ assignments ($\vNseg'_{1},\ldots , \vNseg'_{p}$ with $p\geq q \times |\Pg|$) in $\Pg$ 
    are such that there is at least a world $w_i$ in $W$ s.t. $i\leq m$ and at least 
    $q' \times |\Pg|$ assignments ($\vNseg_1,\ldots , \vNseg_q$) in $\Pg$ satisfy $\fOne$ at $w_i$. 
    The result follows from the trivial observation that neither the world $w_j$ nor the assignments $\vNseg'_1,\ldots , \vNseg'_p$ play a role in the satisfaction of $\fOne$ in $w_i$ by $\vNseg_{1},\ldots , \vNseg_{q}$.
\end{proof}

\noindent
To illustrate this result, consider specific values for $q$.
In particular, consider $q = \frac{1}{|\Pg|}$ from left to right (since if $\star_{q}\fOne$ with $q \neq 0$ is satisfied in $w_m$, also $\star_{\frac{1}{|\Pg|}}\fOne$ is satisfied in it) and $q=1$ from right to left (since if $\star_{1}\fOne$ is satisfied in $w_m$, also $\star_{q}\fOne$ with every $q \in \mathbb{Q}_{[0,1]}$ is satisfied in it).
Now, from left to right, if $\model, w_m \vDash_\vNseg \star_{\frac{1}{|\Pg|}} \star_{q'} \fOne$, then there is at least a world in $W$ s.t.~$\star_{q'} \fOne$ is satisfied there by an assignment in $\Pg$.
But then, by Th.~\ref{conj:bstarIndepWorld}, $\star_{q'} \fOne$ is satisfied at $w_m$ and according to $\vNseg$ as well.
From right to left, if $\model, w_m \vDash_\vNseg \star_{q'} \fOne$, then by Th.~\ref{conj:bstarIndepWorld}, $\star_{q'} \fOne$ is satisfied at any world in $W$ and for every assignment (\textit{a fortiori}, for every assignment in $\Pg$).
Hence, $\star _1\star_{q'} \fOne$ is satisfied everywhere and, in particular, in $w_m$ and according to $\vNseg$.




Notice that for any structure, assignment, and $w_i\in W$ (and $|W|=n$), with $i<m$, $\model, w_i \not \vDash_a \sat_q \at$, where $q= \frac{m}{n}$.
Moreover, the following property of (pure)  ``nesting elimination'' of $\sat$ operators holds.

\begin{theorem}
\label{conj:DoubleCircleElimination}
    For any model $\model$, assignment $\vNseg \in \Pg \cup \{\vseg \}$, 
    world $w_m \in W$, rationals 
    $q,q'\in \mathbb{Q}_{[0,1]}$ and s.t.~$q\neq0$,
    $$
    \model, w_m \vDash_\vNseg \sat_{q} \sat_{q'} \fOne \quad \Longrightarrow \quad \model, w_m \vDash_\vNseg \sat_{q'} \fOne.
    $$
\end{theorem}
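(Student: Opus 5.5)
The plan is to unfold the semantics of $\sat$ twice and compare two counts. Write $n=|W|$. By Def.~\ref{def:semantics}, $\model, w_m \vDash_\vNseg \sat_{q}\sat_{q'}\fOne$ means that the set $S=\{w_l : l\le m \wedge \model, w_l\vDash_\vNseg \sat_{q'}\fOne\}$ satisfies $|S|\ge q\times n$. Since $q\neq 0$, this forces $|S|\ge q\times n>0$, so $S$ is nonempty. This is the only place where the hypothesis $q\neq0$ is used. Without it, $\sat_0\sat_{q'}\fOne$ holds trivially everywhere and the implication can fail, for instance when $q'=1$ at an early world.

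Having $S\neq\emptyset$, I pick some $w_l\in S$ with $l\le m$. Then $\model, w_l\vDash_\vNseg \sat_{q'}\fOne$. Now I invoke the monotonicity of $\sat$ along $\Rexp$, namely Th.~\ref{conj:MonotonicityCircle}: since $l\le m$, truth of $\sat_{q'}\fOne$ at $w_l$ transfers to $w_m$, giving $\model, w_m\vDash_\vNseg \sat_{q'}\fOne$ as required. Indeed, the count $|\{w_k : k\le l \wedge \model,w_k\vDash_\vNseg\fOne\}|$ can only grow as $l$ increases to $m$, while the denominator stays fixed at $n$.

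There is one point of care, which is also the main obstacle (a mild one). Th.~\ref{conj:MonotonicityCircle} must be applied for the same assignment $\vNseg$. This is legitimate because $\sat$, unlike $\bbox$ or $\bstar$, never changes the assignment, so both the outer and inner evaluation use $\vNseg$. One could even strengthen the argument. In fact $\sat_{q'}\fOne$ holding at $w_l$ already makes it hold at every later world up to $w_m$, so $S$ is an up-closed final segment of $\{w_1,\dots,w_m\}$. However, the proof only needs nonemptiness, so I will not pursue that refinement. I would also note explicitly that the converse fails, since $\sat_{q'}\fOne$ at $w_m$ only guarantees $|S|\ge1$, that is $\sat_{1/n}\sat_{q'}\fOne$, which explains why the statement is a one-way implication.
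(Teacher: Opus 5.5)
Your proof is correct and matches the paper's argument: since $q\neq 0$, some world $w_l$ with $l\le m$ satisfies $\sat_{q'}\fOne$. The monotonicity of $\sat$ (Th.~\ref{conj:MonotonicityCircle}) then transfers this to $w_m$ under the same assignment.
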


\begin{proof}
    If $\model, w \vDash_\vNseg \sat_{q} \sat_{q'} \fOne$ and $q \neq 0$, then there is at least a world $w_i$ s.t.~$i\leq m$ and $\model, w_i \vDash_\vNseg \sat_{q'} \fOne$.
    Hence, the result follows by Th.~\ref{conj:MonotonicityCircle}.
\end{proof}

\noindent
The following example shows that (somewhat unexpectedly) the property does not hold in the other direction.

\begin{ex}
    Let a model $\model$ be s.t.~$W=\{w_1,w_2,w_3,w_4\}$ and the atomic sentences used to define $\Pg$ are $\At_\Omega = \{\at_{\head},\at_{\tail}\}$.
    Moreover, consider an assignment $\vNseg \in \Pg \cup\{\vseg\}$ s.t.~$\vNseg (\at_{\head}) = \{w_1,w_2\}$ and $\vNseg (\at_{\tail}) = \{w_3,w_4\}$.
    Then, the following relations hold: $\model, w_1 \vDash_\vNseg \sat_{=0} \at_{\tail}$, $\model, w_2 \vDash_\vNseg \sat_{=0} \at_{\tail}$, $\model, w_3 \vDash_\vNseg \sat_{=\frac{1}{4}} \at_{\tail}$ and $\model, w_4 \vDash_\vNseg \sat_{=\frac{1}{2}} \at_{\tail}$.
    Hence, for example, $\model, w_4 \vDash_\vNseg \sat_{=\frac{1}{2}} \at_{\tail}$ but $\model, w_4 \nvDash_\vNseg \sat_{q} \sat_{=\frac{1}{2}} \at_{\tail}$ for any $q \ge \frac{1}{4}$.
\end{ex}

\noindent
In general, this shows that $\sat_{q} \sat_{q'} \fOne$ entails $\sat_{q'} \fOne$ for $q \le \frac{1}{n}$ only.
Moreover, this is just a special case of a more general (although obvious) property:

\begin{theorem}
\label{conj:CircleI}
    For any model $\model$, assignment $\vNseg \in \Pg \cup \{\vseg \}$, and world $w_m \in W$:
    $$
    \model, w_m \vDash_\vNseg \fOne \quad \Longrightarrow \quad \model, w_m \vDash_\vNseg \sat_{\frac{1}{n}} \fOne.
    $$
\end{theorem}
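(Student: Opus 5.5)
The claim follows directly from the truth clause for $\sat$ in Def.~\ref{def:semantics}, so I would argue by unfolding that clause rather than by induction on $\fOne$.

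Fix $\model$ with $|W|=n$, an assignment $\vNseg \in \Pg \cup \{\vseg\}$, and a world $w_m$, and assume $\model, w_m \vDash_\vNseg \fOne$. By the clause for $\sat$, the goal $\model, w_m \vDash_\vNseg \sat_{\frac{1}{n}} \fOne$ amounts to
$$
\frac{|\{w_l : l \leq m \wedge \model, w_l \vDash_{\vNseg} \fOne\}|}{n} \ge \frac{1}{n},
$$
that is, the set $S=\{w_l : l\leq m \wedge \model, w_l \vDash_\vNseg \fOne\}$ must be nonempty.

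The key step uses only two facts. First, $R_E$ is reflexive, so $w_m$ is among the worlds $w_l$ with $l \leq m$; under the paper's labelling convention this is just $m\leq m$. Second, $w_m$ satisfies $\fOne$ by hypothesis, under the same assignment $\vNseg$. Together these give $w_m\in S$, hence $|S|\geq 1$. Dividing by $n>0$ yields the required inequality, since $W$ is a nonempty finite set.

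I expect no real obstacle here. The only point worth stating explicitly is that the satisfaction of $\fOne$ inside the set $S$ is evaluated with respect to the same assignment $\vNseg$ as the hypothesis; the $\sat$ clause does not switch assignments, unlike $\bbox$. This holds uniformly for arbitrary, possibly modal, $\fOne$, so no induction on the structure of $\fOne$ is needed.

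Finally, I would remark that Th.~\ref{conj:DoubleCircleElimination} with $q=\frac{1}{n}$ is the special case of the present statement applied to $\sat_{q'}\fOne$. This explains why the converse fails for larger $q$, as the preceding example shows.
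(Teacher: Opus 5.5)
Your proof is correct and is the paper's argument made explicit: the paper only says the claim is obvious from the truth clause for $\sat$. You supply the one observation that is needed, namely that $w_m$ itself lies in the counted set under the same assignment $\vNseg$, so the count is at least $1$ and the ratio is at least $\frac{1}{n}$. One slip in your closing aside: instantiating the statement at $\sat_{q'}\fOne$ gives $\sat_{q'}\fOne \Rightarrow \sat_{\frac{1}{n}}\sat_{q'}\fOne$, which is the \emph{converse} of Th.~\ref{conj:DoubleCircleElimination} at $q=\frac{1}{n}$ rather than that theorem itself, and it is this converse direction that the preceding example refutes for $q>\frac{1}{n}$.
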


\begin{proof}
    Obvious by truth-conditions for $\sat$ (see Def.~\ref{def:semantics}).
\end{proof}

\end{document}